\newtheorem{Lemma}{Lemma}
\newtheorem{Th}{Theorem}
\newtheorem{remark}{Remark}
\numberwithin{Lemma}{section}
\numberwithin{Th}{section}
\numberwithin{Cor}{section}
\numberwithin{remark}{section}
\numberwithin{equation}{section}
\theoremstyle{definition}
\newtheorem{example}{Example}
\begin{document}

\title[Trimming to Coexistence]{Trimming to Coexistence: How Dispersal Strategies Should be Accounted for in Resource Management}

\author*[1]{\fnm{Elena} \sur{Braverman}}\email{maelena@ucalgary.ca, ORCID  0000-0002-6476-585X}

\author[1]{\fnm{Jenny} \sur{Lawson}}\email{jennifer.lawson@ucalgary.ca, ORCID  0000-0002-7881-3701}

\affil*[1]{\orgdiv{Department of Mathematics \& Statistics}, \orgname{University of Calgary}, \orgaddress{\street{2500 University Drive N.W.}, \city{Calgary}, \postcode{T2N1N4}, \state{Alberta}, \country{Canada}}}

\abstract{For two resource-sharing species we explore the interplay of harvesting and dispersal strategies, as well as their influence on competition outcomes. Although the extinction of either species can be achieved by excessive culling, choosing a harvesting strategy such that the biodiversity of the populations is preserved is much more complicated. We propose a type of heterogeneous harvesting policy, dependent on dispersal strategy, where the two managed populations become an ideal free pair, and show that this strategy guarantees the coexistence of the species. We also show that if the harvesting of one of the populations is perturbed in some way, then it is possible for the coexistence to be preserved. Further, we show that if the dispersal of two species formed an ideal free pair, then a slight change in the dispersal strategy for one of them does not affect their ability to coexist. Finally, in the model, directed movement is represented by the term $\Delta (u/P)$, where $P$ is the dispersal strategy and target distribution. We justify that once an invading species, which has an advantage in carrying capacity, chooses a dispersal strategy that mimics the resident species distribution, then successful invasion is guaranteed. However, numerical simulations show that invasion may be successful even without an advantage in carrying capacity. More work is needed to understand the conditions, in addition to targeted culling, under which the host species would be able to persist through an invasion.}


\keywords{directed diffusion, global attractivity, system of partial differential equations, coexistence, semi-trivial equilibrium solutions, ideal free pair}

\pacs[AMS Classification]{92D25, 35K57 (primary), 35K50, 37N25 (secondary)}

\maketitle

\section{Introduction}
\label{sec:introduction}

There are several ways in which a species' spatial distribution can be included in a mathematical description of their population dynamics. Some possibilities include describing movement through partial differential equations or specifying a patch/compartment structure using a system of either differential or difference equations. The difference equation approach is particularly useful when distinctive stages in reproduction and development can be specified, e.g. semelparous populations that only survive one reproduction event. Although the simplest models of population growth are typically based on either ordinary differential or difference equations, incorporating diffusion in these models
aims to reveal the evolutionary mechanisms responsible for population dispersal, as well as the role of species movement in competition for survival and successful invasion of new habitats.

Choosing an approach for modelling spatial distribution depends largely on the biological features and properties of the organisms. An excellent review of modelling anisotropic diffusion from probabilistic considerations, with possible position-jump and velocity-jump random walks can be found in \cite{Painter}, see also a more recent publication by \cite{Hillen_2022}.  For a discussion on how the Gaussian noise can describe gradual movements, while the L\'{e}vy flight model is more appropriate when dispersing individuals can jump long distances in short time intervals, see \cite{Lutscher} and its list of references. While in our research we only focus on the competition of mobile populations such as animals, the approach used in \cite{Painter} can describe cell movements as well. Incorporating search time in animal movement in non-homogeneous environment \citep{McKenze} leads to heterogeneous dispersal also reviewed in \cite{Painter}.

Taking into account the heterogeneity of the environment is important, as illustrated by the recent analysis of Giant Panda spread \citep{GiantPanda}. In this paper, the authors showed the importance of capturing the heterogeneity of the environment for accurate predictions. By setting the grain size and extent of the spatial area under consideration appropriately, the model is able to collect enough data on environmental heterogeneity to make an appropriate distinction between suitable and unsuitable habitats. Such an approach results in more accurate predictions.  Modern technology offers many options to assist in capturing species data at various spatial scales, many of which are non-invasive:  they do so without any disturbance to the species. Some of these methods include camera traps and other non-invasive genetic methods such as hair collection and analysis \citep{FisherCat}.

Previous mathematical study also supports taking heterogeneity of the environment into account. \cite{He} considered two diffusing populations in competition, that were identical except that one species had a constant carrying capacity $a$, while another had  a spatially heterogeneous carrying capacity $K(x)$ (here $\int_{\Omega} K(x) dx  = a$). The species with spatially heterogeneous carrying capacity outcompeted the species with constant carrying capacity. This provides mathematical support for taking spatial heterogeneity into account, since it can interact in interesting ways with diffusion.

A simple and standard way to include spatial movement in a system for species density $u_j$ at time $t$ and location $x$ is to add the term $d_j \Delta u_j$ to the right-hand side of the logistic differential equation. When the species have  a time-independent carrying capacity $K(x)$ and there is complete resource sharing, this model has the form
\begin{equation}
\label{intro_1}
\frac{\partial u_j (t,x)}{\partial t} = d_j \Delta u_j (t,x) + u_j (t,x)  \left( K(x) -    \sum_{j=1}^n u_j(t,x) \right),
\end{equation}
for $x\in\Omega$, $t>0$, $j = 1,...,n$.
However, if the only difference between the populations $u_j$, $j=1,...,n$  is the movement speed, represented by $d_j$, the species with the smallest $d_j$ wins the competition and brings all the other species to extinction \citep{Dock}. This conclusion is counterintuitive, and as a result, certain effort was taken to develop new models for dispersal in which the ability to move faster would not be detrimental for survival, including \cite{Dock}, see also recent developments of the problem by \cite{diff4}. One of the models that was developed combined random-type diffusion, which allowed for the possibility of movement from resource-rich areas to less fertile places, with an advection term aligned with the structure of the carrying capacity $K(x)$ in \eqref{intro_1}
\begin{equation}
\label{CC_equation}
\frac{\partial u_j(t,x)}{\partial t} = 
\nabla \cdot \left[ \mu_j \nabla  u_j(t,x)- \alpha u_j(t,x)\nabla K(x) \right] + u_j (t,x)  \left( K(x) -    \sum_{j=1}^n u_j(t,x) \right),
\end{equation}
for $x\in\Omega$, $ t>0$, $j = 1,...,n$, see \citep{C4,C1,C2,C3,C5}. The interplay of advection and diffusion coefficients determined a solution profile such that with low dispersal and high advection, a solution in a single-species model tends to $K(x)$. Note that, once the carrying capacity $K(x)$ becomes a solution of an equation of type \eqref{CC_equation}, it is {\em an ideal free distribution}, 
i.e. the population is spatially arranged in such a way  that any movement would decrease the total population fitness. 

A connected approach to directed diffusion, introduced by \cite{Brav}, assumed that it was the ratio $u/K$, not the population density $u$ itself, that was subject to dispersal. The advantage of this approach was that for any diffusion coefficient, the ideal free distribution becomes a solution. This approach can easily be generalized to the term $\displaystyle \Delta \left(  \frac{u}{P} \right)$, where for a system with multiple species each choosing different dispersal strategies $P_j(x)$, partially sharing resources, and co-operating in the choice of foraging areas, the model has the form
\begin{equation}
\label{intro_2}
\frac{\partial u_j (t,x)}{\partial t} = d_j \Delta  \left(  \frac{u_j (t,x) }{P_j} \right) + u_j (t,x)  \left( K(x) -    \sum_{j=1}^n u_j(t,x) \right)
\end{equation}
for $x\in\Omega$, $t>0$, $j = 1,...,n$. The advection-diffusion \eqref{CC_equation} and directed diffusion \eqref{intro_2} approaches can be unified, as was proposed by Cosner, if we combine both the dispersal and the advection terms into one spatially dependent parameter $a_j(x)$ \citep{Kamrujjaman_2016}
\begin{equation}
\label{intro_3}
\frac{\partial u_j (t,x)}{\partial t} = \nabla \cdot \left[ a_j(x) \nabla \left( \frac{u_j(t,x)}{P_j (x)} \right) \right]+ u_j (t,x)  \left( K(x) -    \sum_{j=1}^n u_j(t,x) \right).
\end{equation}
Two species competition models of the type \eqref{intro_3}, with different $P_j$ were explored by \cite{Kamrujjaman_2016}. In this paper, the authors explored two different scenarios where the only difference between the two populations was the dispersal strategy. In the first scenario, one of the species chose a dispersal strategy that was aligned with the time-independent carrying capacity $K(x)$, and as a result the other population which chose the non-optimal strategy was sent to extinction. This competitive exclusion was inevitable and did not depend on diffusion speed, unlike in \cite{Dock}. In the other scenario, the two species co-operated in their choice of grazing areas which resulted in the carrying capacity becoming a linear combination of their dispersal strategies. Here there was a coexistence equilibrium that was globally asymptotically stable \citep{Kamrujjaman_2016}, and the species formed {\em an ideal free pair} \citep{Averill} when a couple of species has an ideal free distribution.

Many interesting discoveries were made while investigating population dynamics equations with various types of dispersal. One example is the fact established by \cite{Lou_2006} that the total population size for random-type diffusion exceeds the total of the carrying capacity $K$ over the domain. While in \eqref{intro_1} and \cite{Lou_2006} the carrying capacity and the intrinsic growth rates are unified in $K$ to reduce the number of parameters, generally,
the logistic growth in a heterogeneous environment $\displaystyle r(x) u(x) \left( 1- \frac{u(x)}{K(x)} \right)$ incorporates two different functions
\citep{Kamrujjaman_2016,DeAngelis,Kor2,Kor4}. As illustrated by \cite{DeAngelis}, for such a single-species model the relation between the average population and the average carrying capacity is more complicated.

Evolutionarily stable strategies are commonly used in discussions of dispersal strategies and directed diffusion. There are two possible approaches to explaining what is meant by an evolutionarily stable strategy:
\begin{enumerate}
\item
When a home population adopts an evolutionarily stable strategy, no other population which chooses a different strategy can invade its habitat. If an invasive species chooses an evolutionary stable strategy which is different from the host species' dispersal strategy, invasion and establishment in the habitat are inevitable.
\item
If a species adopts an evolutionarily stable strategy, then small perturbations of strategy or environment cannot alter the species' survival unless an invading species chooses the optimal strategy. 
\end{enumerate}

Here we use both approaches to evaluate stability of a strategy and/or resource management and provide survival of one or both of competing species. Additionally, the ideal free distribution is naturally connected to an evolutionarily stable strategy \citep{C0}.

Let us note that above, as in the present paper, each population's dispersal strategy takes into account only two factors:
\begin{enumerate}
\item
the size of the population at each point of the domain;
\item
the environment fertility.
\end{enumerate}
In particular, the population size of the other resource-sharing species is not a factor that is taken into account. The independence of dispersal for two species was questioned by \cite{Shigesada}. Later, a relevant approach related to {\em starvation-driven diffusion} was introduced in \cite{Cho_SDD}. In this approach, the diffusion term 
for the species $u$ competing with $v$, has the form $\Delta [ \gamma(s(u,v,K)) u ]$. 
Here $s(\cdot,\cdot,K)$ is a satisfaction function determined, for example, by the ratio of the carrying capacity $K$ to the sum (or a linear combination in the case of partial resource sharing) of population densities $u$ and $v$ at the point $x$. $\gamma(\cdot)$ is usually a step function which activates movement for low values of satisfaction only. This idea was explored and extended to predator-prey systems, see \citep{diff3,diff1} and references therein. Some approaches emphasize the high mobility of predators considering prey-dependent dispersal \citep{diff2,prey_dep_dispersal}. For some other models describing knowledge-based animal movement, and an overview of some open problems, see \citep{diff5}.

Within the field of ecology, understanding the relationship between a population and its environment is crucial for being able to effectively manage the population. Part of population management is tracking and setting appropriate levels of harvesting corresponding to deliberate human-caused mortality. Research shows that harvesting effort and intensity are not constant across large geographical areas \citep{CabanellasReboredoetal2017} and can depend on other environmental, social, and habitat factors \citep{Karnsetal2016}.   Recreational fisheries, for example, are strongly location-dependent, with fishers favouring locations with appropriate weather conditions that are closer to shore \citep{CabanellasReboredoetal2017}. Further, in \citep{Karnsetal2016},  the authors explored the mortality and harvest density of white-tailed deer in Ohio in 2007-2012. The proportion of farmland, other environmental conditions, and  the number of deer hunting permits appeared to be the crucial factors in influencing harvest density. As a result, the authors proposed that, instead of managing based on administrative division (counties) that may have different characteristics, management should be broken up into 6 different more environmentally homogeneous zones  \citep{Karnsetal2016}.

Mathematical models that do not consider spatial heterogeneity run the risk of over-estimating harvesting limits such as the maximum sustainable yield~(MSY)  \citep{TakashinaMougi2015}, or not recognizing when a population is already over-exploited \citep{Sirenetal2004}. In \cite{Sirenetal2004}, the authors collected data on big game hunting conducted by an indigenous community in Ecuador. In this community, hunting effort was 37 times higher near the village than in the furthest region from the village. Using a spatial discretization of 7 different zones, the authors noted that several species showed signs of being over-harvested in their respective zones while the averaged situation that was previously used did not raise any signs of alarm. We use a continuous description of the environment, and not a patch structure like the ones found in  \citep{TakashinaMougi2015}, to alleviate the cost of not considering spatial heterogeneity.

Even the effect to which harvesting disturbs an environment, and its resident non-target species, depends largely on location. In \cite{Gutierrezetal2004} the authors examined the effects of harvesting stout razor clams on non-target species living on the ocean floor. They hypothesized that the extent to which a community of non-targeted species would be effected, is a function of the density of another species, burrowing crabs. These burrowing crabs often burrow in areas that have been disturbed by harvesting and can interfere with the original non-target resident community. However, what they found was that the density of burrowing crabs had no effect on the extent of the disruption that the existing community experienced. Instead, the response of the non-target resident community was mostly spatially dependent \citep{Gutierrezetal2004}, presumably stipulated by the heterogeneity of the environment.

Harvesting can also be used to control and eliminate invasive species. \cite{FresardRoparsCollet2014} used optimal control to investigate how a sustainable harvest could be achieved, when a species with market value is invaded by another species with no market value. Using harvesting of both species, the authors showed that a sustainable harvest was possible, but only if it was implemented early. That being said, when the invasive species is already established in a large area, its elimination via harvesting may be prohibitively expensive, or impossible. Incentivized harvest programs are sometimes implemented to harness the efforts of recreational and commercial fishers in assisting with removal efforts, although these programs can also have unintended effects \citep{PaskoGoldberg2014}. It is possible for incentivized harvest programs to be successful, but for that to happen several steps should be taken including having a defined management objective and plan, preventing re-introduction, and monitoring of unintended outcomes \citep{PaskoGoldberg2014}.

Other management strategies involve the targeted removal or culling of invasive species in a specific area as in \cite{BakerBode2013} where poison baits installed outside of the perimeter of the conservation area reduced the invasion levels. The optimal spatial distribution of poison baits around an area of high conservation was computed. This would allow for targeted baiting of invasive species around valuable landscapes such as national parks, threatened ecosystems, or other habitats of vulnerable species, to reduce the harmful incursion of these species. It is clear that harvesting has a natural spatial component, and that management via harvesting requires careful thought as to its spatial arrangement. More research is required to investigate how to optimally arrange harvesting such that sustainability of desired populations is maintained.

In this paper, we explore the possibility that both populations can be harvested, and also investigate both the impact of harvesting and the impact of choice of diffusion strategy on the competition outcomes. For a competing system with directed diffusion, we set up the model and get auxiliary results in Section~\ref{sec:auxil}. This will allow us to tackle the following major problems which are addressed in Section~\ref{sec:main}:
\begin{enumerate}
\item
Can we adapt harvesting strategies for one or both species not only to the carrying capacity, but also to the dispersal strategies of the competing species in such a way that both populations coexist? (Section \ref{subsec:trimming})
\item
If the dispersal strategy of an invading population tries to mimic the distribution of the resident species, does this promote successful invasion? (Section \ref{subsec:mimicking}) 
\item
If two species coexist in an ideal free pair, will small changes to one of their dispersal strategies still allow them to coexist or simply to survive? (Section \ref{subsec:deviating})
\item
If certain levels and spatial profiles of harvesting guarantee coexistence, will this be changed by small deviations from the harvesting policy? (Section \ref{subsec:harvesting})
\end{enumerate} 
The four subsections of   Section~\ref{sec:main} answer these questions. Section~\ref{sec:numerical} illustrates the results in Section~\ref{sec:main} with numerical simulations. In Section~\ref{sec:discussion} we discuss both the results of the current paper and relevant open problems.
Most of the proofs are postponed to Appendix~A, except the justification of Theorem~\ref{coexist_dev}, which is more involved and is presented in Appendix~B.

\section{Preliminaries and Auxiliary Results}
\label{sec:auxil}

Consider the model with a potential of harvesting in both equations \vspace{-6mm}
\begin{flushleft}
\begin{equation}
\label{system}
\begin{cases}
 \displaystyle \frac{\displaystyle \partial u}{\displaystyle \partial t}
=  
 \displaystyle  \nabla \cdot \bigg[ a_1(x)  \nabla \bigg( \frac{\displaystyle u(t,x)}{\displaystyle P(x)}\bigg) \bigg]
\displaystyle + u(t,x) \left(K_1(x)-  u(t,x) - v(t,x) \right) - E_1(x) u(t,x), \vspace{1mm}
\\
\displaystyle \frac{\displaystyle \partial v}{\displaystyle \partial t}
= 
\displaystyle  \nabla \cdot \left[ a_2(x)  \nabla \left( \frac{\displaystyle v(t,x)}{\displaystyle Q(x)}\right) \right]
 \displaystyle + v(t,x) \left(K_2(x) - u(t,x) - v(t,x) \right) - E_2(x) v(t,x),\\
t>0,\;x\in{\Omega},
\\
\displaystyle  \frac{\displaystyle \partial }{\displaystyle \partial n} \bigg( \frac{u}{P}\bigg) = \displaystyle  \frac{\displaystyle \partial }{\displaystyle \partial n} \bigg( \frac{v}{Q}\bigg) = 0, ~x\in\partial\Omega. 
\end{cases}
\end{equation}
\end{flushleft}
\noindent Denoting in \eqref{system}
\begin{equation}
\label{no_harvest}
m_i(x) := K_i(x) - E_i(x), \quad i=1,2,
\end{equation}
we get the system
\begin{align}
\label{ssystem}
\begin{cases}
\displaystyle \frac{\displaystyle \partial u}{\displaystyle \partial t}
=   \displaystyle  \nabla \cdot \left[ a_1(x)  \nabla \left( \frac{\displaystyle u(t,x)}{\displaystyle P(x)}\right) \right]
 \displaystyle + u(t,x) \left(m_1(x)-  u(t,x) - v(t,x) \right), \vspace{1mm}
\\
\displaystyle \frac{\displaystyle \partial v}{\displaystyle \partial t}
=   \nabla \cdot \left[ a_2(x)  \nabla \left( \frac{\displaystyle v(t,x)}{\displaystyle Q(x)}\right) \right]  \displaystyle + v(t,x) \left(m_2(x) - u(t,x) - v(t,x) \right),
\\
t>0,\;x\in{\Omega},
\\
\displaystyle  \frac{\displaystyle \partial }{\displaystyle \partial n} \bigg( \frac{u}{P}\bigg) = \displaystyle  \frac{\displaystyle \partial }{\displaystyle \partial n} \bigg( \frac{v}{Q}\bigg) = 0, ~x\in\partial\Omega,
\end{cases}
\end{align}  
where  $m_i(x)$ the ``after-harvesting carrying capacities" are defined in \eqref{no_harvest}.
We assume that the domain $\Omega$ is an open bounded region in $\mathbb{R}^n$ with ${\partial}{\Omega}\in C^{2+\alpha}$, $\alpha>0$. The functions $m_i(x), a_i(x)$, $i=1,2$, $P(x),Q(x)$ are continuous and positive on $\overline{\Omega}$. $P,Q \in C^{2 + \alpha} (\overline{\Omega})$, $a_i \in C^{1 + \alpha}(\overline{\Omega})$, and $m_i \in C^{\alpha} (\overline{\Omega})$.

Let the functions $u^{\ast}$ and $v^{\ast}$ be solutions of the single-species stationary models
corresponding  to the first and the second equations in  \eqref{ssystem}, we get
\begin{equation}\label{eq_u}
\begin{cases}
\nabla \cdot \left[ a_1(x)  \nabla \left( \frac{\displaystyle u^{\ast}(x)}{\displaystyle P(x)}\right) \right]
+  u^{\ast}(x) \left(  m_1(x) -   \displaystyle u^{\ast}(x)  \right) = 0,\; x\in\Omega,\;\\
\displaystyle \frac{\displaystyle  \partial}{\displaystyle \partial n} \left(\frac{u^{\ast}}{P} \right)  =0,\; x\in\partial\Omega 
\end{cases}
\end{equation}
and
\begin{equation}\label{eq_v}
\begin{cases}
\nabla \cdot \left[ a_2(x) \nabla \left( \frac{\displaystyle v^{\ast}(x)}{\displaystyle Q(x)}\right) \right]
+   v^{\ast}(x)\left(m_2(x)  -  \displaystyle v^{\ast}(x)  \right) = 0,\; x\in\Omega,\;\\
\displaystyle \frac{\displaystyle  \partial}{\displaystyle \partial n}   \left(  \frac{v^{\ast}}{Q} \right)=0,\; x\in\partial\Omega ,
\end{cases}
\end{equation}
respectively.

Note that some of the inequalities we derive in this section have already been obtained in this or some equivalent form previously, but we present the statements with justification for completeness of the presentation.
In future, we will need the following two statements  that in some sense complement each other, and their modifications. 

\begin{Lemma}\label{Lsteady2}
Let $u^{\ast}$ be a positive solution of \eqref{eq_u}, then
\begin{equation}\label{eq_u2_1}
\int \limits_\Omega   P(x)\left(   u^*(x)  - m_1 (x)  \right)\,dx=
\int \limits_\Omega \frac{a_1(x)|\nabla (u^*/P)|^2}{(u^*/P)^2}\,dx \geq 0.
 \end{equation}
If $P(x)$ and $m_1(x)$ are linearly independent on $\Omega$   then
\begin{equation}\label{eq_u2}
\int \limits_\Omega P(x)\left(  u^*(x) - m_1(x)  \right)\,dx > 0. 
\end{equation}
\end{Lemma}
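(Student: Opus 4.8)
The plan is to exploit the divergence structure of the steady-state equation \eqref{eq_u} by dividing through by the strictly positive function $w := u^*/P$ and then integrating over $\Omega$. First I would rewrite \eqref{eq_u} as $\nabla\cdot\left[a_1\nabla w\right] = u^*\left(u^*-m_1\right)$. Dividing both sides by $w=u^*/P$ and using $u^*/w = P$ gives the pointwise identity
\[
P(x)\left(u^*(x)-m_1(x)\right) = \frac{1}{w}\,\nabla\cdot\left(a_1\nabla w\right),
\]
whose left-hand side is exactly the integrand on the left of \eqref{eq_u2_1}.

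Next I would integrate this identity over $\Omega$ and apply the divergence theorem to the right-hand side,
\[
\int_\Omega \frac{1}{w}\,\nabla\cdot\left(a_1\nabla w\right)\,dx = \int_{\partial\Omega}\frac{a_1}{w}\,\frac{\partial w}{\partial n}\,dS - \int_\Omega a_1\,\nabla w\cdot\nabla\!\left(\frac{1}{w}\right)\,dx.
\]
The boundary integral vanishes because the Neumann condition in \eqref{eq_u} reads $\partial w/\partial n = \frac{\partial}{\partial n}\left(u^*/P\right)=0$. Substituting $\nabla(1/w) = -w^{-2}\nabla w$ into the remaining volume integral converts it into $\int_\Omega a_1\,|\nabla w|^2/w^2\,dx$, which is precisely the right-hand side of \eqref{eq_u2_1}. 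Nonnegativity is then immediate since $a_1>0$ on $\overline{\Omega}$.

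For the strict inequality \eqref{eq_u2} I would argue by contradiction. If the quadratic functional on the right of \eqref{eq_u2_1} were zero, then since $a_1>0$ and $w>0$ we would need $\nabla w\equiv 0$, so $w\equiv c$ for some positive constant $c$ and hence $u^*=cP$. Substituting $u^*=cP$ back into \eqref{eq_u} annihilates the diffusion term and leaves $cP\left(m_1-cP\right)=0$; as $c>0$ and $P>0$ this forces $m_1\equiv cP$, i.e.\ $m_1$ and $P$ are linearly dependent, contradicting the hypothesis. Therefore the integral must be strictly positive.

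The manipulations are routine once the key move --- dividing by $w=u^*/P$ rather than by $u^*$ --- is identified; the only point requiring care is the sign bookkeeping in the integration by parts and confirming that the boundary term is annihilated by exactly the Neumann condition imposed in \eqref{eq_u}. The linear-independence assumption is not really an obstacle but the essential ingredient for strictness: it is precisely what excludes the degenerate ideal-free configuration $u^*=cP$, in which the density is proportional to the dispersal strategy and the flux term vanishes identically.
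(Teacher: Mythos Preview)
Your proof is correct and takes essentially the same approach as the paper: both divide the steady-state equation by $w=u^*/P$ (the paper phrases this as ``dividing by $u^*$ and multiplying by $P$''), integrate, and use the Neumann condition to kill the boundary term, then invoke linear independence to exclude $u^*\equiv cP$ for strictness. Your introduction of the explicit notation $w$ and the computation $\nabla(1/w)=-w^{-2}\nabla w$ is slightly more explicit than the paper's presentation, but the argument is identical in substance.
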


\begin{remark}
In the particular case of the regular diffusion $P \equiv c$ which is constant over $\Omega$, inequality \eqref{eq_u2} means that in heterogeneous environment, 
the average population is higher than the carrying capacity. This aligns with the results shown by \cite{Lou_2006}.
\end{remark}

Analyzing \eqref{eq_v}, we get a similar result 
for $v^{\ast}$ whenever $Q$ and $m_2$ are linearly independent, i.e.
\begin{equation*}
\int \limits_\Omega Q(x)\left(  v^*(x) - m_2(x)  \right)\,dx > 0. 
\end{equation*}

\begin{Lemma}
\label{Lsteady1}
Suppose that $u^{\ast}$ is a positive solution of \eqref{eq_u}, while
$P(x)$ and $m_1(x)$ are such that $\nabla \cdot [a_1(x)\nabla (m_1(x)/P(x))] \not\equiv 0$ on $\Omega$. 
Then
\begin{equation}\label{eq_est1abc}
\int\limits_{\Omega}m_1 \left( m_1(x) - u^{\ast}(x) \right)  \,dx   > 0.
\end{equation}
\end{Lemma}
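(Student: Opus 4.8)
The plan is to extract the inequality directly from the stationary equation \eqref{eq_u}, replacing the quotient test function used for Lemma~\ref{Lsteady2} by a single algebraic decomposition. The one fact I need from the equation is obtained by integrating \eqref{eq_u} over $\Omega$: the reaction part is $u^{\ast}(m_1 - u^{\ast})$, while the divergence term integrates, by the divergence theorem, to a boundary flux that vanishes under the no-flux condition $\partial (u^{\ast}/P)/\partial n = 0$. This gives the key identity
\begin{equation*}
\int_\Omega u^{\ast}(m_1 - u^{\ast})\,dx = -\int_\Omega \nabla\cdot\left[a_1\nabla\left(\frac{u^{\ast}}{P}\right)\right]dx = -\int_{\partial\Omega} a_1\frac{\partial}{\partial n}\left(\frac{u^{\ast}}{P}\right)dS = 0.
\end{equation*}

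The second ingredient is purely algebraic: writing $m_1 = (m_1 - u^{\ast}) + u^{\ast}$ and multiplying by $(m_1 - u^{\ast})$ gives the pointwise identity $m_1(m_1 - u^{\ast}) = (m_1 - u^{\ast})^2 + u^{\ast}(m_1 - u^{\ast})$. Integrating this over $\Omega$ and discarding the last term by the vanishing integral above yields
\begin{equation*}
\int_\Omega m_1(m_1 - u^{\ast})\,dx = \int_\Omega (m_1 - u^{\ast})^2\,dx \ge 0,
\end{equation*}
so the non-strict form of \eqref{eq_est1abc} comes essentially for free, with no gradient estimate required.

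The only genuine work is upgrading $\ge$ to $>$, and this is where the hypothesis $\nabla\cdot[a_1\nabla(m_1/P)]\not\equiv 0$ must be used. Equality would force $\int_\Omega (m_1 - u^{\ast})^2\,dx = 0$, hence $m_1 \equiv u^{\ast}$ on $\overline{\Omega}$ by continuity of both functions. Substituting $u^{\ast} = m_1$ back into \eqref{eq_u} collapses the reaction term to zero and leaves $\nabla\cdot[a_1\nabla(m_1/P)] = \nabla\cdot[a_1\nabla(u^{\ast}/P)] \equiv 0$, contradicting the standing assumption. I expect this to be the only subtle point: it encodes that the hypothesis is precisely the statement that $m_1$ is \emph{not} itself a steady state (equivalently, not an ideal free distribution for the single-species problem), so $u^{\ast}$ and $m_1$ cannot coincide. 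With this contradiction the inequality is strict, which proves \eqref{eq_est1abc}.
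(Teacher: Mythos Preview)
Your proof is correct and is essentially identical to the paper's argument: the paper also integrates \eqref{eq_u} to obtain $\int_\Omega u^{\ast}(m_1-u^{\ast})\,dx=0$, rewrites this as $-\int_\Omega(u^{\ast}-m_1)^2\,dx+\int_\Omega m_1(m_1-u^{\ast})\,dx=0$, and rules out equality by the same contradiction that $u^{\ast}\equiv m_1$ would force $\nabla\cdot[a_1\nabla(m_1/P)]\equiv 0$. The only difference is cosmetic ordering of the algebraic decomposition.
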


Similarly, if 
$\nabla \cdot [a_2(x)\nabla (m_2(x)/Q(x))] \not\equiv 0$ on $\Omega$, we have 
 \begin{equation*}
 \int \limits_\Omega m_2(x)\left(m_2(x) -   v^{\ast}(x)  \right)\,dx>0.
 \end{equation*}

The following results deal with the case when there exists a positive stationary coexistence equilibrium of \eqref{ssystem} which satisfies
\begin{equation}
\label{coexist_system}
\begin{cases}
 \displaystyle  \nabla \cdot \left[ a_1(x) \nabla \left( \frac{u_s(x)}{P(x)} \right)  \right]
+ u_s(x) \left(m_1(x)- u_s(x) - v_s(x) \right) = 0, \vspace{1mm}
\\
 \displaystyle  \nabla \cdot \left[  a_2(x) \nabla \left( \frac{v_s(x)}{Q(x)} \right)  \right]
+ v_s(x) \left(m_2(x)- u_s(x) - v_s(x) \right) = 0, & 
 \vspace{2mm} \\
  t>0,\;x\in{\Omega},\\
\displaystyle \pdv{}{n}\bigg(\frac{u_s}{P} \bigg) = \pdv{}{n}\bigg( \frac{v_s}{Q}\bigg) = 0, x\in\partial \Omega.
\end{cases}
\end{equation}

\begin{Lemma}\label{Lsteady2abc}
Let $(u_s,v_s)$ be a positive stationary coexistence solution of \eqref{ssystem} satisfying \eqref{coexist_system}.
If  $\displaystyle \nabla \cdot [a_1 \nabla  (u_s /P)] \not\equiv 0$  on $\Omega$ then
\begin{equation*}
\int \limits_\Omega   P(x)\left(   u_s(x) + v_s (x)   - m_1 (x)  \right)\,dx >0 .
\end{equation*}
If  $\displaystyle \nabla \cdot [a_2 \nabla ( v_s /Q)] \not\equiv 0$  on $\Omega$ then
\begin{equation*}\label{eq_u2_1_coexistence}
\int \limits_\Omega   Q(x)\left(   u_s(x) + v_s (x)   - m_2  (x)  \right)\,dx >0 .
 \end{equation*}
\end{Lemma}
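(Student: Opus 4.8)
The plan is to reproduce the argument behind Lemma~\ref{Lsteady2}, the only difference being that the single-species nonlinearity $u^*(m_1-u^*)$ is replaced by $u_s(m_1-u_s-v_s)$, so the competitor density $v_s$ simply comes along for the ride on the productivity side. First I would rewrite the first equation of \eqref{coexist_system} as
\[
\nabla \cdot \left[ a_1(x) \nabla \left( \frac{u_s}{P} \right) \right] = u_s\left(u_s + v_s - m_1\right),
\]
which is legitimate since $u_s>0$. Dividing by $u_s$, multiplying by $P$, and integrating over $\Omega$ turns the right-hand side into exactly the target quantity $\int_\Omega P\,(u_s + v_s - m_1)\,dx$, so everything reduces to analysing the integral of $\tfrac{P}{u_s}\,\nabla\cdot[a_1\nabla(u_s/P)]$.

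Next I would integrate that term by parts, setting $w:=u_s/P$. The boundary contribution $\int_{\partial\Omega}\tfrac{P}{u_s}\,a_1\,\partial_n w\,dS$ vanishes thanks to the no-flux condition $\partial_n(u_s/P)=0$ on $\partial\Omega$, and using $\nabla(P/u_s)=-w^{-2}\nabla w$ the interior term collapses to $\int_\Omega a_1\,w^{-2}|\nabla w|^2\,dx$. This delivers the identity
\[
\int_\Omega P(x)\left(u_s + v_s - m_1\right)\,dx
= \int_\Omega \frac{a_1(x)\,|\nabla(u_s/P)|^2}{(u_s/P)^2}\,dx \ge 0,
\]
the non-negativity being immediate because $a_1>0$ and the squared gradient is non-negative; note this is the exact analogue of \eqref{eq_u2_1}.

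Finally, to upgrade to the strict inequality, I would argue by contradiction: if the right-hand integral were zero, then since $a_1,u_s,P$ are positive we would need $\nabla(u_s/P)\equiv 0$, hence $a_1\nabla(u_s/P)\equiv 0$ and therefore $\nabla\cdot[a_1\nabla(u_s/P)]\equiv 0$ on $\Omega$, directly contradicting the hypothesis $\nabla\cdot[a_1\nabla(u_s/P)]\not\equiv 0$. The second claim follows verbatim by running the same computation on the second equation of \eqref{coexist_system} with $(Q,v_s,m_2)$ in place of $(P,u_s,m_1)$. The only real care point — a bookkeeping matter rather than a genuine obstacle — is confirming that the boundary integral drops out and that the sign of the remaining quadratic term is handled correctly; conceptually there is no new difficulty beyond Lemma~\ref{Lsteady2}, since the extra $v_s$ term never enters the integration-by-parts step.
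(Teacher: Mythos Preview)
Your proposal is correct and follows essentially the same argument as the paper: divide the first equation of \eqref{coexist_system} by $u_s$, multiply by $P$, integrate by parts using the no-flux boundary condition, and read off the identity $\int_\Omega P(u_s+v_s-m_1)\,dx=\int_\Omega a_1|\nabla(u_s/P)|^2/(u_s/P)^2\,dx$. Your handling of the strict inequality via the hypothesis $\nabla\cdot[a_1\nabla(u_s/P)]\not\equiv 0$ is in fact slightly cleaner than the paper's phrasing, but the substance is identical.
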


\begin{Lemma}
\label{Lsteady1abc}
Let $(u_s,v_s)$ be a positive stationary coexistence solution of \eqref{ssystem} satisfying \eqref{coexist_system}. If $m_1(x) \geq m_2(x)$  on $\Omega$  and $u_s+v_s \not\equiv m_1$, then
\begin{equation}\label{eq_est1}
\int\limits_{\Omega}m_1(x)  \left( m_1 (x) -  u_s(x) -v_s(x) \right) \,dx> 0.
\end{equation}
If $m_2(x) \geq m_1(x)$ on $\Omega$ and $u_s+v_s \not\equiv m_2$ then
\begin{equation*}
\int\limits_{\Omega}m_2 (x)  \left( m_2(x) -  u_s(x) -v_s(x) \right) \,dx> 0.
\end{equation*}
\end{Lemma}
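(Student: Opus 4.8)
The plan is to avoid working directly with the advection-diffusion operators and instead to extract two scalar identities by testing each equation in \eqref{coexist_system} against the constant function $1$. Writing $S := u_s + v_s$, integration over $\Omega$ annihilates the divergence terms, since the no-flux conditions $\partial_n(u_s/P) = \partial_n(v_s/Q) = 0$ force the boundary fluxes to vanish; what survives is
\begin{equation*}
\int_\Omega u_s\,(m_1 - S)\,dx = 0, \qquad \int_\Omega v_s\,(m_2 - S)\,dx = 0,
\end{equation*}
that is, $\int_\Omega u_s m_1\,dx = \int_\Omega u_s S\,dx$ and $\int_\Omega v_s m_2\,dx = \int_\Omega v_s S\,dx$. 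Adding these produces the single global identity $\int_\Omega (u_s m_1 + v_s m_2)\,dx = \int_\Omega S^2\,dx$.

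Next I would bring in the ordering hypothesis. Assuming $m_1 \geq m_2$ on $\Omega$ and using the positivity $v_s > 0$, I can replace $m_2$ by $m_1$ in the identity above at the cost of an inequality, obtaining
\begin{equation*}
\int_\Omega m_1 S\,dx = \int_\Omega (u_s m_1 + v_s m_1)\,dx \;\geq\; \int_\Omega (u_s m_1 + v_s m_2)\,dx = \int_\Omega S^2\,dx .
\end{equation*}
Thus $\int_\Omega m_1 S\,dx \geq \int_\Omega S^2\,dx$. Pairing this with the Cauchy--Schwarz bound $\int_\Omega m_1 S\,dx \leq \|m_1\|_{L^2}\,\|S\|_{L^2}$ yields $\|S\|_{L^2} \leq \|m_1\|_{L^2}$, and hence
\begin{equation*}
\int_\Omega m_1 S\,dx \leq \|m_1\|_{L^2}\,\|S\|_{L^2} \leq \|m_1\|_{L^2}^2 = \int_\Omega m_1^2\,dx ,
\end{equation*}
which is precisely the non-strict form of the desired estimate $\int_\Omega m_1(m_1 - u_s - v_s)\,dx \geq 0$.

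The delicate point, and the step I expect to be the main obstacle, is upgrading $\geq$ to the strict inequality $>$ asserted in \eqref{eq_est1}; this demands a careful reading of the equality cases in the chain above. Equality in Cauchy--Schwarz forces $S = c\,m_1$ for some constant $c > 0$ (both functions being strictly positive), and then the norm equality $\|S\|_{L^2} = \|m_1\|_{L^2}$ forces $c = 1$; together these give $u_s + v_s \equiv m_1$. Since the hypothesis explicitly excludes $u_s + v_s \equiv m_1$, at least one inequality in the chain must be strict, so $\int_\Omega m_1(m_1 - u_s - v_s)\,dx > 0$. The second assertion, under $m_2 \geq m_1$ and $u_s + v_s \not\equiv m_2$, follows by interchanging the roles of the two equations verbatim.
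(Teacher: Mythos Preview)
Your argument is correct. The opening moves---integrating each equation against $1$ to kill the divergence terms via the no-flux conditions, then using $m_1\geq m_2$ and $v_s>0$ to obtain $\int_\Omega S(m_1-S)\,dx\geq 0$---are exactly what the paper does. Where you diverge is in the passage from $\int_\Omega m_1 S\,dx\geq\int_\Omega S^2\,dx$ to the conclusion: you route through Cauchy--Schwarz and a norm comparison, whereas the paper simply adds and subtracts $m_1$ inside $\int_\Omega S(m_1-S)\,dx$ to write
\[
0\;\leq\;\int_\Omega (u_s+v_s)(m_1-u_s-v_s)\,dx\;=\;-\int_\Omega (m_1-u_s-v_s)^2\,dx+\int_\Omega m_1(m_1-u_s-v_s)\,dx,
\]
which immediately gives $\int_\Omega m_1(m_1-S)\,dx\geq\int_\Omega (m_1-S)^2\,dx$, strictly positive unless $S\equiv m_1$. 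The paper's completing-the-square is shorter and yields the quantitative lower bound $\int_\Omega(m_1-S)^2\,dx$ in one line; your Cauchy--Schwarz route reaches the same strict inequality, but the equality analysis requires tracking two separate degeneracies (proportionality and norm equality) rather than the single condition $m_1\equiv S$. Both are valid; the paper's version is the more economical of the two.
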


These auxiliary results will allow us to analyze stationary solutions (zero, semi-trivial and coexistence equilibria) of \eqref{ssystem}.
Problem \eqref{ssystem} is a monotone dynamical system \citep{CC,Hsu,Pao,Hs}. If all the equilibrium solutions but one are unstable, the remaining one should be globally asymptotically stable.
The trivial  equilibrium for positive $m_1,m_2$ is unstable \citep{CC,Kor2,Kor4}.

\begin{Lemma}
\label{zero_equilibrium}
The zero solution $(0,0)$ of \eqref{ssystem} is an unstable repelling equilibrium.
\end{Lemma}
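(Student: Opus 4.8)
The plan is to linearize \eqref{ssystem} about $(0,0)$ and show that the principal eigenvalue of the linearized operator is strictly positive, giving linear instability, and then to upgrade this to the repelling property through the monotone structure. At $(0,0)$ the quadratic interaction terms vanish to first order, so the linearization decouples into two independent scalar problems. The one governing the $u$-component is
\begin{equation*}
\frac{\partial \phi}{\partial t} = \nabla \cdot \left[ a_1(x) \nabla \left( \frac{\phi}{P} \right) \right] + m_1(x)\,\phi, \qquad \frac{\partial}{\partial n}\left( \frac{\phi}{P}\right) = 0,
\end{equation*}
with the analogous problem for the $v$-component using $a_2$, $m_2$, $Q$. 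It suffices to show that the principal (largest) eigenvalue of either operator is positive, since a positive growth rate $e^{\lambda_1 t}$ destabilizes the equilibrium.

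First I would remove the first-order drift by the substitution $\psi = \phi/P$, so that $\phi = P\psi$ and the Neumann condition becomes the clean condition $\partial_n \psi = 0$. The eigenvalue problem $\nabla \cdot [a_1 \nabla(\phi/P)] + m_1 \phi = \lambda \phi$ then becomes the self-adjoint Sturm--Liouville problem with positive weight $P$,
\begin{equation*}
\nabla \cdot \left[ a_1(x) \nabla \psi \right] + m_1(x) P(x)\, \psi = \lambda\, P(x)\, \psi, \qquad \frac{\partial \psi}{\partial n} = 0,
\end{equation*}
for which standard spectral theory on the bounded smooth domain $\Omega$ gives a discrete spectrum with a simple principal eigenvalue $\lambda_1$ admitting a positive eigenfunction. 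Multiplying by $\psi$, integrating over $\Omega$, and using the boundary condition to integrate the divergence term by parts yields the Rayleigh quotient characterization
\begin{equation*}
\lambda_1 = \sup_{\psi \neq 0} \frac{\int_\Omega \left[ m_1(x) P(x)\, \psi^2 - a_1(x) |\nabla \psi|^2 \right] dx}{\int_\Omega P(x)\, \psi^2\, dx}.
\end{equation*}
The sign of $\lambda_1$ is now immediate: testing with the constant $\psi \equiv 1$ removes the gradient term and leaves the numerator equal to $\int_\Omega m_1 P\,dx$, so
\begin{equation*}
\lambda_1 \geq \frac{\int_\Omega m_1(x) P(x)\, dx}{\int_\Omega P(x)\, dx} > 0,
\end{equation*}
since $m_1$ and $P$ are continuous and strictly positive on $\overline{\Omega}$. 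Thus the linearization at $(0,0)$ has a positive principal eigenvalue and $(0,0)$ is linearly unstable.

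Finally I would promote linear instability to the repelling property, which I expect to be the main obstacle: positivity of the principal eigenvalue only yields local instability, whereas ``repelling'' asserts that nontrivial nonnegative orbits are driven away from the origin. Let $\psi_1 > 0$ be the principal eigenfunction above and set $\phi_1 = P\psi_1 > 0$, so that $\nabla \cdot [a_1 \nabla(\phi_1/P)] + m_1 \phi_1 = \lambda_1 \phi_1$. Fix $\delta \in (0,\lambda_1)$ and consider $\underline{u}(t,x) = \varepsilon\, e^{(\lambda_1-\delta)t}\,\phi_1(x)$. Using the eigenvalue identity one checks that $\underline u$ is a subsolution of the $u$-equation precisely while the orbit stays in the region $\{u+v \le \delta\}$, because there the higher-order term $-\underline u(\underline u+v)$ is dominated by $-\delta\,\underline u$. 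Hence, once a nontrivial nonnegative orbit is positive (by the strong maximum principle) and small, comparison forces $u$ to grow at the exponential rate $\lambda_1-\delta$ until it leaves that neighbourhood. Combined with the fact that \eqref{ssystem} is a monotone dynamical system, this shows $(0,0)$ is a uniform repeller; alternatively the last step may be quoted directly from the standard theory of competitive reaction--diffusion systems cited above.
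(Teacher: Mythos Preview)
Your argument is correct. The substitution $\psi=\phi/P$ reducing to a self-adjoint weighted problem, the Rayleigh quotient tested against $\psi\equiv 1$, and the subsolution construction for the repelling step are all sound; the same argument goes through verbatim for the $v$-component with $a_2,m_2,Q$.

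The paper itself does not actually prove this lemma. It simply records it as a known fact with the citation ``The trivial equilibrium for positive $m_1,m_2$ is unstable \citep{CC,Kor2,Kor4}'' and moves on; no proof appears in Appendix~A. So you have supplied an explicit argument where the paper defers to the literature. Your choice of test function $\psi\equiv 1$ (equivalently $\phi=P$) is exactly in line with how the paper treats all its other eigenvalue estimates (Lemmas~\ref{semi_p1}, \ref{semi_proportional}, etc.), where the trial function is always taken to be the dispersal strategy so that the gradient term in the Rayleigh quotient vanishes. For the repelling upgrade, the references the paper cites (Cantrell--Cosner, Korobenko--Braverman) use essentially the comparison/sub-supersolution route you sketch, together with the monotone-systems machinery of Hsu--Smith--Waltman, so your approach is the standard one behind those citations rather than a genuinely different route.
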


Without loss of generality, adapting the coefficients $a_1$ accordingly, we can assume that the dispersal strategies are normalized in the sense that
\begin{equation}
\label{normal}
\int_{\Omega} P(x) ~dx =1, \quad   \int_{\Omega} Q(x) ~dx =1.
\end{equation}

\begin{Lemma}\label{semi_p1}
Suppose that \eqref{normal} holds, $P$ and $m_1$  are  linearly independent on $\Omega$ functions, $m_2(x) \geq m_1(x)$ on $\Omega$
and $m_1(x)\equiv \alpha P+\beta Q$ for some $\alpha \geq 0$, $\beta>0$.
Then the semi-trivial steady state $(u^{\ast}(x),0)$ of  \eqref{ssystem} is unstable.
\end{Lemma}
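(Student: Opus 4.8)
The plan is to show that the second species $v$ can invade the single-species state $u^{\ast}$, i.e.\ that the linearization of \eqref{ssystem} at the semi-trivial equilibrium $(u^{\ast},0)$ admits a growing mode in the $v$-direction. First I would linearize: writing $u=u^{\ast}+\psi$, $v=\phi$ and discarding quadratic terms, the competitive cross term $-uv$ renders the linearized operator block upper-triangular, with the $v$-block decoupling as
\[
\mathcal{L}_2\phi := \nabla\cdot\left[a_2(x)\nabla\left(\frac{\phi}{Q}\right)\right] + \bigl(m_2(x)-u^{\ast}(x)\bigr)\phi, \qquad \frac{\partial}{\partial n}\left(\frac{\phi}{Q}\right)=0 \ \text{on}\ \partial\Omega.
\]
Since $u^{\ast}$ is the (globally stable) equilibrium of the single-species problem \eqref{eq_u}, the $u$-block has a negative principal eigenvalue, so $(u^{\ast},0)$ is unstable as soon as the principal eigenvalue $\lambda_1$ of $\mathcal{L}_2$ is positive: the triangular structure lets one lift a positive eigenpair of $\mathcal{L}_2$ to one of the full operator. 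The whole problem thus reduces to proving $\lambda_1>0$.

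Next I would symmetrize. The directed-diffusion operator $\mathcal{L}_2$ is not self-adjoint in $L^2(\Omega)$, but the substitution $w=\phi/Q$ converts $\mathcal{L}_2\phi=\lambda\phi$ into the weighted self-adjoint problem $\nabla\cdot[a_2\nabla w]+(m_2-u^{\ast})\,Q\,w=\lambda\,Q\,w$ with $\partial_n w=0$, whose principal eigenvalue has the Rayleigh characterization
\[
\lambda_1 = \sup_{w\neq 0}\frac{\displaystyle\int_\Omega\bigl[-a_2|\nabla w|^2+(m_2-u^{\ast})\,Q\,w^2\bigr]\,dx}{\displaystyle\int_\Omega Q\,w^2\,dx}.
\]
Testing with the admissible trial function $w\equiv 1$ and using the normalization $\int_\Omega Q\,dx=1$ from \eqref{normal} yields the clean bound $\lambda_1\geq\int_\Omega(m_2-u^{\ast})\,Q\,dx$, so it suffices to show this integral is strictly positive.

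Finally I would establish $\int_\Omega(m_2-u^{\ast})\,Q\,dx>0$ using the hypotheses. Splitting it as $\int_\Omega(m_2-m_1)\,Q\,dx+\int_\Omega(m_1-u^{\ast})\,Q\,dx$, the first term is $\geq 0$ because $m_2\geq m_1$ and $Q>0$. For the second I would substitute $m_1=\alpha P+\beta Q$, i.e.\ $Q=(m_1-\alpha P)/\beta$, to obtain
\[
\int_\Omega(m_1-u^{\ast})\,Q\,dx = \frac{1}{\beta}\left[\int_\Omega m_1\,(m_1-u^{\ast})\,dx - \alpha\int_\Omega P\,(m_1-u^{\ast})\,dx\right].
\]
Lemma~\ref{Lsteady1} gives $\int_\Omega m_1(m_1-u^{\ast})\,dx>0$, while Lemma~\ref{Lsteady2}, together with the linear independence of $P$ and $m_1$, gives $\int_\Omega P(m_1-u^{\ast})\,dx<0$; since $\alpha\geq 0$ and $\beta>0$, both contributions to the bracket are nonnegative and the first is strictly positive. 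Hence the integral is positive, $\lambda_1>0$, and $(u^{\ast},0)$ is unstable.

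I expect two steps to require the most care. The conceptual crux is the reduction in the first paragraph: justifying that $\mathcal{L}_2$ has a real principal eigenvalue enjoying the stated variational form after the weight change $w=\phi/Q$, and that its positivity forces genuine (nonlinear) instability and invadability rather than merely linear instability --- this leans on the monotone/competitive structure of \eqref{ssystem} recalled before Lemma~\ref{zero_equilibrium}. The second point to watch is the applicability of Lemma~\ref{Lsteady1}: its nondegeneracy hypothesis $\nabla\cdot[a_1\nabla(m_1/P)]\not\equiv 0$ must be checked, and it is ensured by the nonconstancy of $m_1/P$, which in turn follows from the linear independence of $P$ and $m_1$.
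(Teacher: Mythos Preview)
Your proposal is correct and follows essentially the same route as the paper: test the Rayleigh quotient for the $v$-linearization with $\psi=Q$ (equivalently your $w\equiv 1$), split the resulting integral via $m_1=\alpha P+\beta Q$ and $m_2\geq m_1$, and invoke Lemmas~\ref{Lsteady2} and~\ref{Lsteady1}. One refinement to your final caveat: nonconstancy of $m_1/P$ does not by itself force $\nabla\cdot[a_1\nabla(m_1/P)]\not\equiv 0$ (a nonconstant function can be $a_1$-harmonic), but what you actually need is $u^{\ast}\not\equiv m_1$, and this does follow because $u^{\ast}\equiv m_1$ together with \eqref{eq_u} \emph{and its Neumann boundary condition} would force $m_1/P$ to be constant by the maximum principle, contradicting the linear independence of $P$ and $m_1$.
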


\begin{remark}
Note  that in the case when $(\alpha P, \beta Q)$ is a stationary solution of \eqref{ssystem}, it is called {\bf an ideal free pair} \citep{Averill}.
\end{remark}

As the situation is symmetric, we get from Lemma~\ref{semi_p1} instability of the second semi-trivial equilibrium.

\begin{Lemma}\label{semi_p2}
Suppose that $Q$ and $m_2$  are  linearly independent on $\Omega$ functions, $m_2(x) \leq m_1(x)$ on $\Omega$
and $m_2(x)\equiv \alpha P+\beta Q$ for some $\alpha>0$, $\beta>0$.
Then the semi-trivial steady state $(0,v^{\ast}(x))$ of \eqref{ssystem}  is unstable.
\end{Lemma}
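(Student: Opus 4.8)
The quickest route is to invoke the symmetry between the two equations of \eqref{ssystem}: interchanging the roles of $(u,P,m_1)$ and $(v,Q,m_2)$ carries Lemma~\ref{semi_p1} into the present statement, since the hypotheses $m_2(x)\le m_1(x)$ and $m_2\equiv\alpha P+\beta Q$ are exactly the images of $m_2\ge m_1$ and $m_1\equiv\alpha P+\beta Q$ under that swap. Nonetheless I would spell out the underlying mechanism, because the asymmetry between $\alpha\ge 0$ in Lemma~\ref{semi_p1} and $\alpha>0$ here only becomes transparent once the argument is run explicitly.

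The plan is to reduce instability of $(0,v^{\ast})$ to a sign condition on a principal eigenvalue, using the monotone-system framework already invoked for \eqref{ssystem}. Linearising the first equation about $u=0$, $v=v^{\ast}$ produces the invasion operator $\mathcal{L}\varphi=\nabla\cdot[a_1\nabla(\varphi/P)]+(m_1-v^{\ast})\varphi$ with $\frac{\partial}{\partial n}(\varphi/P)=0$; the equilibrium is unstable precisely when the principal eigenvalue $\mu_1$ of $\mathcal{L}$ is positive. The substitution $\varphi=Pw$ turns the eigenproblem into the self-adjoint form $\nabla\cdot[a_1\nabla w]+(m_1-v^{\ast})Pw=\mu Pw$, whose principal eigenvalue admits the variational characterisation
\begin{equation*}
\mu_1=\sup_{w\ne 0}\frac{\int_\Omega\big[-a_1|\nabla w|^2+(m_1-v^{\ast})Pw^2\big]\,dx}{\int_\Omega Pw^2\,dx}.
\end{equation*}
Taking the admissible test function $w\equiv 1$ and using the normalisation $\int_\Omega P\,dx=1$ from \eqref{normal} gives $\mu_1\ge\int_\Omega P(m_1-v^{\ast})\,dx$, so it suffices to prove
\begin{equation*}
\int_\Omega P(x)\big(m_1(x)-v^{\ast}(x)\big)\,dx>0.
\end{equation*}

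The core of the argument is this integral inequality, and it is where the structural hypotheses enter. Since $m_1\ge m_2$ pointwise and $P>0$, it is enough to show $\int_\Omega P(m_2-v^{\ast})\,dx>0$. I would combine three facts about the resident $v^{\ast}$: integrating \eqref{eq_v} against $1$ and using the boundary condition gives $\int_\Omega v^{\ast}(m_2-v^{\ast})\,dx=0$, i.e. $\int_\Omega m_2 v^{\ast}\,dx=\int_\Omega (v^{\ast})^2\,dx$; the $v^{\ast}$-analogue of Lemma~\ref{Lsteady2} gives $\int_\Omega Q(v^{\ast}-m_2)\,dx>0$ (valid because $Q$ and $m_2$ are linearly independent); and the $v^{\ast}$-analogue of Lemma~\ref{Lsteady1} gives $\int_\Omega m_2(m_2-v^{\ast})\,dx>0$, hence $\int_\Omega m_2^2\,dx>\int_\Omega(v^{\ast})^2\,dx=\int_\Omega m_2 v^{\ast}\,dx$. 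Expanding both $\int_\Omega m_2^2$ and $\int_\Omega m_2 v^{\ast}$ with $m_2=\alpha P+\beta Q$ converts the last inequality into $\alpha\int_\Omega P(m_2-v^{\ast})+\beta\int_\Omega Q(m_2-v^{\ast})>0$; since the $Q$-term is negative by the second fact and $\alpha>0$, I can solve for $\int_\Omega P(m_2-v^{\ast})>0$. This is exactly the step at which $\alpha>0$ (rather than merely $\alpha\ge 0$) is indispensable, since the invader $u$ disperses according to $P$ and the inequality must be read off the $P$-coefficient.

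The main obstacle I anticipate is the clean availability of the $v^{\ast}$-analogue of Lemma~\ref{Lsteady1}, which requires $\nabla\cdot[a_2\nabla(m_2/Q)]\not\equiv 0$; with $m_2/Q=\alpha(P/Q)+\beta$ this reduces to $\nabla\cdot[a_2\nabla(P/Q)]\not\equiv 0$, and I would need to confirm that this follows from the linear independence of $P$ and $Q$ (forcing $P/Q$ to be non-constant) rather than merely assuming it, and to dispose separately of the borderline case $v^{\ast}\equiv m_2$. By contrast, the linearised-stability reduction and the admissibility of $w\equiv1$ for the transformed self-adjoint operator are routine and can be cited from the monotone dynamical systems theory already referenced for \eqref{ssystem}.
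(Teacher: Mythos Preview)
Your proof is correct and follows exactly the route the paper takes: the paper simply states that Lemma~\ref{semi_p2} follows from Lemma~\ref{semi_p1} by the evident symmetry, and your explicit computation is precisely the proof of Lemma~\ref{semi_p1} with the roles of $(u,P,m_1)$ and $(v,Q,m_2)$ interchanged, including the choice of test function $\psi=P$ and the decomposition $P=\frac{1}{\alpha}(m_2-\beta Q)$. The obstacle you anticipate concerning the hypothesis of the $v^{\ast}$-analogue of Lemma~\ref{Lsteady1} is not actually an obstacle: the proof of that lemma shows $\int_\Omega m_2(m_2-v^{\ast})\,dx=\int_\Omega(m_2-v^{\ast})^2\,dx\ge 0$ unconditionally, and since the strict sign in your final inequality is carried by the $Q$-term via Lemma~\ref{Lsteady2} (which needs only the linear independence of $Q$ and $m_2$, an explicit hypothesis), the non-strict version is all you need---this is also how the paper phrases it in the proof of Lemma~\ref{semi_p1} (``the first term is non-negative \ldots\ the second is positive'').
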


For completeness, let us also consider the case when the carrying capacity is proportional to one of the dispersal strategies only (and non-proportional to the other one).

\begin{Lemma}\label{semi_proportional}
Suppose that   $m_1  \equiv  m_2 \equiv m \equiv \alpha P$, where $\alpha>0$, $Q$ and $P$  are  linearly independent on $\Omega$ functions.
Then the semi-trivial steady state $(0,v^{\ast}(x))$  of \eqref{ssystem}  is unstable, and there is no coexistence equilibrium.
\end{Lemma}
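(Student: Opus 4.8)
The plan is to exploit the defining structural feature of this regime: since $m\equiv\alpha P$, the constant function $u^{\ast}/P\equiv\alpha$ solves \eqref{eq_u}, so $u^{\ast}=\alpha P=m$ is the ideal free distribution for the first species, whereas the resident $v^{\ast}$ disperses according to $Q\not\propto P$ and is therefore not ideal free. Both claims---instability of $(0,v^{\ast})$ and nonexistence of a coexistence state---reduce to integral identities obtained by testing the equations against suitable weights, combined with the Cauchy--Schwarz inequality and the linear independence of $P$ and $Q$.

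For the instability of $(0,v^{\ast})$ I would linearise the first equation of \eqref{ssystem} about $(0,v^{\ast})$; the $u$--component decouples (the cross term carries the factor $u=0$), and stability is governed by the principal eigenvalue $\lambda_1$ of $\phi\mapsto\nabla\cdot[a_1\nabla(\phi/P)]+(m_1-v^{\ast})\phi$ under $\partial_n(\phi/P)=0$. Writing $\phi=Pw$ turns this into the weighted self-adjoint problem $\nabla\cdot[a_1\nabla w]+(m_1-v^{\ast})Pw=\lambda Pw$ with $\partial_n w=0$, whose principal eigenvalue has the Rayleigh characterisation $\lambda_1=\sup_{w}\big(\!-\!\int_\Omega a_1|\nabla w|^2+\int_\Omega(m_1-v^{\ast})Pw^2\big)\big/\int_\Omega Pw^2$. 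Testing with $w\equiv 1$ reduces instability to $\int_\Omega(\alpha P-v^{\ast})P\,dx>0$. Integrating \eqref{eq_v} over $\Omega$ (the divergence term drops by the Neumann condition) gives $\alpha\int_\Omega Pv^{\ast}=\int_\Omega (v^{\ast})^2$, so that $\int_\Omega(\alpha P-v^{\ast})P=\tfrac1\alpha\big(\|\alpha P\|_{L^2}^2-\|v^{\ast}\|_{L^2}^2\big)$; Cauchy--Schwarz applied to $\int_\Omega v^{\ast}(\alpha P)=\|v^{\ast}\|_{L^2}^2$ yields $\|v^{\ast}\|_{L^2}\le\|\alpha P\|_{L^2}$, and this is strict because $v^{\ast}$ cannot be proportional to $\alpha P$ (if it were, the constraint would force $v^{\ast}=\alpha P$, and then \eqref{eq_v} with $\partial_n(v^{\ast}/Q)=0$ would make $\alpha P/Q$ solve $\nabla\cdot[a_2\nabla\psi]=0$, $\partial_n\psi=0$, hence constant, contradicting linear independence of $P,Q$). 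Thus $\lambda_1>0$ and $(0,v^{\ast})$ is unstable.

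For the nonexistence of a coexistence equilibrium I would argue by contradiction from a positive solution $(u_s,v_s)$ of \eqref{coexist_system}. Multiplying the first equation by $P/u_s$ and integrating by parts (using $\partial_n(u_s/P)=0$) gives $\int_\Omega P(u_s+v_s-\alpha P)\,dx=\int_\Omega a_1|\nabla(u_s/P)|^2/(u_s/P)^2\,dx\ge 0$, as in Lemma~\ref{Lsteady2abc}. Adding the two equations integrated over $\Omega$ and writing $S=u_s+v_s$ yields $\int_\Omega S(\alpha P-S)=0$, whence $\alpha\int_\Omega PS=\int_\Omega S^2$ and, by Cauchy--Schwarz, $\int_\Omega P(S-\alpha P)=\tfrac1\alpha(\|S\|_{L^2}^2-\|\alpha P\|_{L^2}^2)\le 0$. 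Comparing the two displays forces $\int_\Omega a_1|\nabla(u_s/P)|^2/(u_s/P)^2=0$, so $u_s/P$ is constant, i.e.\ $u_s=c_1P$. Substituting back, the reaction term of the first equation vanishes and gives $v_s=(\alpha-c_1)P$; then the reaction term of the second equation also vanishes, so $\nabla\cdot[a_2\nabla(v_s/Q)]=0$ with $\partial_n(v_s/Q)=0$, forcing $v_s=c_2Q$. Positivity gives $c_2>0$ and $\alpha-c_1>0$, so $(\alpha-c_1)P=c_2Q$ makes $P$ and $Q$ proportional, contradicting their linear independence; hence no positive coexistence state exists.

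The step I expect to be most delicate is the strictness of the two Cauchy--Schwarz estimates, i.e.\ ruling out that $v^{\ast}$ (resp.\ $S$) is proportional to $\alpha P$. This is exactly where linear independence of $P$ and $Q$ enters, through the uniqueness fact that the only solutions of $\nabla\cdot[a\nabla\psi]=0$ with $\partial_n\psi=0$ are constants. Two further points require care but are standard: the reduction of the non--self--adjoint directed--diffusion operator to a weighted self--adjoint eigenvalue problem via $\phi=Pw$, and the passage from a positive principal eigenvalue to genuine nonlinear instability of the semi--trivial state, which is justified by the monotone/competitive structure of \eqref{ssystem} already invoked for the preceding lemmas.
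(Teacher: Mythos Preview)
Your argument is correct. For the instability of $(0,v^{\ast})$ you do exactly what the paper does: test the Rayleigh quotient of the linearised $u$--equation with $\psi=P$; your Cauchy--Schwarz computation is simply an inline reproof of Lemma~\ref{Lsteady1} (the paper cites that lemma to get $\int_\Omega m(m-v^{\ast})\,dx>0$, which after dividing by $\alpha$ is precisely your $\int_\Omega P(\alpha P-v^{\ast})\,dx>0$), and your strictness discussion via $v^{\ast}=\alpha P\Rightarrow P/Q$ constant is the same obstruction the paper uses.

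For the coexistence part you take a genuinely different route. The paper argues by case split: if $u_s+v_s\equiv m$, the two diffusion problems force $u_s=c_1P$, $v_s=c_2Q$ and linear independence kills $c_2>0$; if $u_s+v_s\not\equiv m$, it shows the eigenvalue problem for the first equation at $(u_s,v_s)$ has $\sigma_1>0$ (testing with $\psi=P$ and using \eqref{aux1}), yet $u_s$ itself is a positive eigenfunction with eigenvalue $0$---a contradiction. You instead squeeze $\int_\Omega P(S-\alpha P)\,dx$ between the nonnegative quantity $\int_\Omega a_1|\nabla(u_s/P)|^2/(u_s/P)^2$ (Lemma~\ref{Lsteady2abc}) and the nonpositive bound $\tfrac{1}{\alpha}(\|S\|_{L^2}^2-\|\alpha P\|_{L^2}^2)$ coming from $\int_\Omega S(\alpha P-S)=0$ and Cauchy--Schwarz, forcing $u_s/P$ constant directly; the rest ($v_s=(\alpha-c_1)P=c_2Q$, hence $P\propto Q$) coincides with the paper's first case. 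Your approach is a bit more economical---it avoids the case distinction and the principal--eigenfunction contradiction---while the paper's eigenvalue argument has the advantage of mirroring the method it reuses elsewhere (e.g.\ in Theorem~\ref{stability_coexistence_1}).
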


The next seminal result by \cite{Hsu} classifies all possible equilibria for a monotone dynamical system, \eqref{ssystem} is a particular case of such a system.

\begin{Lemma}\label{equil_charac}
Let the zero equilibrium be a repeller of \eqref{ssystem}. Then, for system \eqref{ssystem},
if all but one equilibrium solutions are unstable, the remaining equilibrium is a global attractor for all solutions of \eqref{ssystem} with non-negative non-trivial, in both $u$ and $v$ initial conditions.
\end{Lemma}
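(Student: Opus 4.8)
The plan is to recognize \eqref{ssystem} as a competitive reaction--diffusion system and to bring it into the abstract framework of strongly monotone dynamical systems, after which the cited classification theorem of \cite{Hsu} applies. First I would equip the positive cone of the state space (say $C(\overline{\Omega})\times C(\overline{\Omega})$, or the fractional power space associated with the diffusion operators) with the \emph{competitive order} $(u_1,v_1)\preceq_K (u_2,v_2)$ iff $u_1\le u_2$ and $v_1\ge v_2$ pointwise. Because the reaction nonlinearities $u(m_1-u-v)$ and $v(m_2-u-v)$ have non-positive off-diagonal derivatives ($\partial_v[u(m_1-u-v)]=-u\le 0$ and $\partial_u[v(m_2-u-v)]=-v\le 0$ on the nonnegative cone), the change of variable reversing the sign of the second component turns \eqref{ssystem} into a cooperative system; the parabolic maximum principle together with the connectedness of $\Omega$ then yields that the solution semiflow $\Phi_t$ is \emph{strongly} order-preserving with respect to $\preceq_K$ and strictly positivity-improving, so any datum that is nonnegative and nontrivial in each of $u$ and $v$ enters the interior of the cone for $t>0$.

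Next I would establish dissipativity and compactness. Comparing each component with the scalar logistic equation $w_t=\nabla\cdot[a_i\nabla(w/P)]+w(m_i-w)$ gives a uniform eventual bound on $(u,v)$, hence a bounded absorbing set; parabolic smoothing upgrades this to a compact global attractor. The extreme equilibria in the competitive order are the two semi-trivial states $E_1=(u^\ast,0)$ and $E_2=(0,v^\ast)$ from \eqref{eq_u}--\eqref{eq_v}, and one checks that every interior orbit is eventually trapped in the order interval $[E_2,E_1]_K=\{(u,v):0\le u\le u^\ast,\ 0\le v\le v^\ast\}$. Together with the hypothesis (Lemma~\ref{zero_equilibrium}) that $(0,0)$ is a repeller, this confines the asymptotics of every admissible solution to the equilibrium set inside $[E_2,E_1]_K$, which consists of $E_1$, $E_2$, and the coexistence equilibria solving \eqref{coexist_system}.

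At this point the cited theorem of \cite{Hsu} on competitive monotone systems supplies the conclusion: the global attractor is the union of the unstable manifolds of these equilibria, organized in a Morse decomposition compatible with $\preceq_K$, and every precompact orbit converges to one of them. If all equilibria except one, call it $W^\ast$, are unstable, then $W^\ast$ is the unique stable equilibrium; I would argue that no interior orbit can have its $\omega$-limit at any unstable equilibrium, since their stable manifolds are of lower codimension and, by strong monotonicity, cannot contain an order interval of interior data. Hence the basin of $W^\ast$ exhausts the interior of the cone, i.e.\ $W^\ast$ is a global attractor for all solutions with nonnegative initial data that are nontrivial in both $u$ and $v$.

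The main obstacle is precisely this last step: upgrading the \emph{generic} convergence guaranteed by Hirsch-type theorems to \emph{global} convergence. For an arbitrary dissipative system ``all but one equilibrium unstable'' does not force global attraction, so the argument must exploit monotonicity in an essential way --- ruling out that the complement of the basin of $W^\ast$ is a nonempty invariant set wedged between unstable equilibria. This is exactly the content that the classification theorem of \cite{Hsu} packages, so the real work reduces to verifying its hypotheses (strong monotonicity, compactness, the order-interval trapping, and the repelling property of $(0,0)$) rather than reproving the abstract trichotomy.
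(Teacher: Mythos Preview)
Your proposal is correct and in fact goes further than the paper does: the paper offers no proof of this lemma at all, simply stating it as the ``seminal result by \cite{Hsu}'' and noting that \eqref{ssystem} is a monotone dynamical system to which it applies. Your sketch of why \eqref{ssystem} satisfies the hypotheses of the Hsu--Smith--Waltman trichotomy (competitive structure, strong monotonicity via the parabolic maximum principle, dissipativity by comparison with the scalar logistic problem, order-interval trapping between the semi-trivial equilibria) is exactly the verification the paper leaves implicit, and your acknowledgment that the global-attraction step is packaged in \cite{Hsu} rather than provable by generic-convergence arguments alone is the right diagnosis.
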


\section{Main Results}
\label{sec:main}

\subsection{Competition outcomes without harvesting - dispersal  influence}
\label{subsec:intro}

\begin{Th}
\label{stability_coexistence_1}
Suppose that  for some $\alpha>0$, $\beta>0$, $m_2 \equiv m_1 \equiv m \equiv \alpha P+\beta Q > 0$ on $\Omega$, where  $(P,Q)$, $(P,m)$ and $(Q,m)$ are pairs of functions that are linearly independent on $\Omega$. 
Then, all solutions of \eqref{ssystem} with non-negative non-trivial, in both arguments, initial conditions converge to the coexistence equilibrium $(\alpha P,\beta Q)$.
\end{Th}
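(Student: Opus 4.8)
The plan is to identify $(\alpha P,\beta Q)$ as a coexistence equilibrium, feed the hypotheses into the instability lemmas to dispose of all boundary equilibria, and then close the argument either through the monotone-systems classification of Lemma~\ref{equil_charac} (which needs uniqueness of the interior equilibrium) or, self-containedly, through a Lyapunov functional tailored to the ideal-free-pair identity. First I would verify by direct substitution that $(\hat u,\hat v):=(\alpha P,\beta Q)$ solves \eqref{coexist_system}: since $\hat u/P\equiv\alpha$ and $\hat v/Q\equiv\beta$ are constants the divergence terms vanish, the reaction terms vanish because $m-\hat u-\hat v=m-(\alpha P+\beta Q)\equiv0$, and the no-flux conditions hold for the same reason. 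Thus $(\alpha P,\beta Q)$ is a genuine positive coexistence equilibrium.

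Next I would check the instability of the boundary equilibria. Taking \eqref{normal} without loss of generality, with $m_1\equiv m_2\equiv m\equiv\alpha P+\beta Q$ and $\alpha,\beta>0$, linear independence of $(P,m)$ supplies the hypotheses of Lemma~\ref{semi_p1} (the requirement $m_2\ge m_1$ holds with equality), so $(u^{\ast},0)$ is unstable; symmetrically, linear independence of $(Q,m)$ gives Lemma~\ref{semi_p2}, so $(0,v^{\ast})$ is unstable. Together with the repelling property of $(0,0)$ from Lemma~\ref{zero_equilibrium}, every non-interior equilibrium is unstable, so by Lemma~\ref{equil_charac} it suffices to show that $(\alpha P,\beta Q)$ is the \emph{unique} coexistence equilibrium; it is then the sole surviving equilibrium and hence the global attractor.

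The uniqueness (and in fact outright global attractivity) is where the real content lies, and I would obtain it from the Lyapunov functional
$$V(t)=\int_\Omega\Big(u-\hat u-\hat u\ln\tfrac{u}{\hat u}\Big)\,dx+\int_\Omega\Big(v-\hat v-\hat v\ln\tfrac{v}{\hat v}\Big)\,dx.$$
Differentiating along \eqref{ssystem}, the reaction contribution is $\int_\Omega\big[(u-\hat u)+(v-\hat v)\big](m-u-v)\,dx$, and here the ideal-free-pair identity $\hat u+\hat v\equiv m$ collapses the bracket to $-(m-u-v)$, giving $-\int_\Omega(m-u-v)^2\,dx\le0$; this cancellation is precisely why the hypothesis $m_1\equiv m_2\equiv\alpha P+\beta Q$ with full resource sharing is needed and why equal weights suffice. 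For the diffusion contribution I would integrate by parts, killing the boundary terms with $\partial_n(u/P)=\partial_n(v/Q)=0$; writing $\phi=u/P$, $\psi=v/Q$, the $u$-part reduces to $-\int_\Omega a_1\alpha|\nabla\phi|^2/\phi^2\,dx\le0$ and the $v$-part to $-\int_\Omega a_2\beta|\nabla\psi|^2/\psi^2\,dx\le0$. Hence $\dot V\le0$, with equality forcing $m-u-v\equiv0$ and $\nabla\phi\equiv\nabla\psi\equiv0$, i.e. $u\equiv c_1P$, $v\equiv c_2Q$ with $c_1P+c_2Q\equiv\alpha P+\beta Q$; linear independence of $(P,Q)$ then yields $c_1=\alpha$, $c_2=\beta$.

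Evaluated at any hypothetical steady state the dissipation must vanish, which by the rigidity just described forces $(u,v)=(\alpha P,\beta Q)$; this proves uniqueness, exactly what Lemma~\ref{equil_charac} requires, and completes the monotone-systems route. Alternatively, LaSalle's invariance principle applied to $V$ yields convergence of every orbit to $(\alpha P,\beta Q)$ directly, bypassing the classification altogether. The main obstacle I anticipate is not the algebra of $\dot V$ but the functional-analytic bookkeeping needed to run LaSalle rigorously in the PDE setting: establishing global existence, uniform-in-time a priori bounds (available from the logistic structure via an invariant-region/comparison argument), and precompactness of orbits from parabolic regularity, so that $\omega$-limit sets are nonempty, compact, and contained in $\{\dot V=0\}$. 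A secondary technical point is justifying the integration by parts and differentiation under the integral sign, which relies on the strict positivity and smoothness of $u,v$ furnished by the strong maximum principle together with the stated regularity of $a_i,P,Q,m$.
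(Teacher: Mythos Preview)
Your proof is correct, and the opening moves---verifying that $(\alpha P,\beta Q)$ is a coexistence equilibrium and invoking Lemmas~\ref{zero_equilibrium}, \ref{semi_p1}, \ref{semi_p2} to destabilise all boundary equilibria---match the paper exactly. The divergence is in how uniqueness of the interior equilibrium is obtained.

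The paper splits into two cases. If $u_s+v_s\equiv m$, the reaction terms in \eqref{coexist_system} vanish and a maximum-principle argument forces $u_s=\gamma P$, $v_s=\delta Q$; linear independence of $(P,Q)$ then pins down $(\gamma,\delta)=(\alpha,\beta)$. If $u_s+v_s\not\equiv m$, the paper considers the eigenvalue problem \eqref{eig2} for the first equation linearised at $(u_s,v_s)$, takes $\psi=P$ as a test function, and uses Lemmas~\ref{Lsteady1abc} and~\ref{Lsteady2abc} to show the principal eigenvalue is strictly positive---contradicting the fact that $u_s$ itself is a positive eigenfunction with eigenvalue zero.

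Your Lyapunov functional compresses both cases into a single dissipation identity: the vanishing of $\dot V$ at a putative steady state forces simultaneously $u_s+v_s\equiv m$, $u_s/P$ constant, and $v_s/Q$ constant, and linear independence finishes. This is cleaner and more self-contained, and---as you note---if one is willing to supply the parabolic compactness needed for LaSalle, it delivers global convergence outright without appealing to Lemma~\ref{equil_charac} at all. The paper's route, by contrast, stays entirely within the spectral/monotone-systems framework already built in Section~\ref{sec:auxil} and avoids the additional PDE dynamical-systems overhead; it also reuses the integral inequalities of Lemmas~\ref{Lsteady1abc}--\ref{Lsteady2abc}, which are needed elsewhere in the paper anyway.
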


\begin{Th}\label{stability_exclusion_1}
Suppose that 
$(P,Q)$ is a  linearly independent on $\Omega$ pair of functions, and $m_2(x) \equiv m_1(x) \equiv m(x) = \alpha P$ on $\Omega$
for some $\alpha>0$. 
Then all solutions of \eqref{ssystem} with non-negative non-trivial initial conditions converge to the semi-trivial equilibrium $(\alpha P,0)$.
\end{Th}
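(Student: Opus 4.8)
The plan is to apply the equilibrium-classification result of Lemma~\ref{equil_charac}, so that it suffices to enumerate all stationary solutions of \eqref{ssystem} and show that every one of them other than $(\alpha P,0)$ is unstable. First I would identify the relevant semi-trivial equilibrium. Substituting $u^{\ast}=\alpha P$ into \eqref{eq_u}, the ratio $u^{\ast}/P\equiv\alpha$ is constant, so the dispersal term $\nabla\cdot[a_1\nabla(u^{\ast}/P)]$ vanishes identically, while the reaction term reduces to $\alpha P(\alpha P-\alpha P)=0$, and the Neumann condition $\partial_n(u^{\ast}/P)=\partial_n\alpha=0$ holds automatically. Hence $u^{\ast}=\alpha P$ solves the single-species problem \eqref{eq_u}; by the known uniqueness of the positive steady state for this scalar directed-diffusion logistic equation, $(\alpha P,0)$ is the unique semi-trivial equilibrium with nonzero first component, and it is precisely the ideal free distribution for the $u$-species.

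Next I would account for all remaining equilibria. The trivial equilibrium $(0,0)$ is a repeller, hence unstable, by Lemma~\ref{zero_equilibrium}. Since by hypothesis $m_1\equiv m_2\equiv m\equiv\alpha P$ with $P$ and $Q$ linearly independent on $\Omega$, the situation falls exactly under Lemma~\ref{semi_proportional}. That lemma yields two facts at once: the other semi-trivial equilibrium $(0,v^{\ast})$ is unstable, and \eqref{ssystem} admits no positive coexistence equilibrium. Consequently the complete list of stationary solutions is $(0,0)$, $(0,v^{\ast})$ and $(\alpha P,0)$.

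Since \eqref{ssystem} is a monotone dynamical system with $(0,0)$ a repeller, and every equilibrium except $(\alpha P,0)$ has been shown to be unstable, Lemma~\ref{equil_charac} implies that $(\alpha P,0)$ is the global attractor for all solutions with non-negative, non-trivial initial data, which is precisely the assertion of Theorem~\ref{stability_exclusion_1}.

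The genuinely hard part is already packaged inside Lemma~\ref{semi_proportional}, namely the nonexistence of a coexistence state when the common carrying capacity is proportional to only one dispersal strategy; the decisive mechanism is that $P$ realizes $m$ as an exact ideal free distribution whereas $Q$ cannot, which destabilizes the $v$-only state. The main point to watch in the present argument is completeness: one must be certain that no further equilibria are overlooked and that the hypotheses ($P,Q$ linearly independent, $\alpha>0$, $m\equiv\alpha P$) line up exactly with those of Lemma~\ref{semi_proportional} and Lemma~\ref{equil_charac}, so that the classification applies without gap. Granting those earlier results, the theorem follows as a near-immediate corollary.
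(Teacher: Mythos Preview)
Your proposal is correct and follows essentially the same approach as the paper: invoke Lemma~\ref{semi_proportional} to rule out coexistence and destabilize $(0,v^{\ast})$, then apply Lemma~\ref{equil_charac}. You add the explicit verification that $u^{\ast}=\alpha P$ solves \eqref{eq_u} and the check that $(0,0)$ is a repeller via Lemma~\ref{zero_equilibrium}, which the paper leaves implicit; otherwise the arguments coincide.
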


\subsection{Trimming to coexistence - harvesting or resource management}
\label{subsec:trimming}

Next, let us return to original harvested system \eqref{system} and explore how to design a harvesting strategy that preserves biodiversity by ensuring that both species survive. In \cite{Ilmer}, the authors looked at the influence of harvesting on competition outcomes for two species, each of which adopted carrying capacity driven diffusion. The authors derived some estimates of the necessary relationship between the species' harvesting rates, for the purpose of either promoting coexistence, or promoting competitive exclusion. However, the setting was different to the one described by \eqref{system}. First of all, in \eqref{system}, the carrying capacity and the spatially dependent intrinsic growth rate are assumed to be equal. This choice was made in this paper to decrease the number of model parameters, but it was not made in \cite{Ilmer}. One choice that was made in \cite{Ilmer} however, that was not made in this paper, was to make the harvesting rate proportional to the growth rate, once again to decrease the number of model parameters. This resulted in the harvesting rates having similar spatial profile, with the only difference being the magnitude. In the following section, we aim to give more general results on the range of possible $E_1, E_2$ (not assumed to be proportional) that can be chosen such that both populations coexist.

\begin{Th}\label{stability_coexistence_2}
Suppose that $P,Q,K_1,K_2  > 0$, where $P$ and $Q$ satisfy \eqref{normal} and are not identical. Then there exist $E_1$ and $E_2$ such that all solutions of \eqref{system} with non-negative non-trivial initial conditions converge to a coexistence equilibrium $(u_s,v_s)$.
Moreover, for any prescribed $\mu > 0$, we can find $E_1(x;\mu)$, $E_2(x;\mu)$ such that 
\begin{equation}
\label{ratio}
\int_{\Omega} u_s \, dx = \mu \int_{\Omega} v_s \, dx.
\end{equation}
\end{Th}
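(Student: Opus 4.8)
```latex
\textbf{Proof proposal for Theorem~\ref{stability_coexistence_2}.}

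The plan is to reduce the harvested system \eqref{system} to the form \eqref{ssystem} via the substitution $m_i = K_i - E_i$ from \eqref{no_harvest}, and then to \emph{engineer} the after-harvesting carrying capacities $m_1$ and $m_2$ so that Theorem~\ref{stability_coexistence_1} applies directly. Indeed, Theorem~\ref{stability_coexistence_1} tells us that whenever $m_1 \equiv m_2 \equiv m \equiv \alpha P + \beta Q$ with $\alpha,\beta>0$ and the three pairs $(P,Q)$, $(P,m)$, $(Q,m)$ linearly independent, every solution with non-negative non-trivial initial data converges to the coexistence equilibrium $(\alpha P, \beta Q)$. So the strategy is to \emph{choose} $m := \alpha P + \beta Q$ for suitable positive constants $\alpha,\beta$, and then set
\begin{equation}
\label{choose_E}
E_1(x) := K_1(x) - m(x), \qquad E_2(x) := K_2(x) - m(x).
\end{equation}
With this choice the transformed system is exactly the one handled by Theorem~\ref{stability_coexistence_1}, so the coexistence conclusion follows, and the resulting equilibrium is $(u_s,v_s) = (\alpha P, \beta Q)$.

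First I would verify that \eqref{choose_E} defines an \emph{admissible} harvesting policy. The regularity is inherited from the hypotheses ($P,Q \in C^{2+\alpha}$ and $K_i \in C^{\alpha}$, so $E_i \in C^{\alpha}$). The substantive constraint is positivity of the harvesting effort, $E_i(x) \ge 0$, i.e. $m(x) = \alpha P(x) + \beta Q(x) \le \min\{K_1(x), K_2(x)\}$ pointwise on $\overline{\Omega}$. Since $P,Q,K_1,K_2$ are continuous and positive on the compact set $\overline{\Omega}$, the quantities $\max_{\overline{\Omega}} P$, $\max_{\overline{\Omega}} Q$ and $\min_{\overline{\Omega}}\min\{K_1,K_2\} =: \kappa > 0$ are all finite and positive; choosing $\alpha,\beta$ small enough (for instance $\alpha \max_{\overline{\Omega}} P + \beta \max_{\overline{\Omega}} Q \le \kappa$) guarantees $0 \le E_i$ on $\overline{\Omega}$. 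I would also confirm the linear-independence hypotheses of Theorem~\ref{stability_coexistence_1}: $(P,Q)$ is linearly independent because $P$ and $Q$ are assumed not identical (and both normalized by \eqref{normal}, hence not scalar multiples of one another), while $(P,m)$ and $(Q,m)$ are linearly independent whenever $\alpha,\beta>0$, since $m=\alpha P+\beta Q$ is a genuine combination of two independent functions and therefore proportional to neither.

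For the second, quantitative claim \eqref{ratio}, I would exploit the explicit form of the limiting equilibrium. Because the coexistence equilibrium is $(u_s,v_s)=(\alpha P,\beta Q)$ and $P,Q$ are normalized so that $\int_\Omega P\,dx = \int_\Omega Q\,dx = 1$, we get $\int_\Omega u_s\,dx = \alpha$ and $\int_\Omega v_s\,dx = \beta$; hence \eqref{ratio} reduces to the elementary requirement $\alpha = \mu\beta$. Thus, given any prescribed $\mu>0$, I would take $\beta>0$ small and $\alpha=\mu\beta$, shrinking $\beta$ further if necessary so that the positivity bound $\alpha \max_{\overline{\Omega}} P + \beta \max_{\overline{\Omega}} Q \le \kappa$ from the previous step still holds (this is always possible, since $\alpha=\mu\beta\to 0$ as $\beta\to 0$). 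Writing $\alpha=\alpha(\mu)$ and $\beta=\beta(\mu)$ and feeding them back through \eqref{choose_E} produces the advertised $E_1(x;\mu)$, $E_2(x;\mu)$.

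The main obstacle is not the convergence analysis, which is entirely delegated to Theorem~\ref{stability_coexistence_1}, but rather the simultaneous satisfaction of two competing constraints on the pair $(\alpha,\beta)$: the pointwise positivity of the harvesting rates and the prescribed ratio $\alpha=\mu\beta$. The delicate point to check is that these are compatible for \emph{every} $\mu>0$ and for \emph{arbitrary} admissible $K_1,K_2$; this is what forces the rescaling argument in the last step, where one drives $\beta\to 0$ along the ray $\alpha=\mu\beta$ until the combination $m=\alpha P+\beta Q$ fits underneath $\min\{K_1,K_2\}$. One should also record that $m>0$ automatically (as a positive combination of positive functions), so the transformed carrying capacities are genuine positive after-harvesting capacities and all hypotheses of Theorem~\ref{stability_coexistence_1} are in force.
```
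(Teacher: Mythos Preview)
Your proposal is correct and follows essentially the same approach as the paper: choose $m=\alpha P+\beta Q$ with $\alpha=\mu\beta$, set $E_i=K_i-m$, shrink the coefficients until $E_i\ge 0$, and invoke Theorem~\ref{stability_coexistence_1} together with the normalization \eqref{normal} to obtain the ratio \eqref{ratio}. The only cosmetic difference is that the paper parametrizes the positivity bound via $\zeta=\min_i\min_{\bar\Omega}(K_i/M)$ with $M=\tfrac{\mu}{\mu+1}P+\tfrac{1}{\mu+1}Q$ (which pinpoints the largest admissible scaling), whereas you use the cruder but equally sufficient bound $\alpha\max P+\beta\max Q\le\min\min\{K_1,K_2\}$; you are also slightly more explicit in checking the linear-independence hypotheses of Theorem~\ref{stability_coexistence_1}.
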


\begin{remark}
The result of Theorem~\ref{stability_coexistence_2} follows from  Theorem~\ref{stability_coexistence_1}, once we can  choose $\alpha>0$,
$\beta>0$ such that $\alpha/\beta = \mu$ and the harvesting efforts $E_1=K_1 - \left( \alpha P + \beta Q \right)$, $E_2= K_2 -  (\alpha P + \beta  Q )$
are non-negative functions on $\Omega$.
In the proof  of Theorem~\ref{stability_coexistence_2} , we choose harvesting such that it requires the smallest amount of intervention possible. Often in discussions of harvesting, the question of the maximum amount of harvesting allowed arises. To answer this question we would first need to set bound such that below that bound, the species is effectively extinct. This bound should be derived from ecological and species data.
\end{remark}

\subsection{Does mimicking a resident's dispersal assist invading species?}
\label{subsec:mimicking}

If a resident population exists in relative isolation for a substantial time, because of the global attractivity of the unique positive stationary solution, its eventual dispersal is aligned with the solution $u^{\ast}$ of \eqref{eq_u}. Can an invading species benefit from mimicking the resident's spatial distribution, while also having some advantage in the carrying capacity? The following statement gives an affirmative answer to this question.

\begin{Th}
\label{mimic_invasion}
If an invasive species $v$ chooses its dispersal strategy to be aligned with the stationary solution for the host species $u$, i.e. $Q = u^{\ast}$, where $u^{\ast}$ is the solution of \eqref{eq_u}, and its carrying capacity $m_2 \geq m_1$ on $\Omega$, while exceeding it on some subdomain  $\Omega_1 \subseteq \Omega$, the successful invasion is guaranteed.
\end{Th}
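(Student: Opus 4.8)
The plan is to read ``successful invasion'' in the standard monotone-systems sense: the invader $v$ establishes precisely when the semi-trivial, resident-only equilibrium $(u^{\ast},0)$ fails to be stable. Since \eqref{ssystem} is a competitive, hence monotone, dynamical system whose zero state is a repeller (Lemma~\ref{zero_equilibrium}), linear instability of $(u^{\ast},0)$ disqualifies it as the global attractor, and by the equilibrium classification of Lemma~\ref{equil_charac} together with the usual uniform-persistence results for such systems, $v$ cannot be driven to extinction. So the entire theorem reduces to showing that $(u^{\ast},0)$ is unstable, in the same spirit as the semi-trivial instability results Lemma~\ref{semi_p1} and Lemma~\ref{semi_p2}, but now under the mimicking hypothesis $Q=u^{\ast}$ rather than a linear-combination structure on $m_1$.

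Linearizing the $v$-equation of \eqref{ssystem} about $(u^{\ast},0)$ produces the operator $\varphi\mapsto\nabla\cdot[a_2\nabla(\varphi/Q)]+(m_2-u^{\ast})\varphi$ with $\partial_n(\varphi/Q)=0$. Because the full linearization at a semi-trivial state is block-triangular and the resident block has non-positive principal eigenvalue (the stability of $u^{\ast}$ for the single-species problem), instability of $(u^{\ast},0)$ is equivalent to positivity of the principal eigenvalue $\lambda_1$ of
\begin{equation*}
\nabla\cdot\left[a_2\nabla\left(\frac{\varphi}{Q}\right)\right]+(m_2-u^{\ast})\varphi=\lambda\,\varphi,\qquad \frac{\partial}{\partial n}\left(\frac{\varphi}{Q}\right)=0.
\end{equation*}
To read off the sign of $\lambda_1$ I would substitute $\psi=\varphi/Q$, which converts the problem into the self-adjoint weighted eigenvalue problem $\nabla\cdot[a_2\nabla\psi]+(m_2-u^{\ast})Q\,\psi=\lambda\,Q\,\psi$ with $\partial_n\psi=0$, whose principal eigenvalue carries the Rayleigh characterization
\begin{equation*}
\lambda_1=\sup_{\psi\neq 0}\frac{\displaystyle\int_\Omega\left[-a_2|\nabla\psi|^2+(m_2-u^{\ast})Q\,\psi^2\right]dx}{\displaystyle\int_\Omega Q\,\psi^2\,dx}.
\end{equation*}
Testing with $\psi\equiv 1$ and using the hypothesis $Q=u^{\ast}$ gives $\lambda_1\ge\left(\int_\Omega(m_2-u^{\ast})u^{\ast}\,dx\right)/\int_\Omega u^{\ast}\,dx$, so it suffices to prove $\int_\Omega(m_2-u^{\ast})u^{\ast}\,dx>0$.

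The decisive computation is then the splitting
\begin{equation*}
\int_\Omega(m_2-u^{\ast})u^{\ast}\,dx=\int_\Omega(m_1-u^{\ast})u^{\ast}\,dx+\int_\Omega(m_2-m_1)u^{\ast}\,dx.
\end{equation*}
Integrating the stationary equation \eqref{eq_u} over $\Omega$ and invoking the no-flux boundary condition makes the divergence term vanish, yielding the identity $\int_\Omega(m_1-u^{\ast})u^{\ast}\,dx=0$; the remaining term is strictly positive since $m_2\ge m_1$ on $\Omega$ with $m_2>m_1$ on the subdomain $\Omega_1$ of positive measure while $u^{\ast}>0$ throughout. Hence $\lambda_1>0$, $(u^{\ast},0)$ is unstable, and invasion is guaranteed.

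The part demanding the most care is the reduction step: rigorously arguing that positivity of this single cross-invasion eigenvalue of the block-triangular linearization genuinely forces guaranteed, i.e.\ uniformly persistent, invasion rather than mere transient linear growth of $v$. This is exactly where the monotone structure of \eqref{ssystem} and Lemma~\ref{equil_charac} carry the weight. By contrast the eigenvalue sign itself is clean, and the essential point is that the mimicking choice $Q=u^{\ast}$ is precisely what turns the Rayleigh weight into $u^{\ast}$ and lets the conservation identity coming from \eqref{eq_u} annihilate the $m_1$-contribution, leaving only the strictly positive carrying-capacity advantage.
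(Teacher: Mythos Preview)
Your argument is correct and matches the paper's proof essentially line for line: both show instability of $(u^{\ast},0)$ by taking the variational characterization of the principal eigenvalue of the linearized $v$-equation, testing with the function $Q=u^{\ast}$ (your $\psi\equiv 1$ after the substitution $\psi=\varphi/Q$ is exactly the paper's test function), and then splitting $\int_\Omega (m_2-u^{\ast})u^{\ast}\,dx$ into the vanishing $m_1$-piece from integrating \eqref{eq_u} plus the strictly positive $(m_2-m_1)$-piece. Your closing remark about the monotone-systems reduction is also precisely how the paper concludes.
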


\begin{remark}
When $m_1>m_2$, Example \ref{ex:mimick} will show that we can still observe successive invasion for certain choices of carrying capacity where the resources are highly concentrated in one area.
\end{remark}

\subsection{How do slight changes in a species' dispersal strategy influence competition outcomes?}
\label{subsec:deviating}

Both harvesting and dispersal strategies can be subject to some small perturbations. To highlight the specific impact of dispersal strategy on competition outcomes, we assume that the carrying capacities of the two species coincide, and consider a slight modification of \eqref{ssystem}, where the second species chooses dispersal strategy $Q_1$, which is slightly different from $Q$:
\begin{equation}
\label{system2}
\begin{cases}
\displaystyle \frac{\displaystyle \partial u}{\displaystyle \partial t}
=   \displaystyle  \nabla \cdot \left[ a_1(x)  \nabla \left( \frac{\displaystyle u(t,x)}{\displaystyle P(x)}\right) \right]
 \displaystyle + u(t,x) \left(m(x)-  u(t,x) - v(t,x) \right), \vspace{1mm}
\\
\displaystyle \frac{\displaystyle \partial v}{\displaystyle \partial t}
=   \nabla \cdot \left[ a_2(x)  \nabla \left( \frac{\displaystyle v(t,x)}{\displaystyle Q_1(x)}\right) \right]  \displaystyle + v(t,x) \left(m(x) - u(t,x) - v(t,x) \right),
\\
t>0,\;x\in{\Omega},
\vspace{2mm}
\\
\displaystyle \pdv{}{n}\bigg(\frac{u}{P} \bigg) = \pdv{}{n}\bigg(\frac{v}{Q_1} \bigg) = 0,~x\in
{\partial}{\Omega}.
\end{cases}
\end{equation}  
Here $Q_1>0$ satisfies the same smoothness conditions as $Q$, and is normalized, so as to focus on strategy and not magnitude
\begin{equation}
\label{normal_add}
 \int_{\Omega} Q_1(x) ~dx =1.
\end{equation}

\begin{Lemma}\label{semi_perturbation}
Suppose that \eqref{normal} and \eqref{normal_add} hold, $m_1\equiv m_2 \equiv m$, $\displaystyle \nabla \cdot \left[ a_1(x)  \nabla \left( \frac{\displaystyle m(x)}{\displaystyle P(x)}\right) \right] \not\equiv 0$ on $\Omega$, 
and $m(x)\equiv \alpha P+\beta Q$ for some $\alpha > 0$, $\beta>0$. 
Then there exists $\varepsilon>0$ such that $\displaystyle \int_{\Omega} \left| Q(x)-Q_1(x) \right| ~dx < \varepsilon$ guarantees that survival of the second species
in \eqref{system2}  is preserved under this strategy perturbation.
\end{Lemma}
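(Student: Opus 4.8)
The plan is to reduce the survival claim to the linearized instability of the semi-trivial equilibrium $(u^{\ast},0)$ of \eqref{system2}, and then to show that this instability is robust under $L^1$-small perturbations of $Q$ into $Q_1$. A crucial preliminary observation is that $u^{\ast}$ depends only on $P$, $a_1$ and $m$ through \eqref{eq_u}, and is therefore \emph{unchanged} when $Q$ is replaced by $Q_1$; only the diffusion operator acting on the $v$-component feels the perturbation. Survival of $v$ will follow once the principal eigenvalue $\lambda_1(Q_1)$ of the $v$-linearization
\[
\nabla\cdot\Bigl[a_2\nabla\bigl(\phi/Q_1\bigr)\Bigr] + \bigl(m-u^{\ast}\bigr)\phi = \lambda\phi, \qquad \frac{\partial}{\partial n}\bigl(\phi/Q_1\bigr)=0,
\]
is shown to be positive, since a positive principal eigenvalue makes $(u^{\ast},0)$ unstable in the $v$-direction, and the monotone structure of \eqref{system2} (as exploited in Lemma~\ref{equil_charac}) then yields persistence of the second species.

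To obtain a usable criterion for the sign of $\lambda_1(Q_1)$, I would substitute $\phi = Q_1\psi$, turning the eigenvalue problem into the self-adjoint weighted form $\nabla\cdot[a_2\nabla\psi] + (m-u^{\ast})Q_1\psi = \lambda Q_1\psi$ with $\partial\psi/\partial n=0$. The associated Rayleigh quotient, tested against the constant function $\psi\equiv 1$ and using the normalization $\int_\Omega Q_1\,dx = 1$ from \eqref{normal_add}, gives the lower bound
\[
\lambda_1(Q_1) \;\geq\; \int_\Omega \bigl(m-u^{\ast}\bigr)Q_1\,dx .
\]
Thus it suffices to prove that this integral stays strictly positive for $Q_1$ close to $Q$.

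The quantitative heart is to show that in the unperturbed case $\delta:=\int_\Omega Q\,(m-u^{\ast})\,dx>0$; this is essentially the content behind Lemma~\ref{semi_p1}. I would combine Lemma~\ref{Lsteady2}, which gives $\int_\Omega P(u^{\ast}-m)\,dx>0$ because $P$ and $m$ are linearly independent (a consequence of $\nabla\cdot[a_1\nabla(m/P)]\not\equiv0$), with Lemma~\ref{Lsteady1}, which gives $\int_\Omega m(m-u^{\ast})\,dx>0$ under the same nondegeneracy hypothesis. Decomposing $m=\alpha P+\beta Q$ in the latter,
\[
\int_\Omega m\,(u^{\ast}-m)\,dx = \alpha\int_\Omega P(u^{\ast}-m)\,dx + \beta\int_\Omega Q(u^{\ast}-m)\,dx < 0,
\]
and since $\alpha>0$ forces the first term to be positive, the $\beta>0$ term must be negative, which is exactly $\int_\Omega Q(m-u^{\ast})\,dx>0$.

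Finally, I would close the perturbation argument: since $m-u^{\ast}$ is a fixed continuous, hence bounded, function on $\overline{\Omega}$,
\[
\Bigl|\int_\Omega(m-u^{\ast})Q_1\,dx - \delta\Bigr| \leq \|m-u^{\ast}\|_{L^\infty(\Omega)}\int_\Omega|Q-Q_1|\,dx < \|m-u^{\ast}\|_{L^\infty(\Omega)}\,\varepsilon ,
\]
so choosing $\varepsilon < \delta/\bigl(1+\|m-u^{\ast}\|_{L^\infty(\Omega)}\bigr)$ keeps $\int_\Omega(m-u^{\ast})Q_1\,dx>0$, whence $\lambda_1(Q_1)>0$ and $v$ survives. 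The main obstacle I anticipate is not the perturbation estimate, which is elementary, but rather cleanly establishing the strict sign $\delta>0$ with the correct orientation of all the auxiliary inequalities, and rigorously upgrading the linear instability $\lambda_1(Q_1)>0$ to genuine (uniform) persistence of $v$ within the monotone dynamical systems framework rather than merely local repulsion from $(u^{\ast},0)$.
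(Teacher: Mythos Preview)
Your proposal is correct and follows essentially the same route as the paper: linearize at $(u^{\ast},0)$, test the variational characterization of the principal eigenvalue with $\phi=Q_1$ (equivalently your $\psi\equiv 1$) to get $\lambda_1(Q_1)\ge\int_\Omega Q_1(m-u^{\ast})\,dx$, establish $\int_\Omega Q(m-u^{\ast})\,dx>0$ via Lemmas~\ref{Lsteady2} and~\ref{Lsteady1} using $m=\alpha P+\beta Q$, and then absorb the $L^1$ perturbation by the $L^\infty$ bound on $m-u^{\ast}$. The only cosmetic differences are your explicit change of variables $\phi=Q_1\psi$ and your indirect deduction of $\delta>0$ (from $\int m(u^{\ast}-m)<0$ with the $P$-term positive) versus the paper's direct substitution $\beta Q=m-\alpha P$; the paper also records the explicit threshold $\varepsilon_d=\gamma/\|m-u^{\ast}\|_{L^\infty}$ and separately checks $m\not\equiv u^{\ast}$, which your ``$+1$'' in the denominator sidesteps.
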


Now we can discuss two possible scenarios when deviation from a cooperative dispersal strategy can guarantee coexistence or provide competitive exclusion of the other species with the unchanged strategy.

\begin{Th}\label{stability_coexistence_3}
Suppose that \eqref{normal} and \eqref{normal_add} hold, 
$P$ and $m$  are  linearly independent on $\Omega$ functions, 
and $m(x)\equiv \alpha P+\beta Q$ for some $\alpha > 0$, $\beta>0$. Then  
\begin{description}
\item[(1)] 
There exists a deviation bound  $\varepsilon_d >0$ such that for all $Q_1>0$, once the inequality
\begin{equation}
\label{perturbation_strategy}
 \int \limits_\Omega    |Q(x)-Q_1(x)| < \varepsilon_d
\end{equation}
holds, the second species guarantees that its survival sustains in \eqref{system2} with the deviating dispersal strategy. \\
\item[(2)] 
There is a specific deviation of the strategy (aligned with the ideal free distribution) such that the second species can guarantee competitive exclusion of the first species.
\end{description}
\end{Th}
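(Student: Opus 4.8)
The plan is to treat the two parts separately, reducing each to the stability type of a semi-trivial equilibrium and then invoking the monotone-system classification of Lemma~\ref{equil_charac} together with the repelling property of the origin (Lemma~\ref{zero_equilibrium}). Throughout, $u^{\ast}$ denotes the positive solution of \eqref{eq_u} with $m_1=m$, and $(u^{\ast},0)$, $(0,v^{\ast})$ are the semi-trivial steady states of \eqref{system2}. Both parts reduce to deciding when $(u^{\ast},0)$ is invadable by the second species.

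For part (1), I first fix the unperturbed strategy $Q_1=Q$. Since $m\equiv\alpha P+\beta Q$ with $\alpha,\beta>0$ and $P,m$ linearly independent, Lemma~\ref{semi_p1} (applied with $m_1=m_2=m$, so that $m_2\geq m_1$ holds with equality) shows that $(u^{\ast},0)$ is unstable. Instability is equivalent to positivity of the principal eigenvalue $\lambda_1(Q_1)$ of the operator linearising the second equation of \eqref{system2} about $v=0$ with $u=u^{\ast}$, namely $\phi\mapsto\nabla\cdot[a_2\nabla(\phi/Q_1)]+(m-u^{\ast})\phi$ under the condition $\partial_n(\phi/Q_1)=0$. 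The key device is the substitution $\psi=\phi/Q_1$, which recasts the eigenvalue problem as the weighted problem $\nabla\cdot[a_2\nabla\psi]+(m-u^{\ast})Q_1\psi=\lambda Q_1\psi$, $\partial_n\psi=0$. This moves $Q_1$ out of the principal part, so that $Q_1$ enters only as a bounded zeroth-order weight; using the $Q$-principal eigenfunction $\psi_Q$ as a test function in the associated Rayleigh quotient gives
\[
\lambda_1(Q_1)\;\geq\;\frac{\int_\Omega\!\big[-a_2|\nabla\psi_Q|^2+(m-u^{\ast})Q_1\psi_Q^2\big]\,dx}{\int_\Omega Q_1\psi_Q^2\,dx}.
\]
Because $\psi_Q^2$ and $(m-u^{\ast})\psi_Q^2$ are bounded continuous functions, the right-hand side converges to $\lambda_1(Q)>0$ as $\int_\Omega|Q-Q_1|\,dx\to0$. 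Hence there is $\varepsilon_d>0$ such that \eqref{perturbation_strategy} forces $\lambda_1(Q_1)>0$, i.e. $(u^{\ast},0)$ remains unstable. Uniform persistence of the second species then follows from standard persistence theory for the competitive (monotone) system \eqref{system2}: on the $v=0$ face the only equilibria are $(0,0)$ and $(u^{\ast},0)$, both repelling in the $v$-direction, so $\liminf_{t\to\infty}\|v(t,\cdot)\|>0$. This reconstructs (and slightly repackages) Lemma~\ref{semi_perturbation}.

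For part (2), the idea is to let the second species adopt the ideal free distribution associated with the shared carrying capacity $m$, i.e. to take $Q_1$ proportional to $m$. Concretely, set $Q_1=m/(\alpha+\beta)=(\alpha P+\beta Q)/(\alpha+\beta)$; the normalisation \eqref{normal_add} holds because $\int_\Omega m\,dx=\alpha+\beta$ by \eqref{normal}, and $P,Q_1$ are linearly independent since $\beta>0$ and $P,Q$ are linearly independent (the latter following from linear independence of $P$ and $m$). With this choice $m\equiv(\alpha+\beta)Q_1$, so the second species is ideal free and $v^{\ast}=m$ solves \eqref{eq_v} with $Q$ replaced by $Q_1$. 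Applying Theorem~\ref{stability_exclusion_1} with the roles of $u$ and $v$ interchanged --- the winning species being the one whose strategy is proportional to $m$ --- yields global convergence of all non-trivial solutions to $(0,v^{\ast})=(0,m)$, which is exactly competitive exclusion of the first species.

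The main obstacle is concentrated in part (1): showing that instability of $(u^{\ast},0)$ survives a perturbation measured only in the weak $L^1$ norm of \eqref{perturbation_strategy}, even though the perturbed strategy $Q_1$ sits inside the diffusion operator. The reformulation $\psi=\phi/Q_1$ is what makes this tractable, since it confines the $Q_1$-dependence to pointwise multiplicative weights against which $L^1$-convergence is exactly strong enough; one must still verify that the denominator $\int_\Omega Q_1\psi_Q^2\,dx$ stays bounded away from zero and that the competitive-system persistence machinery (acyclicity of the boundary flow together with instability of the semi-trivial state) applies in this PDE setting. Part (2), by contrast, is essentially a bookkeeping application of the exclusion result Theorem~\ref{stability_exclusion_1} once the ideal-free choice of $Q_1$ is identified.
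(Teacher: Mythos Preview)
Your proposal is correct; both parts land on the same endpoint as the paper. Part~(2) is identical in spirit: the paper also chooses $Q_1$ proportional to $m$ and invokes Theorem~\ref{stability_exclusion_1} with the roles swapped.

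For Part~(1) your route differs from the paper's in a way worth noting. The paper (via its proof of Lemma~\ref{semi_perturbation}) does not use the unperturbed principal eigenfunction $\psi_Q$ at all; instead it plugs the even simpler test function $\psi=Q_1$ into the Rayleigh quotient, so that $\nabla(\psi/Q_1)\equiv 0$ and one obtains directly
\[
\sigma_1 \;\geq\; \int_\Omega Q_1(m-u^{\ast})\,dx
\;=\; \int_\Omega (Q_1-Q)(m-u^{\ast})\,dx \;+\; \int_\Omega Q(m-u^{\ast})\,dx.
\]
The second integral is rewritten using $Q=(m-\alpha P)/\beta$ and shown to equal a positive constant $\gamma$ by Lemmas~\ref{Lsteady1} and~\ref{Lsteady2}, while the first is bounded in absolute value by $\|m-u^{\ast}\|_\infty\int_\Omega|Q_1-Q|\,dx$. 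This produces the \emph{explicit} threshold $\varepsilon_d=\gamma/\|m-u^{\ast}\|_\infty$, which the paper later uses quantitatively in Example~\ref{ex:changing_Q1}. Your continuity argument (fixing $\psi_Q$ and letting the Rayleigh quotient vary with $Q_1$) is a perfectly valid and standard alternative, and it has the conceptual advantage of making transparent \emph{why} $L^1$-closeness suffices despite $Q_1$ sitting inside the diffusion operator; its cost is that it only yields an existential $\varepsilon_d$ and requires appealing to regularity/boundedness of $\psi_Q$, whereas the paper's choice $\psi=Q_1$ sidesteps that entirely.
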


\begin{remark}
\label{rem_exclusion}
Both estimates in Theorem~\ref{stability_coexistence_3} are sufficient. We'll present numerical simulation to illustrate how close the strategy should be to the carrying capacity profile to guarantee competitive exclusion.
There is a very specific winning strategy for the second species. As the deviation from this optimal dispersal grows, other scenarios can be observed, up to competitive exclusion of the second species.
\end{remark}

Theorem~\ref{stability_coexistence_3} only states that, once a non-perturbed system has an ideal free pair as a solution, a slight change of the dispersal strategy of the second species cannot bring it to extinction. The following statement considers persistence of the first species, if the deviation in the dispersal strategy does not lead to significant enough change of the solution to \eqref{eq_v}.

\begin{Lemma}
\label{lemma_new_deviation}
Suppose that \eqref{normal} and \eqref{normal_add} hold, $m_1\equiv m_2 \equiv m$, $\displaystyle \nabla \cdot \left[ a_2(x)  \nabla \left( \frac{\displaystyle m(x)}{\displaystyle Q(x)}\right) \right] \not\equiv 0$ on $\Omega$, 
and $m(x)\equiv \alpha P+\beta Q$ for some $\alpha > 0$, $\beta>0$. Let $v^*$ and $v_1^*$ be solutions of problem \eqref{eq_v} and \eqref{eq_v} where $Q$ is substituted with $Q_1$, respectively.
Then there exists $\varepsilon>0$ such that $\displaystyle  \left| v^*(x)-v_1^*(x) \right| < \varepsilon$ for any $x \in \Omega$ guarantees  survival of the first species
in \eqref{system2}.
\end{Lemma}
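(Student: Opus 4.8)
The plan is to reduce survival of the first species to linear instability of the semi-trivial equilibrium $(0,v_1^*)$ of \eqref{system2}, where $v_1^*$ is the single-species steady state carried by the perturbed strategy $Q_1$. Since \eqref{system2} is a monotone competitive system, once $(0,v_1^*)$ is unstable in the direction of the invading species $u$, the standard persistence theory underlying Lemma~\ref{equil_charac} (see \citep{CC,Hsu,Pao}) guarantees that $u$ is uniformly persistent, i.e.\ the first species survives, regardless of the eventual fate of the second. Instability of $(0,v_1^*)$ is equivalent to positivity of the principal eigenvalue $\sigma_1$ of the $u$-equation linearised at $(0,v_1^*)$,
\begin{equation*}
\sigma_1\Phi = \nabla\cdot\big[a_1\nabla(\Phi/P)\big] + (m-v_1^*)\Phi, \qquad \frac{\partial}{\partial n}\Big(\frac{\Phi}{P}\Big)=0 \ \text{ on } \partial\Omega .
\end{equation*}

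After the substitution $\Phi = P\eta$ this becomes a weighted self-adjoint problem, so $\sigma_1$ admits the Rayleigh-quotient characterisation
\begin{equation*}
\sigma_1 = \max_{\eta\neq 0}\frac{-\int_\Omega a_1|\nabla\eta|^2\,dx + \int_\Omega (m-v_1^*)P\eta^2\,dx}{\int_\Omega P\eta^2\,dx}.
\end{equation*}
Testing with the constant $\eta\equiv 1$ and using $\int_\Omega P\,dx=1$ from \eqref{normal} gives the lower bound $\sigma_1 \ge \int_\Omega (m-v_1^*)P\,dx$, so it suffices to prove that this integral is positive once $Q_1$ is close enough to $Q$.

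I would first establish the unperturbed positivity $c_0:=\int_\Omega (m-v^*)P\,dx>0$. Writing $P=(m-\beta Q)/\alpha$ from $m\equiv\alpha P+\beta Q$, I split
\begin{equation*}
\alpha\int_\Omega (m-v^*)P\,dx = \int_\Omega m\,(m-v^*)\,dx + \beta\int_\Omega Q\,(v^*-m)\,dx .
\end{equation*}
The hypothesis $\nabla\cdot[a_2\nabla(m/Q)]\not\equiv 0$ lets me apply the $v^*$-version of Lemma~\ref{Lsteady1} to get $\int_\Omega m(m-v^*)\,dx>0$; the same hypothesis forces $m/Q$ to be non-constant, hence $Q$ and $m$ are linearly independent (recall $\alpha>0$), and the $v^*$-version of Lemma~\ref{Lsteady2} yields $\int_\Omega Q(v^*-m)\,dx>0$. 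Both terms are strictly positive, so $c_0>0$.

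Finally I would transfer positivity to the perturbed state. Because $|v^*(x)-v_1^*(x)|<\varepsilon$ on $\Omega$ and $\int_\Omega P\,dx=1$,
\begin{equation*}
\int_\Omega (m-v_1^*)P\,dx \ge c_0 - \int_\Omega |v^*-v_1^*|\,P\,dx > c_0 - \varepsilon ,
\end{equation*}
so taking $\varepsilon<c_0$ (for instance $\varepsilon=c_0/2$) makes $\int_\Omega (m-v_1^*)P\,dx>0$, whence $\sigma_1>0$ and the first species survives. The main obstacle I anticipate is not this eigenvalue estimate but the qualitative passage from linear instability of $(0,v_1^*)$ to genuine persistence of $u$: one has to invoke the monotone-system machinery carefully and, along the way, confirm that $v_1^*$ exists as the unique positive single-species equilibrium for the strategy $Q_1$ and that the pointwise-closeness hypothesis $|v^*-v_1^*|<\varepsilon$ is genuinely attainable under a perturbation of $Q$.
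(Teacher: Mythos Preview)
Your proposal is correct and follows essentially the same route as the paper: both prove instability of $(0,v_1^*)$ by bounding the principal eigenvalue of the $u$-linearisation from below via the test function $\psi=P$ (equivalently $\eta\equiv 1$), split the resulting integral using $P=(m-\beta Q)/\alpha$, invoke Lemmas~\ref{Lsteady1} and~\ref{Lsteady2} for the unperturbed positivity, and then absorb the perturbation $v^*-v_1^*$ by choosing $\varepsilon$ smaller than the positive constant obtained (your $c_0$ is the paper's $\delta/\alpha$, and the two $\varepsilon$-bounds coincide once one uses $\int_\Omega(m-\beta Q)\,dx=\alpha$). Your final estimate exploiting the normalisation $\int_\Omega P\,dx=1$ is in fact slightly cleaner than the paper's, and your explicit remark on the passage from linear instability to persistence is a point the paper leaves implicit.
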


So far, we have talked about when we can expect either the $u$ or the $v$ population to survive under small deviations to the dispersal strategy. 
We can further see that if the original dispersal strategy makes the populations coexist as an ideal free pair, then small deviations to that dispersal strategy will not affect the populations ability to coexist. 
To show this, we impose a form of the deviated dispersal strategy
\begin{equation}
\label{Q1_form}
Q_1 = Q + d \cdot g(x), \quad\text{where } \int_{\Omega} g(x) dx = 0, d\in\mathbb{R}.
\end{equation}
Here $g(x)$ could be interpreted as the spatial distribution of the deviation, whereas $d$ could be interpreted as the magnitude. 
In the following theorem, we assume low variability of the ratio of the carrying capacity to the dispersal strategy of the second species $m/Q$. 
The conditions  $\displaystyle \nabla \cdot \left[ a_2(x)  \nabla \left( \frac{\displaystyle m(x)}{\displaystyle Q(x)}\right) \right] \not\equiv 0$  and $\displaystyle \frac{\displaystyle \sup_{x\in\Omega} \frac{m}{Q}}{\displaystyle \inf_{x\in\Omega} \frac{m}{Q}} <2$ below, indicate that while the strategy $Q$ is not aligned with $m$, it must not deviate too much from its profile, so that the variation of the ratio is less than twice over $\Omega$.
The proof of Theorem~\ref{coexist_dev} is postponed to Appendix~B.

\begin{Th}
\label{coexist_dev}
Suppose that \eqref{normal} and \eqref{normal_add} hold, $m_1\equiv m_2 \equiv m$, $\displaystyle \nabla \cdot \left[ a_2(x)  \nabla \left( \frac{\displaystyle m(x)}{\displaystyle Q(x)}\right) \right] \not\equiv 0$ on $\Omega$, 
and $m(x)\equiv \alpha P+\beta Q$ for some $\alpha,\beta>0$. Additionally assume that
\begin{equation*}
    \displaystyle \frac{\displaystyle \sup_{x\in\Omega} \frac{m}{Q}}{\displaystyle \inf_{x\in\Omega} \frac{m}{Q}} <2.
\end{equation*}
Let \eqref{Q1_form} hold. Then there exists a $d$, where $0<|d|<<1$, such that the coexistence of solutions to \eqref{system2} is preserved.
\end{Th}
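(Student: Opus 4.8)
The plan is to read Theorem~\ref{coexist_dev} as a statement about the monotone (competitive) system \eqref{system2} and to verify the hypotheses of the classification result, Lemma~\ref{equil_charac}. Since \eqref{system2} is monotone and its zero state is a repeller by Lemma~\ref{zero_equilibrium} (this uses only $m>0$ and is unaffected by the perturbation), it suffices to show that for all sufficiently small $|d|\neq 0$ both semi-trivial equilibria are unstable and the coexistence equilibrium is unique; then Lemma~\ref{equil_charac} identifies that coexistence equilibrium as the global attractor, so coexistence persists. The conclusion ``there exists $d$'' then records that nonzero perturbations within this regime do preserve coexistence.

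Instability of the first semi-trivial state $(u^{\ast},0)$ is the benign direction and is essentially Lemma~\ref{semi_perturbation}. Because the first equation of \eqref{system2} does not contain $Q_1$, the profile $u^{\ast}$ solving \eqref{eq_u} is independent of $d$, and invasion by $v$ is governed by the principal eigenvalue $\lambda_1(Q_1)$ of $\nabla\cdot[a_2\nabla(\phi/Q_1)]+(m-u^{\ast})\phi=\lambda\phi$. Testing with $\phi/Q_1\equiv 1$ gives $\lambda_1(Q_1)\ge \int_\Omega(m-u^{\ast})Q_1\,dx=\int_\Omega(m-u^{\ast})Q\,dx+d\int_\Omega(m-u^{\ast})g\,dx$; the leading integral is strictly positive (this is precisely the instability of Lemma~\ref{semi_p1}), so the right-hand side stays positive for small $|d|$ and $(u^{\ast},0)$ remains unstable.

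The delicate direction is the instability of $(0,v_1^{\ast})$, where $v_1^{\ast}$ solves \eqref{eq_v} with $Q$ replaced by $Q_1$: here the perturbation sits inside the principal part of the dispersal operator, so continuity of $v_1^{\ast}$ in $d$ is not automatic. My reduction is the substitution $w=v/Q_1$, which turns \eqref{eq_v} into a genuine diffusive logistic equation $\nabla\cdot[a_2\nabla w]+Q_1 w(m-Q_1 w)=0$ with homogeneous Neumann data and coefficients depending smoothly on $d$. The maximum principle yields the a priori bounds $\inf_\Omega(m/Q_1)\le w\le \sup_\Omega(m/Q_1)$, concavity in $w$ gives uniqueness of the positive steady state, and standard elliptic estimates then force $w_1\to w_0$, hence $v_1^{\ast}\to v^{\ast}$ uniformly as $d\to 0$. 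With this in hand, Lemma~\ref{lemma_new_deviation} applies and $(0,v_1^{\ast})$ is unstable for small $|d|$.

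The main obstacle is the last step. Once $m=\alpha P+\beta Q$ is broken by the perturbation (generically $m\neq\alpha' P+\beta' Q_1$), the instability of the two semi-trivial states only yields persistence, whereas Lemma~\ref{equil_charac} demands that all but one equilibrium be unstable, i.e.\ a \emph{unique} coexistence equilibrium. This is where I expect the hypothesis $\sup_\Omega(m/Q)/\inf_\Omega(m/Q)<2$ to be indispensable: the bounded oscillation of $m/Q$ keeps the heterogeneity mild enough that the coexistence state stays positive and non-degenerate and that uniqueness (hence global attractivity) survives the structural perturbation. I anticipate the bulk of Appendix~B to be devoted to making this uniqueness argument quantitative and uniform in $d$, since it is the only place where a soft continuity/eigenvalue argument does not suffice.
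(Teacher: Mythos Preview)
Your overall plan---zero repeller, instability of both semi-trivial states via Lemmas~\ref{semi_perturbation} and \ref{lemma_new_deviation}, then monotone dynamics---matches the paper exactly, and the substitution $w=v/Q_1$ is precisely what the paper uses. Where you go off track is in your final paragraph: you expect the bulk of Appendix~B to be a uniqueness argument for the coexistence state and that the oscillation bound $\sup(m/Q)/\inf(m/Q)<2$ is the key to that uniqueness. Neither is the case.

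The paper never proves uniqueness of the coexistence equilibrium. After showing both semi-trivial states are unstable it simply invokes the monotone-systems classification and records that a coexistence equilibrium exists and attracts; the theorem only claims that coexistence is preserved. The oscillation hypothesis is consumed \emph{entirely} in the continuity step that you dismissed as ``standard elliptic estimates.'' The paper's route is: write $z=v/Q_1$, so the steady state satisfies $F(d,z)=\nabla\cdot(a_2\nabla z)+Q_1 z(m-Q_1 z)=0$, and apply the implicit function theorem at $d=0$. For this one needs $\partial F/\partial z\big|_{d=0}=Q(m-2Qz^{\ast})$ to have a sign, i.e.\ $z^{\ast}>m/(2Q)$ pointwise. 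That lower bound is exactly what $\sup(m/Q)/\inf(m/Q)<2$ buys: it forces $\gamma:=\inf_\Omega(m/Q)>m/(2Q)$ everywhere, and since the constant $\gamma$ is a subsolution, $z^{\ast}\ge\gamma>m/(2Q)$. The implicit function theorem then gives a Taylor expansion $z_1^{\ast}=z^{\ast}+df_1+d^2f_2+d^3f_3$ with bounded coefficients, hence $|v^{\ast}-v_1^{\ast}|=O(|d|)$, and Lemma~\ref{lemma_new_deviation} finishes.

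Your proposed continuity argument via a priori bounds, concavity/uniqueness of the positive steady state, and compactness is a legitimate alternative and arguably lighter---it may well dispense with the oscillation hypothesis. But you should drop the uniqueness-of-coexistence program: it is neither needed for the theorem as stated nor attempted in Appendix~B, and your guess about the role of the hypothesis $\sup(m/Q)/\inf(m/Q)<2$ is misplaced.
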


For completeness, let us also consider the case when the species with a deviating dispersal strategy originally had a decisive advantage over the competitor.

\begin{Th}\label{semi_perturbation_aligned}
Suppose  that \eqref{normal} and \eqref{normal_add} hold, $m_1\equiv m_2 \equiv m$, $\displaystyle \nabla \cdot \left[ a_1(x)  \nabla \left( \frac{\displaystyle m(x)}{\displaystyle P(x)}\right) \right] \not\equiv 0$ on $\Omega$, while
$m(x)\equiv \beta Q$ for some $\beta>0$. 
Then there exists $\varepsilon>0$ such that $\displaystyle \int_{\Omega} \left| Q(x)-Q_1(x) \right| ~dx < \varepsilon$ guarantees survival of the second species
in \eqref{system2}  under this strategy perturbation.
\end{Th}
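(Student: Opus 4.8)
The plan is to reproduce, in the present regime $m\equiv\beta Q$, the instability-of-semi-trivial-equilibrium argument that underlies Lemma~\ref{semi_p1} and its perturbed counterpart Lemma~\ref{semi_perturbation}. First I would note that for \emph{every} admissible perturbation $Q_1$, the pair $(u^{\ast},0)$ is a semi-trivial equilibrium of \eqref{system2}, where $u^{\ast}$ is the positive solution of \eqref{eq_u} with $m_1\equiv m$; crucially, $u^{\ast}$ is \emph{independent} of $Q_1$, since the perturbed strategy enters only the $v$-equation. Because \eqref{system2} is a monotone (competitive) dynamical system, uniform persistence — i.e. survival — of the second species is guaranteed once $(u^{\ast},0)$ is shown to be linearly unstable, namely once the principal eigenvalue of the $v$-linearization at $(u^{\ast},0)$ is positive. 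This is exactly the mechanism invoked through Lemma~\ref{equil_charac} and the persistence theory cited there.

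Next I would set up that eigenvalue problem. Linearizing the $v$-equation of \eqref{system2} at $(u^{\ast},0)$ produces the operator $\mathcal{L}\phi = \nabla\cdot[a_2\nabla(\phi/Q_1)] + (m-u^{\ast})\phi$ under the boundary condition $\partial_n(\phi/Q_1)=0$. Substituting $\phi=Q_1\eta$ turns $\mathcal{L}\phi=\sigma\phi$ into the weighted self-adjoint problem $\nabla\cdot[a_2\nabla\eta] + (m-u^{\ast})Q_1\eta = \sigma Q_1\eta$ with $\partial_n\eta=0$, whose principal eigenvalue has the Rayleigh characterization
\[
\sigma_1 = \sup_{\eta\neq 0}\frac{\displaystyle\int_{\Omega}\left[-a_2|\nabla\eta|^2 + (m-u^{\ast})Q_1\eta^2\right]\,dx}{\displaystyle\int_{\Omega}Q_1\eta^2\,dx}.
\]
Testing with $\eta\equiv 1$ and using the normalization $\int_{\Omega}Q_1\,dx=1$ from \eqref{normal_add} yields the sufficient condition $\sigma_1 \geq \int_{\Omega}(m-u^{\ast})Q_1\,dx$, so it suffices to prove that $\int_{\Omega}(m-u^{\ast})Q_1\,dx>0$.

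The core of the argument is then a perturbation estimate anchored at the unperturbed integral. For $Q_1=Q$, the hypothesis $m\equiv\beta Q$ lets me replace $Q$ by $m/\beta$, giving $\int_{\Omega}(m-u^{\ast})Q\,dx = \beta^{-1}\int_{\Omega}m(m-u^{\ast})\,dx$, which is strictly positive by Lemma~\ref{Lsteady1}; this applies precisely because $\nabla\cdot[a_1\nabla(m/P)]\not\equiv 0$ forces $u^{\ast}\not\equiv m$. Denote this strictly positive value by $c_0$. Writing $\int_{\Omega}(m-u^{\ast})Q_1\,dx = c_0 + \int_{\Omega}(m-u^{\ast})(Q_1-Q)\,dx$ and bounding the remainder by $\|m-u^{\ast}\|_{L^{\infty}(\Omega)}\int_{\Omega}|Q_1-Q|\,dx$, I would choose $\varepsilon = c_0/\|m-u^{\ast}\|_{L^{\infty}(\Omega)}$ (finite and positive since $m-u^{\ast}$ is continuous on $\overline{\Omega}$ and not identically zero). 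Then $\int_{\Omega}|Q_1-Q|\,dx<\varepsilon$ gives $\int_{\Omega}(m-u^{\ast})Q_1\,dx>0$, hence $\sigma_1>0$ and instability of $(u^{\ast},0)$.

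I expect the main obstacle to be the rigorous passage from linear instability of the semi-trivial equilibrium to genuine uniform persistence of the second species, which relies on the monotone/competitive dynamical systems framework and the abstract persistence results rather than on a direct estimate; by contrast, the eigenvalue sign computation — the $\eta\equiv 1$ test and the robustness of the strict positivity $c_0>0$ under $L^1$-small perturbations of $Q$ — is comparatively routine once $c_0>0$ is secured from Lemma~\ref{Lsteady1}. A secondary point requiring care is confirming that $u^{\ast}$ genuinely does not vary with $Q_1$, so that $c_0$ and $\|m-u^{\ast}\|_{L^{\infty}(\Omega)}$ are fixed constants against which $\varepsilon$ is chosen.
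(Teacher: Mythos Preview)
Your proposal is correct and follows essentially the same route as the paper: both linearize the $v$-equation at $(u^{\ast},0)$, test the Rayleigh quotient with $\psi=Q_1$ (your $\eta\equiv 1$ after the substitution $\phi=Q_1\eta$), use $m=\beta Q$ together with Lemma~\ref{Lsteady1} to identify the strictly positive unperturbed integral, and then choose $\varepsilon$ as that positive constant divided by $\|m-u^{\ast}\|_{L^\infty}$. Your $c_0$ equals the paper's $A/\beta$ and your $\varepsilon$ coincides with the paper's $\varepsilon_0$.
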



If a population has chosen its dispersal strategy such that it competitively excludes the other population, then it is not guaranteed that small changes will have no effect. The following remark demonstrates that if the dispersal strategy changes by a little bit, it is possible that that small change may allow the other population to survive and coexist.

\begin{remark}
\label{ex:counterexample}
Assume that the assumptions of  Theorem~\ref{semi_perturbation_aligned} are satisfied.
Let $\varepsilon \in (0,1)$ be small enough, such that $Q-\varepsilon P >0$ everywhere on $\Omega$ and consider
$$
Q_1 = \frac{1}{1-\varepsilon} \left( Q-\varepsilon P \right).
$$
Note that  $$\displaystyle  |Q-Q_1|   = \left| \left(1-  \frac{1}{1-\varepsilon}  \right) Q + \frac{\varepsilon} {1-\varepsilon} P \right|
\leq \frac{\varepsilon} {1-\varepsilon} (P+Q),
 $$
therefore
 $\displaystyle \int_{\Omega} \left| Q(x)-Q_1(x) \right| ~dx =  \frac{2\varepsilon} {1-\varepsilon} \to 0$ as $\varepsilon \to 0$.
For such $Q_1$, we have 
$$
\beta (1-\varepsilon)  Q_1(x)  +  \beta  \varepsilon P(x) = \beta Q(x) = m(x).
$$
Thus $( \beta  \varepsilon P,  \beta (1-\varepsilon)  Q_1)$ is a globally attractive free pair, see Theorem~\ref{stability_coexistence_1}.
This guarantees coexistence and not a competitive exclusion of the first species.
\end{remark}

\subsection{Are  slight changes  in harvesting effort and distribution crucial for coexistence?}
\label{subsec:harvesting}

We have shown that it is possible to harvest in such a way that two populations can be pushed to coexistence. But important to the context of sustainable management, is whether these coexistence states are stable under perturbations to the harvesting intensity. Thus, this topic has received a high level of scrutiny. For the current model, it was explored in \cite{Ilmer} for $E_i= \alpha_i m(x)$ in the following context: one of the species is harvested at a fixed rate $\alpha_1=\alpha_0$, while for the other species $\alpha_2$ is slightly increased, with $\alpha_2 \in (\alpha_0, \alpha_0 + \alpha^{\ast})$. In contrast, here we don't assume any limitations on the spatial profile of the harvesting perturbation. This means we include considerations of management redistribution. For example, for the same number of issued licenses without strict limitations on the fishery location this can correspond to over-exploitation of certain areas with others being under-used. Generally, harvesting redistribution can influence competition outcomes.

Consider a perturbation of \eqref{system}, where only the space-dependent  harvesting rate of the second species is perturbed by $E_p$
\begin{equation}
\label{system3}
\begin{cases}
\begin{aligned}
\displaystyle \frac{\displaystyle \partial u}{\displaystyle \partial t}
= &   \displaystyle  \displaystyle  \nabla \cdot \bigg[ a_1(x)    \nabla  \bigg( \frac{\displaystyle u(x)}{\displaystyle P(x)}\bigg) \bigg] \displaystyle \vspace{1mm}
\\ &+ u(t,x) \left(K_1(x)-  u(t,x) - v(t,x) \right) - E_1(x) u(t,x), 
\end{aligned}
\vspace{2mm}
\\
\begin{aligned}
\displaystyle \frac{\displaystyle \partial v}{\displaystyle \partial t}
=  & \displaystyle  \displaystyle  \nabla \cdot \bigg[ a_2(x)  \nabla  \bigg( \frac{\displaystyle v(x)}{\displaystyle Q(x)}\bigg) \bigg] \displaystyle \vspace{1mm}
\\&+ v(t,x) (K_2(x) - u(t,x)  - v(t,x)   - E_2(x) v(t,x) - E_p(x)v(t,x),
\end{aligned}
\vspace{2mm}
\\ 
t>0,\;x\in{\Omega},
\\ \displaystyle
\pdv{}{n}\bigg( \frac{u}{P}\bigg) = \pdv{}{n}\bigg(\frac{v}{Q} \bigg) =0,~x\in
{\partial}{\Omega}.
\end{cases}
\end{equation}  
Also, not to get involved into multi-factor evaluation to determine competition outcome, we assume that, to achieve coexistence, as in Section~\ref{subsec:trimming},
\begin{enumerate}
\item
after harvesting, the same carrying capacity for the two species is provided $m_1(x) = K_1(x) - E_1(x) \equiv m_2 =  K_2(x) - E_2(x) \equiv  m(x)$;
\item
to promote coexistence with the original harvesting efforts $E_1,E_2$, $u$ and $v$ form an ideal free pair corresponding to $m(x) = \alpha P(x) + \beta Q(x)$, where $\alpha,\beta$ are positive constants.
\end{enumerate}

Under these conditions, we get 
\begin{equation}
\label{ssystem3}
\begin{cases}
\displaystyle \frac{\displaystyle \partial u}{\displaystyle \partial t}
=   \displaystyle  \nabla \cdot \left[ a_1(x)  \nabla \left( \frac{\displaystyle u(t,x)}{\displaystyle P(x)}\right) \right]
 \displaystyle + u(t,x) \left(m(x)-  u(t,x) - v(t,x) \right), \vspace{1mm}
\\
\displaystyle \frac{\displaystyle \partial v}{\displaystyle \partial t}
=   \nabla \cdot \left[ a_2(x)  \nabla \left( \frac{\displaystyle v(t,x)}{\displaystyle Q(x)}\right) \right]  \displaystyle + v(t,x) \left(m(x) - E_p(x) - u(t,x) - v(t,x) \right),
\\
t>0,\;x\in{\Omega},
\\
\displaystyle
\pdv{}{n}\bigg( \frac{u}{P}\bigg) = \pdv{}{n}\bigg(\frac{v}{Q} \bigg) =0,~x\in
{\partial}{\Omega}.
\end{cases}
\end{equation}  

\begin{Th}\label{perturbation_harvesting}
Suppose that the dispersal strategies are normalized as in \eqref{normal}, 
$\displaystyle \nabla \cdot \left[ a_1(x)  \nabla \left( \frac{\displaystyle m(x)}{\displaystyle P(x)}\right) \right] \not\equiv 0$ on $\Omega$, while 
$m(x)\equiv \alpha P+\beta Q$ for some $\alpha \geq 0$, $\beta>0$.

Then there exist $\varepsilon_1, \varepsilon_2>0$ such that, once either one of the following conditions holds:
\begin{description}
\item[(i)] 
$\displaystyle \int_{\Omega} E_p(x) Q(x)~dx < \varepsilon_1$,
\item[(ii)]
$|E_p(x)| < \varepsilon_2$ on $\Omega$, 
\end{description}
the second species $v$ is guaranteed to survive under the harvesting perturbation.
\end{Th}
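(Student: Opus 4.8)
The plan is to prove survival of $v$ by showing that the semi-trivial equilibrium $(u^{\ast},0)$ of the perturbed system \eqref{ssystem3} is linearly unstable in the $v$-direction; by the monotone dynamical systems structure (cf. Lemma~\ref{equil_charac} and the mechanism of Lemma~\ref{semi_p1}) this prevents $v$ from being driven to extinction. The key structural observation is that the $u$-equation in \eqref{ssystem3} is \emph{unchanged}, so $u^{\ast}$ is still the solution of \eqref{eq_u} with $m_1\equiv m$ and does not depend on $E_p$; the whole statement is thus a perturbation of Lemma~\ref{semi_p1} in which only the potential in the $v$-equation is shifted by $-E_p$. Linearizing the second equation of \eqref{ssystem3} about $(u^{\ast},0)$ decouples at first order (the coupling terms are quadratic in the perturbation since $v=0$), leaving the eigenvalue problem
\[
\nabla\cdot\left[a_2(x)\nabla\left(\frac{\phi}{Q}\right)\right] + \left(m - E_p - u^{\ast}\right)\phi = \lambda\,\phi, \qquad \frac{\partial}{\partial n}\left(\frac{\phi}{Q}\right) = 0 \ \text{ on } \partial\Omega,
\]
and survival of $v$ will follow once the principal eigenvalue $\lambda_1=\lambda_1(E_p)$ is shown to be positive.

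Next I would substitute $\psi=\phi/Q$, which turns this into the weighted self-adjoint problem $\nabla\cdot[a_2\nabla\psi] + (m - E_p - u^{\ast})Q\psi = \lambda Q\psi$ with $\partial_n\psi=0$, whose principal eigenvalue admits the Rayleigh characterization
\[
\lambda_1 = \sup_{\psi\neq 0}\frac{\int_\Omega\left[-a_2|\nabla\psi|^2 + (m - E_p - u^{\ast})Q\psi^2\right]dx}{\int_\Omega Q\psi^2\,dx}.
\]
Any admissible test function therefore yields a lower bound. Testing with the constant $\psi\equiv 1$ (equivalently $\phi=Q$, which satisfies the boundary condition) and using the normalization $\int_\Omega Q\,dx=1$ from \eqref{normal} gives
\[
\lambda_1 \ \geq\ \int_\Omega (m - u^{\ast})\,Q\,dx \ -\ \int_\Omega E_p\,Q\,dx, \qquad \delta_0 := \int_\Omega (m - u^{\ast})\,Q\,dx.
\]

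The crux is to show $\delta_0>0$. Since $m\equiv\alpha P+\beta Q$ with $\beta>0$, I would write $\beta Q=m-\alpha P$ and split
\[
\beta\,\delta_0 = \int_\Omega (m - u^{\ast})(m - \alpha P)\,dx = \int_\Omega m\,(m - u^{\ast})\,dx + \alpha\int_\Omega P\,(u^{\ast} - m)\,dx.
\]
The hypothesis $\nabla\cdot[a_1\nabla(m/P)]\not\equiv 0$ forces $P$ and $m$ to be linearly independent (otherwise $m/P$ is constant), so Lemma~\ref{Lsteady1} gives $\int_\Omega m(m-u^{\ast})\,dx>0$ and Lemma~\ref{Lsteady2} gives $\int_\Omega P(u^{\ast}-m)\,dx>0$; with $\alpha\geq 0$ both terms are nonnegative and the first is strictly positive, hence $\delta_0>0$. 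Setting $\varepsilon_1=\delta_0$ makes condition (i) immediately yield $\lambda_1>0$, and condition (ii) reduces to (i) because $|E_p|<\varepsilon_2$ implies $\int_\Omega E_p Q\,dx<\varepsilon_2\int_\Omega Q\,dx=\varepsilon_2$, so taking $\varepsilon_2=\delta_0$ works as well.

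I expect the main obstacle to be the positivity of $\delta_0$: the constant test function delivers the lower bound only once one recognizes that the mixed quantity $\int_\Omega(m-u^{\ast})Q\,dx$ can be resolved, through the ideal-free relation $\beta Q=m-\alpha P$, into two pieces that are separately controlled by the auxiliary Lemmas~\ref{Lsteady1} and~\ref{Lsteady2}. The remaining ingredients—the first-order decoupling of the linearization, the validity of the Rayleigh quotient for the weighted operator after the substitution $\psi=\phi/Q$, and the passage from $\lambda_1>0$ to genuine uniform persistence of $v$ via the monotone-system theory—are standard, but I would state them carefully to keep the argument self-contained.
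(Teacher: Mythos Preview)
Your proposal is correct and follows essentially the same argument as the paper: linearize the $v$-equation of \eqref{ssystem3} at $(u^{\ast},0)$, test the Rayleigh quotient with $\phi=Q$ (your $\psi\equiv 1$), split $\int_\Omega(m-u^{\ast})Q\,dx$ via $\beta Q=m-\alpha P$, and invoke Lemmas~\ref{Lsteady1} and~\ref{Lsteady2} to obtain a strictly positive unperturbed margin $\delta_0$ (the paper's $\varepsilon_1$), then reduce (ii) to (i) through the normalization $\int_\Omega Q=1$. Your treatment of the case $\alpha\geq 0$ is in fact slightly cleaner than the paper's, since you explicitly note that the $P$-term is merely nonnegative while strict positivity comes from the $m$-term.
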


\section{Numerical Simulations}
\label{sec:numerical}
\begin{example}
\label{ex:choose_av}

\begin{figure}[ht]
\includegraphics[width=\linewidth]{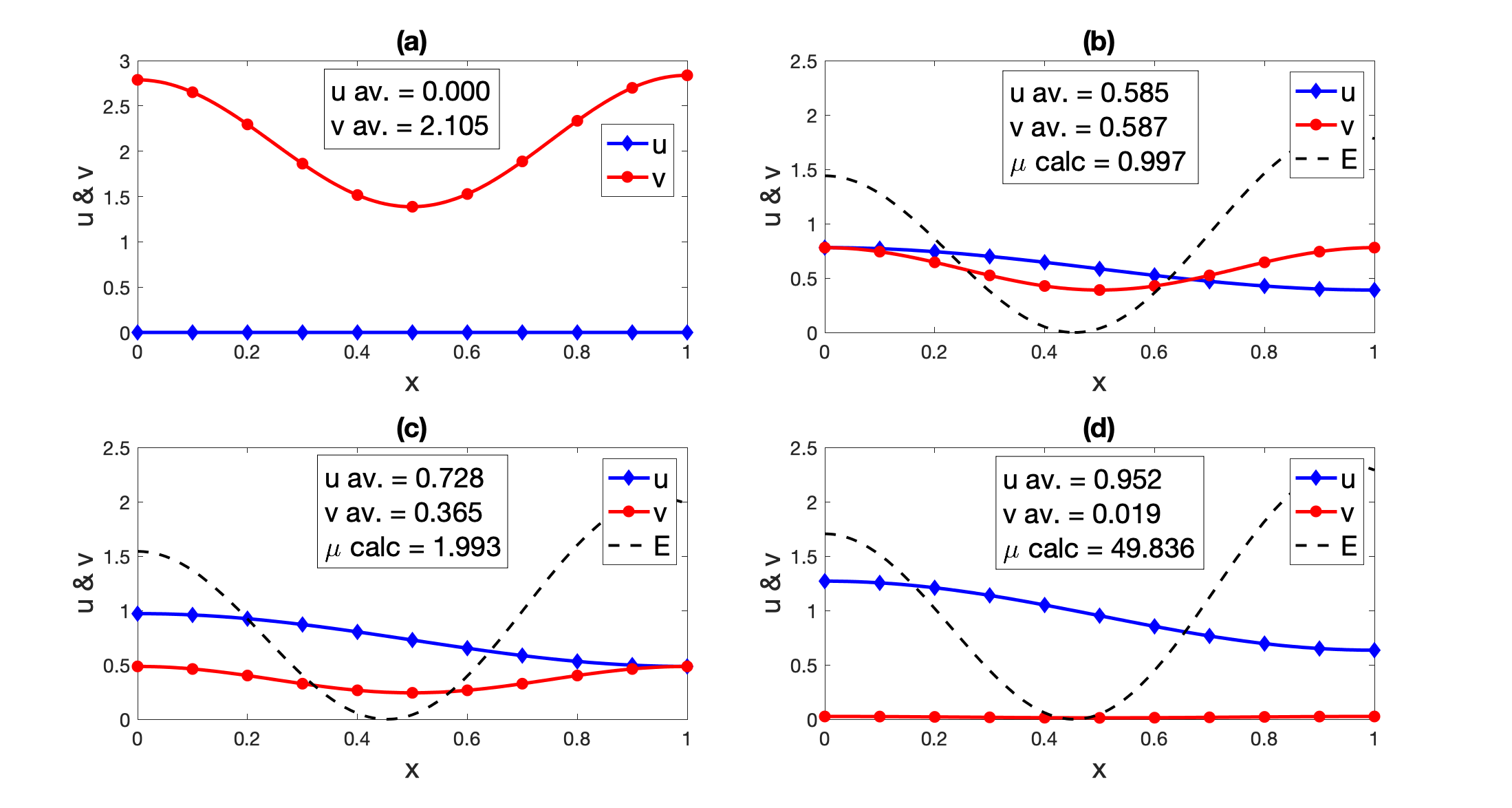}
\caption{Solutions to \eqref{system} at $t = 10 000$ for various choices of harvesting. Harvesting (when applied) is chosen such that for the coexistence solution $(u_s,v_s)$, $\int_{\Omega} u_s dx / \int_{\Omega} v_s dx = \mu$. (a) no harvesting, (b) $\mu = 1$, (c) $\mu = 2$, (d) $\mu = 50$.}
\label{fig:trim}
\end{figure}

Consider \eqref{system}, with dispersal strategies $P =  1 + (1/3)\cos(\pi x)$, $Q = 1 + (1/3)\cos(2\pi x)$, carrying capacities $K = K_1 = K_2 =  2 + \cos(2.1\pi x)$, initial conditions and diffusion coefficients $u_0 = v_0 = a_1 = a_2 = 1$, on the domain $\Omega = [0,1]$. Fig.~\ref{fig:trim} (top left), shows that with no harvesting ($E_1 = E_2 = 0)$, the $v$ population of \eqref{system}, competitively excludes the u population as time gets large and the system approaches the equilibrium. 

In Fig.~\ref{fig:trim} (top right, bottom left, bottom right), we impose harvesting in the form 
\begin{equation*}
\label{min_harvesting}
E_1(\mu;x) = K_1 - \beta^* (\mu P + Q),\quad E_2(\mu;x) = K_2 - \beta^* (\mu P + Q),
\end{equation*}
where $\beta^* = \min(\min_{x\in\Omega}( \frac{K_1}{\mu P + Q}), \min_{x\in\Omega}( \frac{K_2}{\mu P + Q}))$, and vary the $\mu$ parameter. 
The results show that not only it is possible to trim the populations to coexistence, but by using this form of harvesting, we can also force the ratios of the averages of the populations over $\Omega$ to be approximately equal to $\mu$ as predicted by Theorem \ref{stability_coexistence_2}. Here $u_{av}, v_{av}$ represent $\int_{\Omega} u_s dx, \int_{\Omega} v_s dx$ respectively. Note that because the carrying capacities $K_1 \equiv K_2 \equiv K$, we get the harvesting rates $E_1 \equiv  E_2 \equiv E$. 

In Fig.~\ref{fig:trim} (top right), we set $\mu = 1$, and find that eventually $u_{av} = 0.585$, while $v_{av} = 0.587$. This makes the calculated $\mu$ (i.e. the ratio of the averages) $ \mu_{calc} = 0.997 \approx \mu = 1$. Similar results are observed for $\mu = 2$ and $\mu = 50$. For $\mu = 2$ (Fig.~\ref{fig:trim}, bottom left), $u_{av} = 0.728$, and $v_{av} = 0.365$, which makes $\mu_{calc} = 1.993 \approx \mu = 2$. The case of $\mu = 50$ (Fig.~\ref{fig:trim}, bottom right), shows that the result still holds as $\mu$ gets large. In this case, $u_{av} = 0.952$ and $v_{av} = 0.019$ , which makes $\mu_{calc} = 49.836 \approx \mu = 50$.
\end{example}

\begin{example}
\label{ex:mimick}


Consider \eqref{ssystem}, where the first population $u$ chooses a random dispersal strategy $P=1$, while the second population mimics the dispersal of the first species which is $Q=u^*$. Here $m_1$ and $m_2$ both follow (slightly different) Gaussian normal distributions, where  $$\displaystyle m_1 = 0.1 + \frac{e^{-\frac{(x-0.5)^2}{2(0.1)^2}}}{\sqrt{2 \pi (0.1)^2}} < m_2 = 0.1 + \frac{3 e^{-\frac{(x-0.5)^2}{2(0.1)^2}}}{ 2\sqrt{2 \pi (0.1)^2}}.$$
Additionally, $u_0 = v_0 = a_1 = a_2 = 1$, $\Omega = [0,1]$. In Fig.~\ref{fig:mimick_normal}, we observe that since $m_2 > m_1$, as was predicted by Theorem \ref{mimic_invasion},   $v$ is able to invade and survive, while the home species $u$ goes extinct.

\begin{figure}[ht]
\centering
\includegraphics[width = 0.5\linewidth]{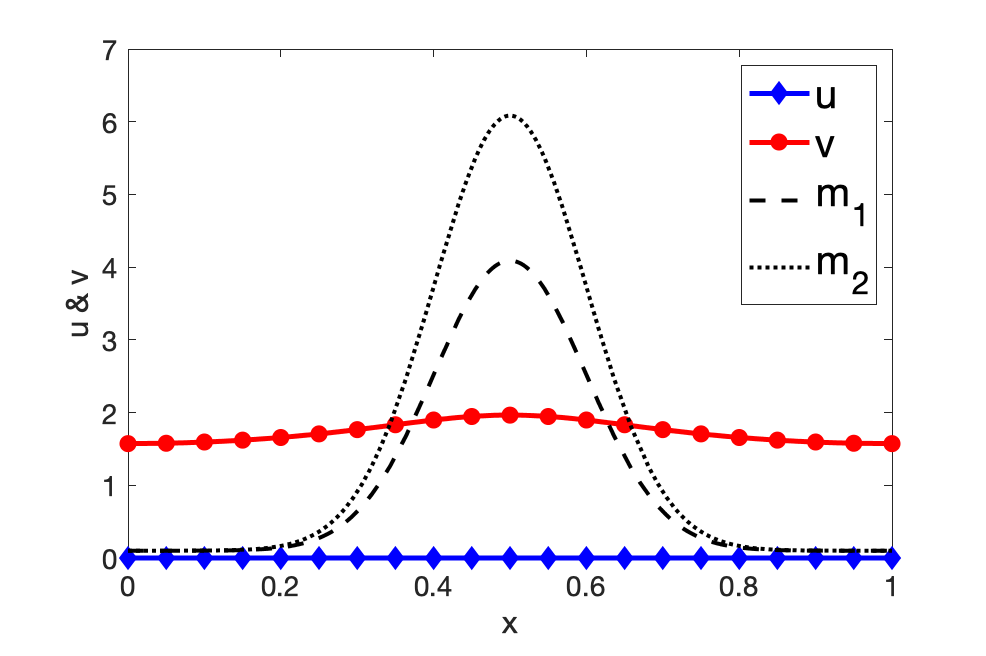}
\caption{Solutions to \eqref{ssystem} at $t=10000$, when $Q = u^*$ and $m_2 > m_1$.}
\label{fig:mimick_normal}
\end{figure}

While Theorem~\ref{mimic_invasion} and Fig.~\ref{fig:mimick_normal} demonstrate the sufficiency of the requirement that $m_2 > m_1$ on $\Omega_1 \subseteq \Omega$ and $m_2 \geq m_1$ on $\Omega$, for predicting successful invasion of $v$, it is certainly not necessary. We can still observe successful invasion of $v$ even when $m_2 = m_1$ on the whole domain, and in some cases when $m_2 < m_1$.

In Fig.~\ref{fig:success_invasion}, solutions of \eqref{ssystem} in which $v$ chooses a mimicking dispersal strategy, with $m_2 =k m_1$, $k>0, k\in\mathbb{R}$ are shown. The individual figures show solutions for various choices of $k$, where $k$ is chosen such that $m_2 \leq m_1$. Once again $m_1$ is approximately Gaussian normal, $m_1 = 0.1 + \frac{e^{-\frac{(x-0.5)^2}{2(0.1)^2}}}{\sqrt{2 \pi (0.1)^2}}$. All other parameters are the same as in Fig.~\ref{fig:mimick_normal}. In Fig.~\ref{fig:success_invasion} (top left), $k=1 \implies m_1 = m_2$, and although Theorem~\ref{mimic_invasion} does not specifically cover this case, we still observe that eventually $v$ is able to successfully invade, while $u$ goes extinct. The same phenomena is observed when $k = 0.97$ in Fig.~\ref{fig:success_invasion} (top right). Here, $m_2 < m_1$, however $v$ is still able to successfully invade, and send $u$ to extinction. When $k = 0.966$ in 
Fig.~\ref{fig:success_invasion} (bottom left), we observe that not only $v$ is able to survive, but it is actually able to coexist with $u$. However the situation changes for $k=0.96$ in Fig.~\ref{fig:success_invasion} (bottom right). In this case, even though $v$ adopts a mimicking strategy, it can no longer successfully invade. In fact, the home species $u$ survives, while $v$ goes extinct. This is due to the fact that the difference between the two carrying capacities is now too large, placing the invading species at significant disadvantage.


\begin{figure}[h!]
\centering
\includegraphics[width=\linewidth]{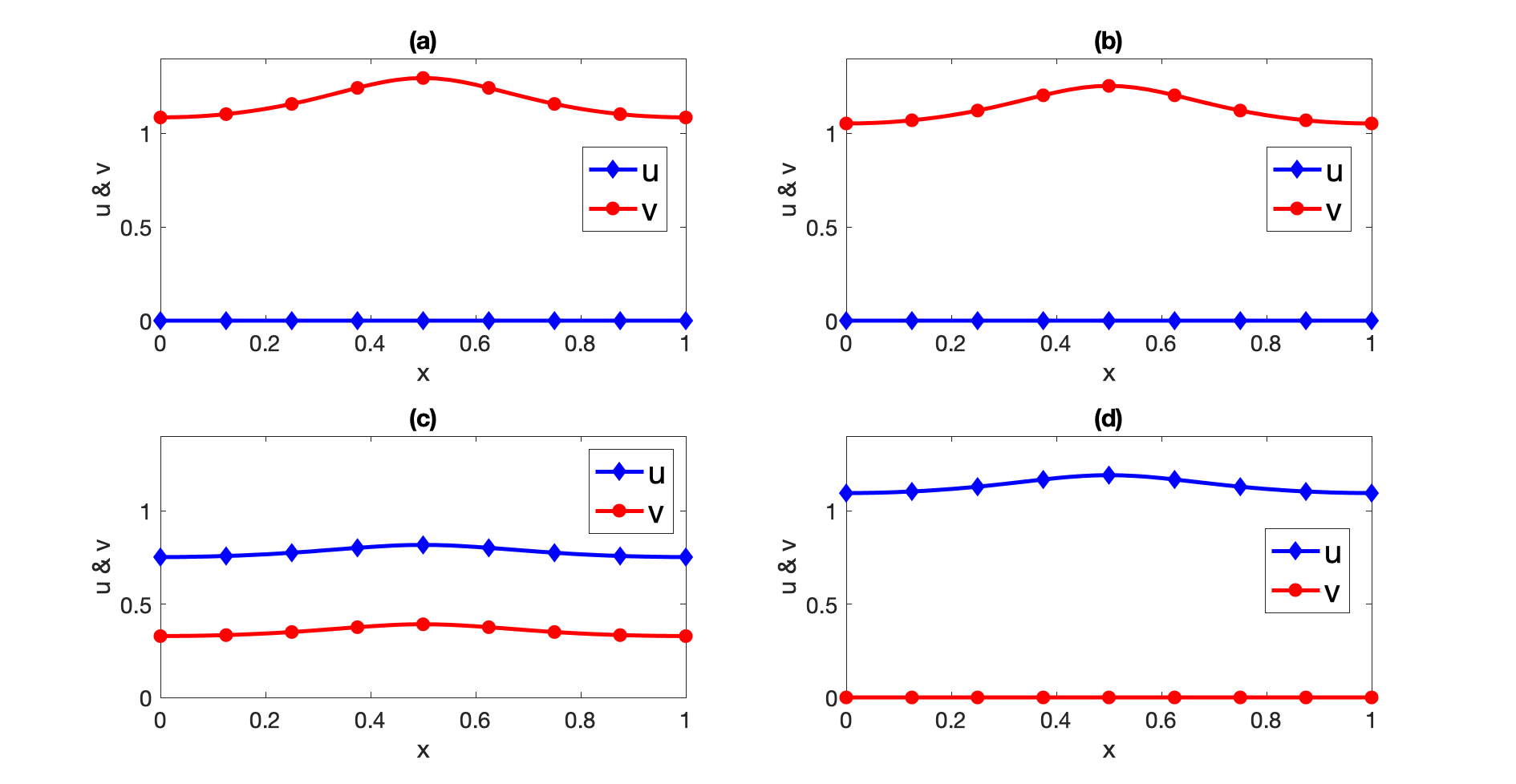}
\caption{Solutions to \eqref{ssystem} at $t = 10000$, when $Q = u^*$ and $m_2 = k m_1$. (a) $k=1$, (b) $k=0.97$, (c) $k=0.966$, (d) $k = 0.96$. Results show that in some cases when $m_2 \leq m_1$, $v$ can still successfully invade $u$ by adopting this mimicking dispersal strategy.}
\label{fig:success_invasion}
\end{figure}

\end{example}

\begin{example}
\label{ex:changing_Q1}
Consider \eqref{system2} with \eqref{Q1_form}. By Theorem \ref{stability_coexistence_3}, if \eqref{perturbation_strategy} holds, then $v$ will be able to survive even though its dispersal strategy has deviated from the ideal free pair. With $Q_1$ in the form \eqref{Q1_form}, if
\begin{equation*}
|d| \int_{\Omega} |g(x)| dx < \varepsilon_d := \displaystyle \frac{\frac{1}{\beta} \int_{\Omega} m(m-u^*)dx + \frac{\alpha}{\beta}\int_{\Omega} P(u^* - m)dx}{\| m - u^*\|_{\infty}}
\end{equation*}
then $v$ will survive.

Now, let $P =  1 + (1/3)\cos(\pi x)$, $Q = 1 + (1/3)\cos(2\pi x)$, $m = 0.5 P + 0.5 Q$, $a_1 = a_2 = u_0 = v_0 = 1$, and $\Omega = [0,1]$. Then, since $\varepsilon_d$ is dependent on $P,Q,m$, we can therefore calculate that $\varepsilon_d = 0.1577$ in this case. So to ensure survival of $v$, $|d|\int_{\Omega} |g(x)|dx < 0.1577$. 
Consider fixing the distribution of the deviance as well, choosing $\displaystyle  g(x) = \frac{\cos(\pi x)}{6} - \frac{\cos(2 \pi x)}{6}$. Then $\displaystyle \int_{\Omega} |g(x)| dx = \frac{\sqrt{3}}{4 \pi}$ and to ensure survival of $v$, we should have $ |d| < d_1$ where $ d_1 := \frac{4 \pi }{\sqrt{3}}\cdot 0.1577 \approx 1.1441$.

In Fig.~\ref{fig:changingQ1} (left) we plot the average value of solutions to \eqref{system2} at $t=10 000$, compared to increasing positive values of $d$, for the above model parameters. Note that although the averages of $P,Q,Q_1$ are all identically equal to $1$, their spatial profiles are not identical for every value of $d$ (see Fig.~\ref{fig:changingQ1}  (right)). We have chosen all the dispersal strategies to be normalized to $1$ as in \eqref{normal}, \eqref{normal_add}, so that we specifically investigate what happens when the arrangement of the resources in a dispersal strategy $Q_1$ is changed, and not what happens when the magnitude is changed.

When $d = 0$, $Q_1 = Q$, and the populations coexist, converging to the ideal free pair. For $d \in (0,1)$, coexistence is maintained, but the average value of $v$ increases, while the average value of $u$ decreases, until at $d = 1$, $Q_1 = m$, implying that $v$ will disperse exactly according to the available resources, and therefore $v$ will have the advantage over $u$ which will not be following a perfect resource distribution pattern. In this case, $v$ survives and competitively excludes $u$. When $d < d_1$, $v$ survives, in some cases coexisting with $u$, and in other cases competitively excluding $u$. This verifies that the upper bound calculated is sufficient for predicting survival of $v$. However, it is possible for $v$ to survive past that point. In fact as is seen in Fig.~\ref{fig:changingQ1}, $v$ is able to survive for all $d < d_2$. Coexistence can even be observed for $d = 2$, where $Q_1 = P $, meaning that both populations have the exact same dispersal strategies, and are therefore competing for the exact same resources. Additionally, when $d = 2$, the populations are identical, and therefore neither population has a competitive advantage. When $d > 2$, $\int_{\Omega} |m - P| dx < \int_{\Omega} |m - Q_1|dx$ and therefore, because $u$ chooses the dispersal strategy that is closest to $m$, it is going to survive 
with higher than $v$ total population size.  

\begin{figure}[h!]
    \centering
    \includegraphics[width=0.57\linewidth]{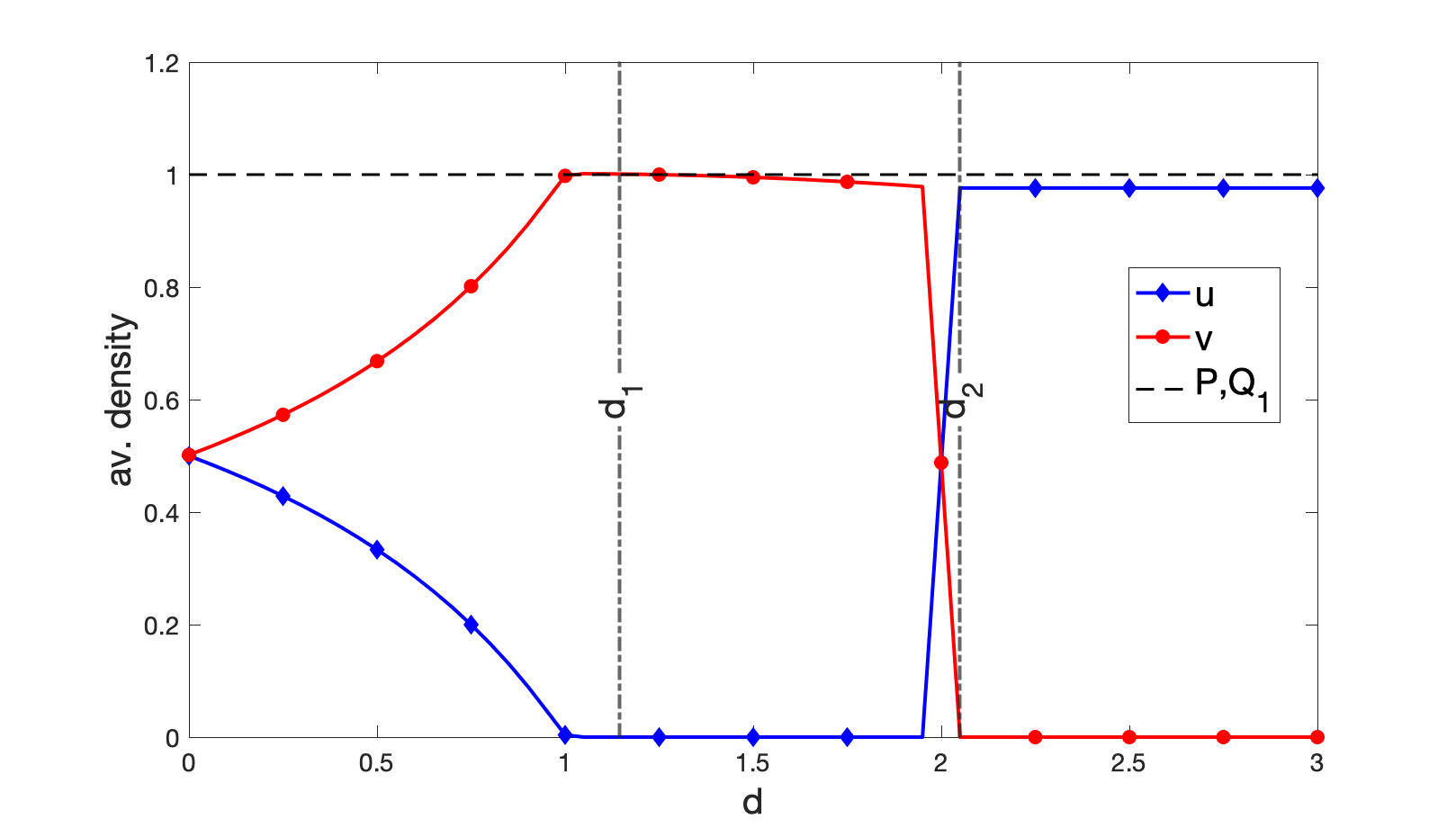}
    \hfill
    \includegraphics[width=0.42\linewidth]{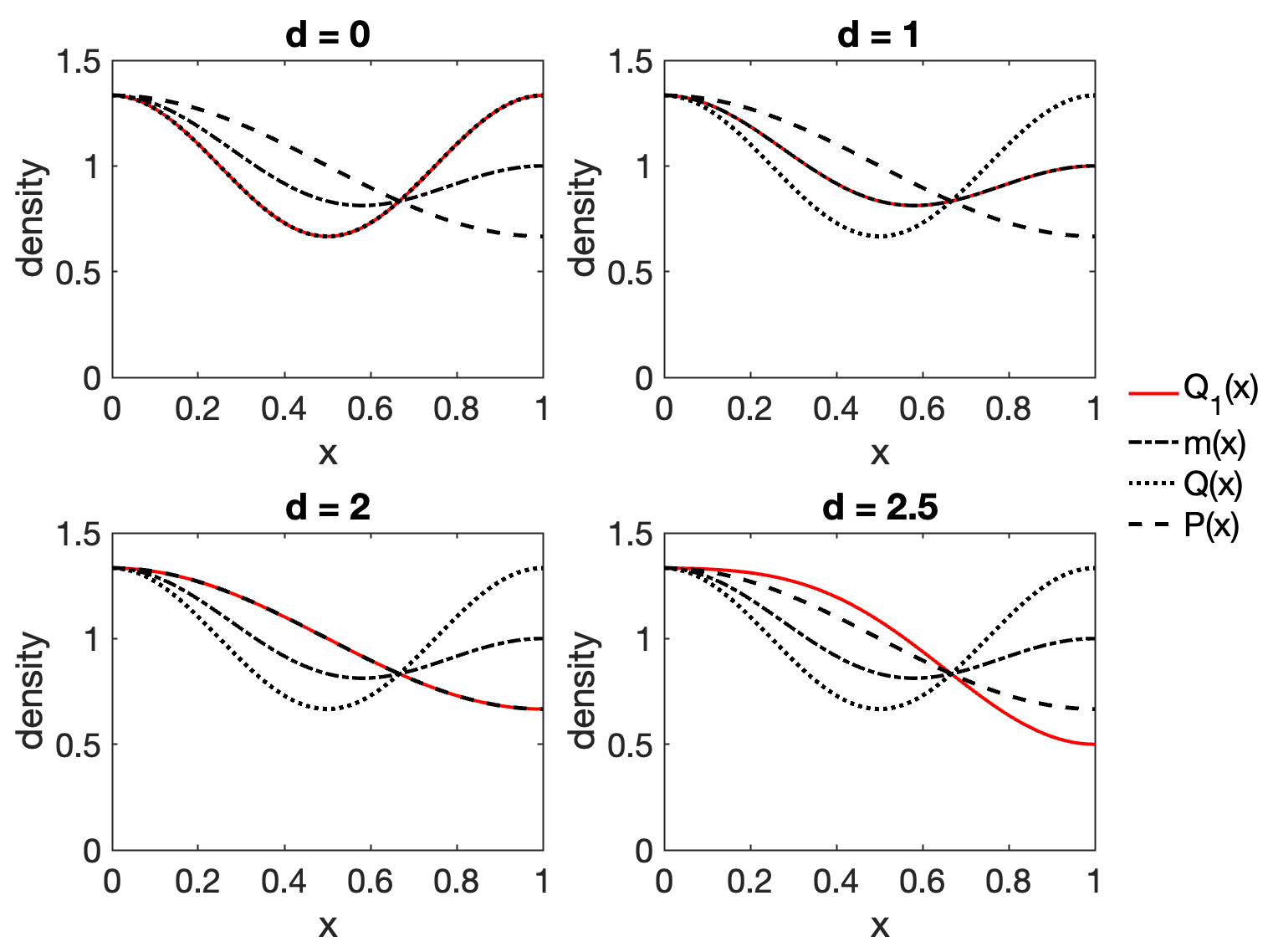}
    \caption{(Left) Average value of solutions to \eqref{system2} at $t=10{,}000$ as the deviation of $Q_1$ from $Q$ grows. When $d < d_1 = 1.1441$, then we can guarantee the survival of $v$, although it is certainly possible to have survival of $v$ past that point. Indeed $v$ can survive for all $d < d_2 = 2.05$. (Right) Spatial profiles of $P(x)$, $Q(x)$, $Q_1(x)$, and $m(x)$ for various values of $d$.}
    \label{fig:changingQ1}
\end{figure}


\end{example}

\begin{example}
\label{ex:perturb}
Consider \eqref{ssystem3} with $P =  1 + (1/3)\cos(\pi x)$, $Q = 1 + (1/3)\cos(2\pi x)$, $m = 0.5 P + 0.5 Q$, $a_1 = a_2 = u_0 = v_0 = 1$, and $\Omega = [0,1]$. Fig.~\ref{fig:perturb} displays the solutions to \eqref{ssystem3} at $t=10000$ for four different choices of harvesting perturbation $E_p$. By Theorem \ref{perturbation_harvesting} the survival of $v$ can be preserved by choosing $E_p$ such that either the following integral condition
\begin{equation}
\label{integralcond}
\int_{\Omega} E_p(x) Q(x) dx < \varepsilon_1 := \int_{\Omega} (m(x) - u^*) Q(x) dx = 0.0514
\end{equation}
holds, or 
\begin{equation}
\label{small_perturb}
|E_p(x)| < 0.0514.
\end{equation}

In Fig.~\ref{fig:perturb} (top left), $E_p = (\frac{\varepsilon_1}{2} - 0.01)\sin(\pi x)$. With this choice of $E_p$, both \eqref{small_perturb} and the integral condition \eqref{integralcond} hold. Therefore we see that the $v$ population survives. Here, the perturbation to the harvesting is small enough, that even without the moderating effect of $Q$, $v$ is still able to survive. In Fig.~\ref{fig:perturb} (top right), $E_p = 0.5 (m(x) - u^*(x))$. Since $E_p< m - u^*$, and $Q$ is normalized to $1$, it is clear that \eqref{integralcond} holds however \eqref{small_perturb} does not, as $E_p > 0.0514$ for some values of $x$. By Theorem \ref{perturbation_harvesting} the modulating effect of $Q$ is enough to ensure that the $v$ population is still able to exist, and in this case, coexist with $u$. In Fig.~\ref{fig:perturb} (bottom left), $E_p = m(x) - u^*(x)$. 
It is clear that neither \eqref{integralcond}  nor \eqref{small_perturb} are satisfied by this choice of $E_p$. As a result, $v$ becomes almost indistinguishable from $0$, achieving an average value of $1.0 \times 10^{-3}$. In this case, the perturbation to the harvesting of $v$ was enough to make the coexistence equilibrium unstable. In Fig.~\ref{fig:perturb} (bottom right), $E_p = -0.01(e^{5x} - 1)$. Once again, $E_p$ satisfies \eqref{integralcond}, and therefore $v$ survives. However, here $E_p < 0$ which corresponds to the level of harvesting of $v$ being reduced. As a result, even though $E_p$ satisfies the integral condition and $v$ survives, it does not coexist with $u$, and instead it sends $u$ to extinction.  Fig.~\ref{fig:perturb}  shows that there are multiple ways to ensure that the perturbation does not impact the survivability of $v$.
 
\begin{figure}[h!]
\centering
\includegraphics[width=\linewidth]{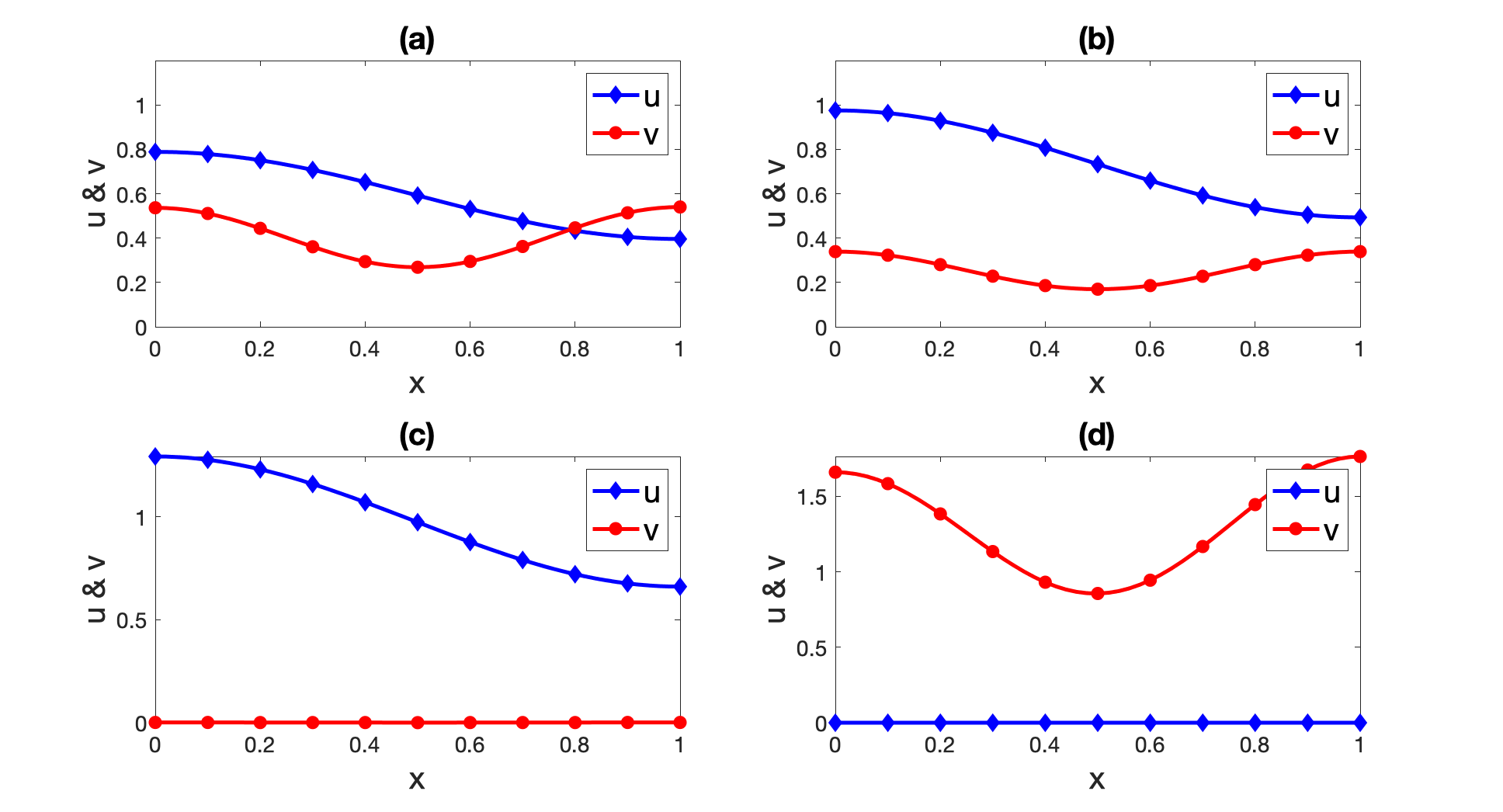}
\caption{Spatial distribution of solutions to \eqref{ssystem3} at $t=10000$ for various choices of perturbation $E_p$. (a) $E_p$ satisfies \eqref{integralcond} and \eqref{small_perturb}. (b) $E_p$ satisfies \eqref{integralcond} but not \eqref{small_perturb}. (c) $E_p$ satisfies neither \eqref{integralcond} nor \eqref{small_perturb}. (d) $E_p$ satisfies \eqref{integralcond} and $E_p<0$.}
\label{fig:perturb}
\end{figure}

\end{example}

\section{Discussion and Open Problems}
\label{sec:discussion}

This paper focused on a population's choice of dispersal strategy and the influence that choice can have on competition outcomes. It also investigated how harvesting policies can be chosen so as to promoting coexistence of the species. The main conclusions can be summarized as follows:
\begin{enumerate}
\item
For any two species which choose different diffusion strategies, it is possible to implement harvesting such that 1) the two populations coexist and 2) the ratio of their total densities can be any prescribed number. This however, does not need to be achieved by simply culling the target species aggressively. Instead, the harvesting policy can be thoughtfully chosen, such that the difference between the carrying capacity and the harvesting rate becomes adapted to the natural species distribution. For example, if two species have the same carrying capacity, then resources may be redistributed throughout the domain (through both culling and stocking in specific areas) to ensure the species coexist, and that they have the prescribed ratio of total biomass.
\item
If 1) a resident population chooses to disperse in a way that does not perfectly match the carrying capacity and 2) an invasive species chooses to disperse in such a way that it mimics the resident's spatial distribution, then it is possible for the invasion to be completely successful. In fact, we prove in Theorem \ref{mimic_invasion} that if the invasive species has an advantage in carrying capacity, then the successful invasion is guaranteed, and the resident species will be sent to extinction. If the invasive species does not have an advantage in carrying capacity, then in Example~\ref{ex:mimick} we show that it is still possible, in some circumstances, for the invader to successfully settle in the habitat.
\item
If originally, the dispersal of two species formed an ideal free pair, a slight perturbation of the dispersal strategy for one of them does not change the outcome of competition. Coexistence is preserved under small changes. However, this is not true if originally we observed one species competitively excluding the other.
\item If two populations are being harvested/or are naturally existing in such a way that they form an ideal free pair, then we state several conditions, under which perturbations in the form of harvesting will not affect the survival of the species being perturbed. We show that there are many configurations of harvesting perturbation that can preserve survival, including small perturbations, perturbations which are moderated by the dispersal strategy, and perturbations that are effectively stocking instead of harvesting.
\end{enumerate}

There are many questions arising from the current research that have not yet been answered.
\begin{enumerate}
\item
In Theorem~\ref{mimic_invasion}, if $m_1$ and $P$ are linearly independent, is the requirement that $v$ has  an advantage in the carrying capacity required to observe successful invasion? Prove or disprove the following conjecture.
\begin{enumerate}
\item
If the invasive species $v$ chooses the dispersal strategy aligned with the stationary solution $u^{\ast}$ of \eqref{eq_u} for the host species $u$,  where
$(m_1,P)$ is a linearly independent pair,
and its carrying capacity $m_2 \equiv  m_1$ on $\Omega$, this ensures successful invasion.
\item
Under the assumptions of a), the competitive exclusion of the host species $u$ is guaranteed. Note that in Example \ref{ex:mimick}, competitive exclusion of the host species was observed for identical carrying capacities.
 \end{enumerate}

\item
Prove or disprove:
\\
For any $P,Q$ in \eqref{ssystem},  where $m_1=k m_2$, once $k \geq k^{\ast} (P,Q)$,  whatever $P$ and $Q$ are,
there is competitive exclusion of the second species.
\\
If the statement above is true, is it possible to effectively evaluate $k^{\ast} (P,Q)$ ?
\item
We explored small perturbations in the diffusion and the harvesting strategies. In which cases can we state that small perturbations in the harvesting effort still keep species coexistence, when originally there is a stable coexistence equilibrium, but not necessarily an ideal free pair?  
\item
Spatial adaptation of harvesting strategies is required when certain areas are over-harvested, see \cite{CabanellasReboredoetal2017}. Can we create specific harvesting policies that could be applied in these cases? Some possible strategies could include developing a refuge outside the over-exploited area, or implementing policies which only ban harvesting in certain parts. See Example \ref{ex:perturb} for an illustration of how harvesting can be spatially perturbed to promote the survival of one species or another.

\end{enumerate}

Additionally, all of what was developed in this paper could be studied in a discrete setting, which could open up new avenues for research. For example, the directed diffusion idea has recently been extended \citep{CCZ_2022} to discrete models.

\section*{Acknowledgments}
The research of both authors was partially supported by the Natural Sciences and Engineering Research Council of Canada (NSERC). E. Braverman and J. Lawson were both partially supported by NSERC Discover Grant No. RGPIN-2020-03934. J. Lawson was supported by the NSERC Canada Graduate Scholarship - Doctoral.
The authors are very grateful to the anonymous reviewer whose thoughtful comments significantly improved presentation of our results.

\section*{Data Availability Statement}

The manuscript has no associated data.

\section*{Compliance with Ethical Standards}

The authors have no competing interests to declare that are relevant to the content of this article.
Both authors certify that they have no affiliations with or involvement in any organization or entity with any financial interest or non-financial interest in the subject matter or materials discussed in this manuscript. The research did not involve human or animal participants.
The funding information is given in Acknowledgments Section. 
Both authors contributed to the study conception and design. Problem statement - E. Braverman, analysis - both authors, simulations - J. Lawson. 
Both authors read and approved the final manuscript.

\section*{Appendix A - Proofs}

\subsection{Proof of Auxiliary results - Section~\ref{sec:auxil}}

{\em Proof of  Lemma~\ref{Lsteady2}.}   
Since $u^{\ast}>0$ and $P(x)>0$ for any $x\in \Omega$, dividing equation \eqref{eq_u} by $u^{\ast}$ and multiplying by $P$, we get
\begin{equation}\label{eq_u3}
\begin{cases}
\displaystyle \frac{\nabla \cdot [a_1 \nabla (u^{\ast}/ P)]}{(u^{\ast}/ P)} + P(x)\left(  m_1 (x)  - 
u^{\ast}(x)    \right) = 0,\; x\in\Omega,\;\\
\displaystyle \pdv{}{n} \bigg(\frac{u^*}{P} \bigg)=0,\; x\in\partial\Omega.
\end{cases}
\end{equation} 
Let us integrate \eqref{eq_u3} over $\Omega$.  Applying integration by parts and the boundary conditions in \eqref{eq_u3}, we obtain
\begin{equation*}\label{eq_u4}
\int \limits_\Omega \frac{a_1 |\nabla (u^{\ast}/P)|^2}{(u^{\ast}/P)^2}\,dx+\int \limits_\Omega P
\left(m_1-  u^{\ast}  \right)\,dx=0.
\end{equation*}
Note that $a_1$ is positive, so the first integral is non-negative leading to \eqref{eq_u2_1}. The fact that $ \nabla (u^{\ast}/P) \equiv 0$ on $\Omega$ would imply that $u^{\ast}$ is proportional to $P$.
This is only possible if  in  \eqref{eq_u} for some constant $c$ we have $m_1(x) -  cP \equiv 0$, i.e. $P$ and $m_1$ are linearly dependent, which contradicts the assumption of the lemma.

Therefore,
\begin{equation*}\label{eq_u5}
\int \limits_\Omega   P(x)\left( u^{\ast}(x)-  m_1(x)  \right)\,dx=
\int \limits_\Omega \frac{a_1(x)|\nabla (u^{\ast}/P)|^2}{(u^{\ast}/P)^2}\,dx> 0,
  \end{equation*}
which concludes the proof.

\bigskip

{\em Proof of  Lemma~\ref{Lsteady1}.}
Integrating  \eqref{eq_u} over $\Omega$ while applying the boundary conditions gives
\begin{equation*}
0=\int_{\Omega}   u^{\ast} \left( m_1-   u^{\ast}  \right)~dx=
- \int_{\Omega}  \left(   u^{\ast}- m_1 \right)^2 dx
+\int_{\Omega} m_1  \left(m_1-   u^{\ast}     \right)~dx.
\end{equation*}
Since the first integral in the right-hand side is non-positive, the second integral  satisfies
\begin{equation*}
\displaystyle \int_{\Omega} m_1  \left(m_1-   u^{\ast}     \right)~dx \geq 0.
\end{equation*}  
Moreover,  it is positive unless the first integral vanishes which happens for $u^{\ast} \equiv m_1$ only, which would imply 
$\nabla \cdot [a_1(x)\nabla (m_1(x)/P(x))] \equiv 0$ on $\Omega$. The latter identity 
contradicts the assumption of the lemma and justifies inequality \eqref{eq_est1abc}.

\bigskip

{\em Proof of  Lemma~\ref{Lsteady2abc}.}
Since $u_s(x)>0$ and $P(x)>0$ for any $x\in \Omega$, dividing the first equation in \eqref{coexist_system}  by $u_s$ and multiplying by $P$, we get
\begin{equation}\label{eq_u3_coexistence}
\begin{cases}
\displaystyle
\frac{\nabla \cdot [a_1 \nabla ( u_s /P)]}{(u_s+v_s)/ P} + P(x)\left(  m_1 (x)  -  u_s(x)  - v_s(x)     \right) = 0, \quad  x\in\Omega,  \vspace{2mm} \\
\displaystyle 
\pdv{}{n}\bigg(\frac{u_s}{P} \bigg)=0, \quad x\in\partial\Omega.
\end{cases}
\end{equation} 
Integrating  the equation in \eqref{eq_u3_coexistence} over $\Omega$ and applying integration by parts and the boundary conditions in \eqref{eq_u3_coexistence}, we obtain
\begin{equation*}
 \int \limits_\Omega \frac{a_1 |\nabla (u_s/P)|^2}{(u_s/P)^2}\,dx+\int \limits_\Omega P(x)
\left(m_1 -  u_s - v_s  \right)\,dx=0.
\end{equation*}
Hence, due to positivity of $a_1$ and the fact that $ (u_s+v_s)/P$ is non-constant on $\Omega$, we get
\begin{equation*} 
\int \limits_\Omega   P(x)\left( u_s(x) + v_s (x) -  m_1(x)  \right)\,dx=
\int \limits_\Omega \frac{a_1(x)|\nabla (u_s/P)|^2}{ ( u_s/P)^2}\,dx> 0.
  \end{equation*}

The inequality with $v^{\ast}$ is obtained similarly. We remark that the non-strict inequalities are valid, independently of the linear dependence or independence of $u_s$,$v_s$ and $P$, $Q$, respectively.

\bigskip

{\em Proof of  Lemma~\ref{Lsteady1abc}.}
Let $m_1(x) \geq m_2(x)$  hold on $\Omega$. Integrating both equations of \eqref{coexist_system} over $\Omega$ while applying the boundary conditions gives
\begin{equation}
\label{eq_lem_23_1}
\int_{\Omega}   u_s \left( m_1 - u_s - v_s \right)\, dx =0, \quad \int_{\Omega}   v_s \left( m_2 - u_s - v_s \right)\, dx =0.
\end{equation}
Note that $m_1(x) \geq m_2(x)$  leads to 
\begin{equation*}
 \int_{\Omega}   v_s \left( m_1 - u_s - v_s \right)\, dx = \int_{\Omega}   v_s \left( m_2 - u_s - v_s \right)\, dx +  \int_{\Omega}  v_s \left( m_1 -  m_2 \right)\, dx   \geq 0,
\end{equation*}
since in the second integral $v_s>0$, $m_1-m_2 \geq 0$, while the first integral is equal to zero.
Summing up the latter inequality with the first integral in \eqref{eq_lem_23_1}, we get
\begin{align*}
0 & \leq  \int_{\Omega}   (u_s + v_s)  \left( m_1 - u_s - v_s \right)\, dx \\
& =   \int_{\Omega}   (u_s + v_s - m_1)  \left( m_1 - u_s - v_s \right)\, dx +  \int_{\Omega}   m_1  \left( m_1 - u_s - v_s \right)\, dx
\\
& =  - \int_{\Omega}    \left( m_1 - u_s - v_s \right)^2 \, dx +  \int_{\Omega}   m_1  \left( m_1 - u_s - v_s \right)\, dx.
\end{align*}
As the first integral is non-positive and the sum is non-negative, the second integral is non-negative
\begin{equation*}
\int_{\Omega}   m_1  \left( m_1 - u_s - v_s \right)\, dx  \geq \int_{\Omega}    \left( m_1 - u_s - v_s \right)^2 \, dx  \geq    0.
\end{equation*}
Let us prove that a strict inequality holds. The equality will mean that $m_1 \equiv u_s + v_s$,  which contradicts the assumption of the theorem. 
Therefore  \eqref{eq_est1} is satisfied. The second part of the statement is justified similarly.

\bigskip

{\em Proof of  Lemma~\ref{semi_p1}.}
Consider the eigenvalue problem  associated with the second equation in \eqref{ssystem} 
around the equilibrium $(u^{\ast}(x),0)$   \citep{CC}  
 \begin{equation}\label{eig_p1}
\begin{cases}
\displaystyle 
\nabla \cdot \left[ a_2(x) \nabla \left(\frac{\displaystyle \psi(x) }{\displaystyle Q(x)}\right) \right]
+   \psi(x)  \left(m_2(x) -  u^{\ast}(x)  \right)=\sigma \psi(x),\; x\in \Omega,\\
\displaystyle   \pdv{}{n}\bigg( \frac{\psi}{Q}\bigg)=0,\; x\in\partial\Omega .
\end{cases}
\end{equation}
The principal eigenvalue of \eqref{eig_p1}  is defined as \citep{CC}  
\begin{equation*}
\sigma_1 =
\sup_{\psi \neq 0, \psi\in W^{1,2}} \left. \left[
-\int \limits_\Omega a_2 \left| \nabla \left( \psi/Q \right) \right|^2\,dx
+\int \limits_\Omega   \frac{\psi^2}{Q} \left( m_2 - u^{\ast}  \right)\,dx\right]
\right/
\int \limits_\Omega \frac{\psi^2}{Q}\,dx.
\end{equation*}
Let $\psi(x)= Q(x)$, we recall that 
$\displaystyle
\int \limits_\Omega  Q(x) \,dx = 1
$
by \eqref{normal}.
Substituting $\psi(x)= Q(x)$, we get that the first integral in the numerator vanishes, and
thus  the principal eigenvalue satisfies
\begin{align*}
\sigma_1 &  \geq  \int \limits_\Omega    Q(x)  \left(m_2(x)-  u^{\ast}(x)   \right) 
\,dx 
\\ & =  \int \limits_\Omega    Q(x)  \left( m_1(x)-  u^{\ast}(x)   \right) 
\,dx +  \int \limits_\Omega   Q(x)\left(  m_2 (x)-  m_1 (x)   \right)  \,dx 
\\
&  =  \frac{1}{\beta} \int \limits_\Omega (m_1(x) - \alpha P(x))\left(m_1(x)  -  u^{\ast}(x)  \right)  \,dx
 +    \int \limits_\Omega   Q(x)\left(  m_2 (x)-  m_1 (x)   \right)  \,dx    \\
& \geq  \frac{1}{\beta} \int \limits_\Omega m_1 (x)   \left(  m_1(x)  -  u^{\ast}(x)   \right) \,dx
+\frac{\alpha}{\beta} \int \limits_\Omega  P(x)\left(   u^{\ast}(x)  -  m_1(x) \right) \,dx >0.
\end{align*}
Here the first term is non-negative by \eqref{eq_est1abc} in   Lemma~\ref{Lsteady1}, while the second is positive by \eqref{eq_u2} in 
Lemma~\ref{Lsteady2}. 
Therefore, $\sigma_1$ is positive, and 
the semi-trivial steady state $(u^{\ast}(x),0)$ of \eqref{ssystem} is unstable.

\bigskip

{\em Proof of  Lemma~\ref{semi_proportional}.}
First, let us prove that  the semi-trivial steady state $(0,v^{\ast}(x))$ of \eqref{ssystem}  is unstable.
Consider the eigenvalue problem  associated with the first equation in \eqref{ssystem} 
around the equilibrium $(0,v^{\ast}(x))$  
 \begin{equation}\label{eig_p1a}
\begin{cases}
\displaystyle 
\nabla \cdot \left[ a_1(x) \nabla \left(\frac{\displaystyle \psi(x) }{\displaystyle P(x)}\right) \right]
+   \psi(x)  \left(m(x) -  v^{\ast}(x)  \right)=\sigma \psi(x),\; x\in \Omega,\\
\displaystyle   \pdv{}{n}\bigg( \frac{\psi}{P}\bigg)=0,\; x\in\partial\Omega .
\end{cases}
\end{equation}
The principal eigenvalue of \eqref{eig_p1a}  is defined as 
\begin{equation*}
\sigma_1 =
\sup_{\psi \neq 0, \psi\in W^{1,2}} \left. \left[
-\int \limits_\Omega a_1  \left| \nabla \left( \psi/P \right) \right|^2\,dx
+\int \limits_\Omega   \frac{\psi^2}{P} \left( m - v^{\ast}  \right)\,dx\right]
\right/
\int \limits_\Omega \frac{\psi^2}{P}\,dx.
\end{equation*}
Let $\psi(x)= P(x)$,  
we get that the first integral in the numerator vanishes, and
thus  the principal eigenvalue satisfies
$$
\sigma_1   \geq  \int \limits_\Omega    P(x)  \left(m(x)-  v^{\ast}(x)   \right) 
\,dx  = \frac{1}{\alpha}  \int \limits_\Omega    m(x)  \left(m(x)-  v^{\ast}(x)   \right) 
\,dx  >0
$$
 by \eqref{eq_est1abc} in   Lemma~\ref{Lsteady1}.
Therefore, $\sigma_1$ is positive, and  the semi-trivial steady state $(0,v^{\ast}(x))$ of \eqref{ssystem} is unstable.


Next, let us prove that there is no coexistence equilibrium. Assume the contrary that such equilibrium $(u_s,v_s)$ exists satisfying \eqref{coexist_system} with $m_1 \equiv m_2 \equiv m$. Adding the two equations in \eqref{coexist_system} together  and integrating using the boundary conditions leads to
\begin{equation*}
\int_{\Omega} (u_s+v_s)   \left(m- u_s - v_s \right)~dx = 0.
\end{equation*}
This is equivalent to 
\begin{equation*}
\int_{\Omega} (u_s+v_s)   \left(m- u_s - v_s \right)~dx = - \int_{\Omega} \left(m- u_s - v_s \right)^2 ~dx + \int_{\Omega} m  \left(m- u_s - v_s \right)~dx =0.
\end{equation*}
Therefore
\begin{equation}
\label{aux1}
\int_{\Omega} m  \left(m- u_s - v_s \right)~dx = \int_{\Omega} \left(m- u_s - v_s \right)^2 ~dx  > 0,
\end{equation}
unless $u_s+v_s \equiv  m$. If $u_s+v_s \equiv m$, we get two boundary value problems
\begin{equation*}
\begin{cases}
\displaystyle\nabla \cdot \left[ a_1(x) \nabla \left( \frac{u_s(x)}{P(x)} \right)  \right]
= 0,\; x\in\Omega, \\  
\displaystyle\pdv{}{n}\bigg( \frac{u_s}{P}\bigg) =  0,\; x\in\partial\Omega
\end{cases}
\end{equation*}
and
\begin{equation*}
\begin{cases}
\displaystyle \nabla \cdot \left[  a_2(x) \nabla \left( \frac{v_s(x)}{Q(x)} \right)  \right] =0
 ,\; x\in\Omega, \\   
\displaystyle  \pdv{}{n}\bigg( \frac{v_s}{Q}\bigg) = 0, \;x\in\partial\Omega
\end{cases}
\end{equation*}
The first one has solutions proportional to $P$, the second one proportional to $Q$, following the maximum principle (we can make substitutions $u_s/P = w_s, v_s/Q = z_s$ leading to constant solutions and apply the Maximum Principle to $w_s, z_s$) \citep{GilbargTrudinger}. However, as $P$ and $m$ are proportional and independent of $Q$, the only linear combination $C_1 P+ C_2 Q=m$ is possible for $C_1=\alpha$, $C_2 =0$, a contradiction to the assumption of coexistence. Thus $u_s + v_s \not\equiv m$.

Next, let $u_s+v_s \not\equiv m$. Consider the eigenvalue problem for \eqref{ssystem} associated with the first equation linearized around $(u_s, v_s)$,
 \begin{equation}
\label{add2}
\begin{cases}
\nabla \cdot \left[ a_1 \nabla \left(\frac{\displaystyle \psi(x) }{\displaystyle P(x)}\right) \right]
+   \psi(x)  \left(m -  u_s - v_s  \right)=\sigma \psi(x),\; x\in \Omega,\;\\
\displaystyle \pdv{}{n}\bigg(\frac{\psi}{P} \bigg) = 0, \;x\in\partial\Omega .
 \end{cases}
\end{equation}
Its  principal eigenvalue is  \citep{CC}
\begin{equation*}
\sigma_1 =
\sup_{\psi \neq 0, \psi\in W^{1,2}} \left. \left[
-\int \limits_\Omega a_1 |\nabla (\psi/P)|^2\,dx
+\int \limits_\Omega   \frac{\psi^2}{P} \left(m - u_s - v_s   \right)\,dx\right]
\right/
\int \limits_\Omega \frac{\psi^2}{P}\,dx.
\end{equation*}
 Substituting $\displaystyle \psi(x)= P(x) = \frac{1}{\alpha} m(x)$, where $\alpha>0$, we get by \eqref{aux1} (the integral in the denominator is equal to one),
\begin{equation}
\label{sigma}
\sigma_1  \geq \frac{1}{\alpha}  \int_{\Omega} m  \left(m- u_s - v_s \right)~dx > 0.
\end{equation}

As $(u_s,v_s)$ is an equilibrium solution, $u_s$  satisfies
\begin{equation*}
\begin{cases}
 \displaystyle  \left[ a_1 \nabla \left(\frac{\displaystyle u_s (x) }{\displaystyle P(x)}\right) \right] + u_s \left(m- u_s - v_s \right)=0, ~~x\in{\Omega},  
 \\
\displaystyle \pdv{}{n}\bigg(\frac{u_s}{P} \bigg)= 0,\; x\in
{\partial}{\Omega}
\end{cases}
\end{equation*}
and is therefore a positive principal eigenfunction of \eqref{add2} associated with the principal eigenvalue zero. 
The contradiction with \eqref{sigma}  excludes the case $u_s+v_s \not\equiv m$ as well,
which completes the proof.

\subsection{Proof of Main results - Section~\ref{sec:main}}

In this section, we provide proofs of all the main results excluding Theorem \ref{coexist_dev}, as its justification requires more effort.

\bigskip

{\em Proof of  Theorem~\ref{stability_coexistence_1}.}
All the conditions of both Lemma~\ref{semi_p1} and Lemma~\ref{semi_p2} are satisfied as  both pairs $(P,m)$ and $(Q,m)$ are linearly independent, the inequalities $m_1 \leq m_2$ and $m_2 \leq m_1$ are satisfied.
Therefore both semi-trivial equilibrium solutions are unstable. By Lemma~\ref{zero_equilibrium}, the zero equilibrium is repelling.
Thus  by Lemma~\ref{equil_charac}, the coexistence equilibrium $(\alpha P,\beta Q)$, if unique,  is a global attractor. 

It remains to prove that there are no other coexistence equilibrium solutions. We have to consider two cases: the coexistence equilibrium $(u_s,v_s)$ satisfies $u_s+v_s \equiv m$ and $u_s+v_s\not\equiv m(x)$. First, let $u_s+v_s \equiv m$. Then, denoting $w=u/P$, we get from the first equation in \eqref{ssystem} together with the boundary condition, that $w_s=u_s/P$ satisfies the following boundary value problem
\begin{equation*}
\begin{cases}
 \displaystyle  \nabla \cdot \left[ a_1(x)  \nabla w_s(x) \right] = 0, \quad x\in \Omega, \\
    \frac{\displaystyle \partial w_s}{\displaystyle \partial n}=0, \quad x\in\partial\Omega. 
 \end{cases}
\end{equation*}
This problem only has constant solutions \cite[Theorem 3.6]{GilbargTrudinger} $w_s \equiv \gamma > 0$ corresponding to $u_s = \gamma P(x)$. Similarly, $v_s = \delta Q(x)$  for some $\delta >0$. Since $u_s + v_s = m$, that implies $m(x) \equiv \gamma P(x) + \delta Q(x)$ for some  $\gamma > 0$, $\delta >0$. 
By the assumptions of the theorem $\alpha P + \beta Q \equiv m$, hence
we get 
\begin{equation*}
(\alpha-\delta) P(x) + (\gamma - \delta) Q(x) = 0 \mbox{ ~~ on ~ } \Omega.
\end{equation*}
Linear independence of $P$ and $Q$ on $\Omega$ leads to $\alpha=\gamma$, $\beta=\delta$, and shows that the only coexistence equilibrium that satisfies this case is $(u_s,v_s) = (\alpha P, \beta Q)$.

Second, let us assume that there exists another coexistence equilibrium $(u_s,v_s)$ such that $u_s(x)+v_s(x)\not\equiv m(x)$. Consider the eigenvalue problem associated with the first equation linearized around $(u_s,v_s)$
\begin{equation}
\begin{cases}
  \displaystyle  \nabla \cdot \left[ a_1(x)  \nabla \left( \frac{\displaystyle \psi(x)}{\displaystyle P(x)}\right) \right]
 \displaystyle + \psi(x) \left(m(x)-  u_s(x) - v_s(x) \right) = \sigma\psi(x), & x\in\Omega,  \\
\displaystyle \pdv{}{n}\bigg( \frac{\psi}{P}\bigg) = 0, \;x\in\partial \Omega.
\end{cases}
\label{eig2}
\end{equation}

According to the variational characterization of eigenvalues \citep{CC}, its principal eigenvalue is given by 
\begin{equation*}
\sigma_1=\sup_{\psi\neq 0,\psi\in W^{1,2}} \left[-\displaystyle \int\limits_{\Omega}  \left|  a_1 \nabla   \left( \psi/P \right) \right|^2~dx
+\int\limits_{\Omega} \frac{ \psi^2}{P}   (m(x)- u_s - v_s)\, dx\right] 
\bigg/
\displaystyle \int\limits_{\Omega} (\psi^2/P) \, dx.
\end{equation*}
Substituting $\psi(x) = P(x)$, 
we obtain
\begin{equation}
\label{inter_eig}
\begin{array}{ll}
\sigma_1 & \displaystyle  \geq \int\limits_{\Omega} P(x)  \left( m - u_s -  v_s  \right) \, dx
\\ &     \displaystyle    = \frac{1}{\alpha} \int\limits_{\Omega} \left( m - \beta Q \right)   \left( m - u_s -  v_s  \right) \, dx       
\\ &  \displaystyle    = \frac{1}{\alpha} \int\limits_{\Omega} m \left( m - u_s -  v_s  \right) \, dx   + \frac{\beta}{\alpha} \int\limits_{\Omega} Q \left( u_s +  v_s - m  \right) \, dx 
>0.
\end{array}
\end{equation}
Here the first inequality is due to the definition of the principal eigenvalue as a supremum. In the final sum, the first integral is positive
due to \eqref{eq_est1} in Lemma~\ref{Lsteady1abc} unless $u_s+v_s \equiv m$, which contradicts the assumption, and the second integral is also positive by Lemma \ref{Lsteady2abc}. However, since $(u_s,v_s)$ is an equilibrium solution of \eqref{ssystem}, $u_s,v_s$ satisfy
\begin{equation*}
\begin{cases}
 \displaystyle  \nabla \cdot \left[ a_1(x)  \nabla \left( \frac{\displaystyle u_s(x)}{\displaystyle P(x)}\right) \right]
 \displaystyle +  u_s (x) \left(m_1(x)-  u_s(x) - v_s(x) \right) = 0, ~~x\in\Omega, \\
\displaystyle \pdv{}{n}\bigg(\frac{u_s}{P} \bigg) =0,~x\in
{\partial}{\Omega}.
\end{cases}
\end{equation*}
and $u_s$ is therefore a positive principal eigenfunction of \eqref{eig2}  with the principal eigenvalue $0$. This contradiction with \eqref{inter_eig} justifies uniqueness of the coexistence equilibrium.

\bigskip

{\em Proof of  Theorem~\ref{stability_exclusion_1}.}
By Lemma~\ref{semi_proportional}, the semi-trivial equilibrium $(0,v^*)$ is unstable, and there is no coexistence equilibrium. Then, Lemma~\ref{equil_charac} implies global attractivity of the other semi-trivial equilibrium $(\alpha P, 0)$.

\bigskip

{\em Proof of  Theorem~\ref{stability_coexistence_2}.}
Let us prove the second part with a fixed $\mu>0$.  
Consider 
\begin{equation}
\label{composition}
M(x) = \frac{\mu}{\mu+1} P(x) + \frac{1}{\mu+1} Q(x) > 0, \quad x \in \Omega,
\end{equation}
then the minimum  $\zeta$  of the two continuous positive functions on the closed bounded domain $ \bar{\Omega}$   is also positive
\begin{equation}
\label{carrying}
\zeta  :=     \min\left\{   \min_{x \in \bar{\Omega}   } \frac{K_1(x)}{M(x)} ,   \min_{x \in \bar{\Omega}   } \frac{K_2(x)}{M(x)}  \right\} > 0.
\end{equation}
Choosing $\delta \in (0,1]$, denoting $\kappa = \delta \zeta $, $\zeta$ is as  in \eqref{carrying} and setting $E_1,E_2$ as
\begin{equation}
\label{harvesting_level}
E_1(x) = K_1(x)-\kappa   M(x) \geq 0, \quad  E_2(x) = K_2(x)-\kappa    M(x) \geq 0,
\end{equation}
we get system \eqref{system} which is equivalent to \eqref{ssystem} with $m(x)= \kappa  M(x)$, where $M$ is described in \eqref{composition}.
Here for $\delta = 1$ the minimal level of culling is achieved, while for $\delta \to 0$ the remaining harvested populations are infinitesimal, while relation \eqref{ratio} is preserved. Thus
\begin{equation*}
m(x) = \frac{\kappa \mu}{\mu+1} P(x) + \frac{\kappa}{\mu+1} Q(x) > 0, \quad x \in \Omega
\end{equation*}
and
\begin{equation*}
\left(u_s, v_s \right) = \left(   \frac{\kappa \mu}{\mu+1} P(x) , \frac{\kappa}{\mu+1} Q(x) \right) = (\alpha P(x), \beta Q(x)), 
~~\alpha :=  \frac{\kappa \mu}{\mu+1}, ~ \beta := \frac{\kappa}{\mu+1}
\end{equation*}
is a coexistence equilibrium of \eqref{ssystem} and thus of \eqref{system} with $E_1$, $E_2$ as described in \eqref{harvesting_level}.

By Theorem~\ref{stability_coexistence_1}, all solutions of \eqref{ssystem} (and thus of \eqref{system} )  with non-negative non-trivial initial conditions converge to the coexistence equilibrium $(\alpha P,\beta Q)$. Further, by the definition of $\alpha$ and $\beta$, we have 
$$ \alpha = \mu \beta, ~~  \int_{\Omega} u_s \, dx = \int_{\Omega} \alpha P \, dx = \mu \beta \int_{\Omega} P\, dx
= \mu \beta \int_{\Omega} Q\, dx 
= \mu \int_{\Omega} v_s \, dx$$
due to normalization assumption \eqref{normal}, which concludes the proof.

\bigskip

{\em Proof of  Theorem~\ref{mimic_invasion}.}
Note that under the assumptions of the theorem, \eqref{ssystem} takes the form
\begin{align}
\label{ssystem_2}
\begin{cases}
\displaystyle \frac{\displaystyle \partial u}{\displaystyle \partial t}
=   \displaystyle  \nabla \cdot \left[ a_1(x)  \nabla \left( \frac{\displaystyle u(t,x)}{\displaystyle P(x)}\right) \right]
 \displaystyle + u(t,x) \left(m_1(x)-  u(t,x) - v(t,x) \right), \vspace{1mm}
\\
\displaystyle \frac{\displaystyle \partial v}{\displaystyle \partial t}
=   \nabla \cdot \left[ a_2(x)  \nabla \left( \frac{\displaystyle v(t,x)}{\displaystyle u^{\ast}(x)}\right) \right]  \displaystyle + v(t,x) \left(m_2(x) - u(t,x) - v(t,x) \right),
\\
t>0,\;x\in{\Omega},\quad m_2(x) \geq m_1(x),~~x\in{\Omega},~~ m_2(x) > m_1(x),~~x\in{\Omega_1} \subseteq \Omega,
\\
\displaystyle \frac{\partial}{\partial n}\bigg( \frac{u}{P}\bigg) = \frac{\partial}{\partial n} \bigg( \frac{v}{u^*}\bigg) = 0, \quad x\in \partial \Omega.
\end{cases}
\end{align}

In order to guarantee successful invasion, we have to show that the competitive exclusion equilibrium, $(u^{\ast}(x),0)$, is unstable. Let us refer to the second equation in \eqref{ssystem_2}. Consider the eigenvalue problem associated with the second equation in \eqref{ssystem_2} around the semi-trivial equilibrium $(u^{\ast}(x),0)$  which is 
 \begin{equation}\label{eig_p1_invade}
\begin{cases}
\displaystyle
\nabla \cdot \left[ a_2(x) \nabla \left(\frac{\displaystyle \psi(x) }{\displaystyle u^{\ast}(x)}\right) \right]
+   \psi(x)  \left(m_2(x) -  u^{\ast}(x)  \right)=\sigma \psi(x),  \;  x\in \Omega, \\
 \displaystyle \pdv{}{n}\bigg(\frac{\psi}{u^*} \bigg)=0,  \;  x\in\partial\Omega.
 \end{cases}
\end{equation}
The principal eigenvalue of \eqref{eig_p1_invade} is defined as \citep{CC}
$$
\sigma_1 =
\sup_{\psi \neq 0, \psi\in W^{1,2}} \left. \left[
-\int \limits_\Omega a_2 |\nabla (\psi/u^{\ast} )  |^2\,dx
+\int \limits_\Omega   \frac{\psi^2}{u^{\ast}} \left(m_2 - u^{\ast}  \right)\,dx\right]
\right/
\int \limits_\Omega \frac{\psi^2}{u^{\ast}}\,dx.
$$
Substituting $ \psi(x) = u^{\ast}(x)$, we get that the principal eigenvalue is not less than
\begin{align*}
\sigma_1 &  \geq \left. \left[  0 +  \int \limits_\Omega     u^{\ast}(x)  \left(m_2(x)-  u^{\ast}(x)   \right) 
\,dx \right] \right/
\int \limits_\Omega u^{\ast}  \,dx
\\ & =   \left. \left[  \int \limits_\Omega    u^{\ast}(x)  \left( m_1(x)-  u^{\ast}(x)   \right) 
\,dx +     \int \limits_\Omega    u^{\ast}(x) \left(  m_2 (x)-  m_1 (x)   \right)  \,dx   \right] \right/
\int \limits_\Omega u^{\ast}  \,dx
\\
&=   \left[   \int \limits_\Omega    u^{\ast}(x)  \left( m_1(x)-  u^{\ast}(x)   \right) 
\,dx +     \int \limits_{\Omega_1}   u^{\ast}(x) \left(  m_2 (x)-  m_1 (x)   \right)  \,dx  \right. \\
& ~~~+      \left.  \left. \int \limits_{\Omega \backslash \Omega_1}    u^{\ast}(x) \left(  m_2 (x)-  m_1 (x)   \right)  \,dx  \right] \right/
\int \limits_\Omega u^{\ast}  \,dx.
\end{align*}
Integrating \eqref{eq_u} and applying the corresponding boundary conditions, it can be shown that the first integral in the last line is equal to zero. Further, since $m_2>m_1$ on $\Omega_1$, and the integral of a positive continuous function over a domain with positive area is positive, thus we find that the second integral in the last line is positive. Finally, the third integral is non-negative, since $u^*>0$ and $m_2\geq m_1$ on $\Omega \backslash \Omega_1$. Thus overall we see that $\sigma_1$ is positive, and the semi-trivial steady state $(u^{\ast}(x),0)$ of \eqref{ssystem_2} is unstable. Because $(u^*,0)$ is unstable, the invasion of the second species will be successful, either coexisting with the resident species, or bringing the resident to extinction.

\bigskip

{\em Proof of  Lemma~\ref{semi_perturbation}.}
We have to prove that for a certain $\varepsilon$, the semi-trivial equilibrium $(u^{\ast}(x),0)$ of \eqref{system2} is unstable. 
To this end, consider the eigenvalue problem associated with the second equation in \eqref{system2} linearized around the equilibrium $(u^{\ast}(x),0)$ 
 \begin{equation}\label{eig_p1_add}
\begin{cases}
\displaystyle 
\nabla \cdot \left[ a_2(x) \nabla \left(\frac{\displaystyle \psi(x) }{\displaystyle Q_1(x)}\right) \right]
+   \psi(x)  \left(m(x) -  u^{\ast}(x)  \right)=\sigma \psi(x), \; x\in \Omega,\\ 
\displaystyle \pdv{}{n} \bigg( \frac{\psi}{Q_1}\bigg)=0, \;  x\in\partial\Omega.
 \end{cases}
\end{equation}
The principal eigenvalue of \eqref{eig_p1_add}  is  
\begin{equation*}
\sigma_1 =
\sup_{\psi \neq 0, \psi\in W^{1,2}} \left. \left[
-\int \limits_\Omega a_2 |\nabla (\psi/Q_1)|^2\,dx
+\int \limits_\Omega   \frac{\psi^2}{Q_1} \left(m - u^{\ast}  \right)\,dx\right]
\right/
\int \limits_\Omega \frac{\psi^2}{Q_1}\,dx
\end{equation*}
and choosing $\psi(x)=  Q_1(x)$, we obtain
\begin{equation*}
\begin{aligned}
\sigma_1 &\geq  \int \limits_\Omega    Q_1(x)  \left(m(x)-  u^{\ast}(x)   \right) \,dx\\
&=  \int \limits_\Omega    (Q_1(x)-Q(x)) \left(m(x)-  u^{\ast}(x)   \right) \,dx
+ \int \limits_\Omega    Q(x)  \left(m(x)-  u^{\ast}(x)   \right) \,dx\\
&  =  \int \limits_\Omega    (Q_1(x)-Q(x)) \left(m(x)-  u^{\ast}(x)   \right) \,dx + 
\frac{1}{\beta} \int \limits_\Omega m (x)\left(  m(x)  -  u^{\ast}(x)   \right) \,dx \\
& +\frac{\alpha}{\beta} \int \limits_\Omega  P(x)\left(   u^{\ast}(x)  -  m(x) \right) \,dx.\\
\end{aligned}
\end{equation*}
Following the proof of Lemma~\ref{semi_p1}, we get that the second term is positive by Lemma~\ref{Lsteady1}, and the third one by Lemma~\ref{Lsteady2}. 
Denote
\begin{equation*}
\displaystyle \gamma := \frac{1}{\beta} \int \limits_\Omega m (x)\left(  m(x)  -  u^{\ast}(x)   \right) \,dx
+\frac{\alpha}{\beta} \int \limits_\Omega  P(x)\left(   u^{\ast}(x)  -  m(x) \right) \,dx >0 .
\end{equation*}
When $m\neq u^*$, choose $Q_1$ such that $\displaystyle \int \limits_\Omega    |Q_1(x)-Q(x)|  \,dx < \varepsilon < \varepsilon_d$, where
\begin{equation}
\label{eps_d}
\varepsilon_d := \displaystyle \frac{\gamma}{\displaystyle \max_{x\in\Omega} |m(x) - u^*(x)|},
\end{equation}
so that
\begin{equation*}
\begin{aligned}
\sigma_1 &\geq  \int \limits_\Omega    (Q_1(x)-Q(x)) \left(m(x)-  u^{\ast}(x)   \right) \,dx + \gamma \\
&\geq  -  \max_{x \in \Omega}  \left| m(x) -  u^{\ast}(x)  \right|
\int \limits_\Omega  \left| Q(x)-Q_1(x) \right| ~dx + \gamma\\
&>  -  \max_{x \in \Omega}  \left| m(x) -  u^{\ast}(x)  \right|
\varepsilon_d + \gamma =0.
\end{aligned}
\end{equation*}
Thus the principal eigenvalue of \eqref{eig_p1_add} is positive, and the equilibrium $(u^*,0)$  of \eqref{system2} is unstable, which implies at least survival of the $v$ population.

Finally, let us justify that $m \equiv  u^{\ast}$  is impossible. Substituting $m=\alpha P + \beta Q$ in the first equation of \eqref{system2} and taking into account the boundary conditions, we get 
$\displaystyle \nabla \cdot \left[ a_1(x)  \nabla \left( \frac{\displaystyle m(x)}{\displaystyle P(x)}\right) \right] \equiv 0$ on $\Omega$, which contradicts the assumption of the lemma.

\bigskip

{\em Proof of  Theorem~\ref{stability_coexistence_3}.}
\noindent(1) The first item immediately follows from Lemma~\ref{semi_perturbation}, where in the proof this $\varepsilon_d$ is evaluated in \eqref{eps_d}.\\
\noindent(2)  By Theorem~\ref{stability_exclusion_1}, if $Q_1(x)=\alpha m(x)$ for some $\alpha>0$ and all $x \in \Omega$, the second species in \eqref{system2} brings the first one to extinction. We can evaluate the distance between the current and the winning strategy for the second species. With the maximum-norm in $\Omega$, we compute 
$$
d_{\max} := \inf_{\alpha > 0} \max_{x \in \bar{\Omega}} \left\{ |Q(x) - \alpha m(x) | \right\}.  
$$
Allowing $|  Q(x) - Q_1(x)| \geq d_{\max}$  for some $x \in \bar{\Omega}$, while choosing $Q_1= \alpha m(x)$, competitive exclusion is achieved.

\bigskip

{\em Proof of  Lemma~\ref{lemma_new_deviation}.}
We have to prove that for a certain $\varepsilon$, the semi-trivial equilibrium $(0,v_1^{\ast}(x))$ of \eqref{system2} is unstable. 
The eigenvalue problem  associated with the first equation in \eqref{system2}
around the equilibrium $(0,v_1^{\ast}(x))$  is
 \begin{equation}\label{eig_p1_add_2}
\begin{cases}
\displaystyle 
\nabla \cdot \left[ a_1(x) \nabla \left(\frac{\displaystyle \psi(x) }{\displaystyle  P(x)}\right) \right]
+   \psi(x)  \left(m(x) -  v_1^{\ast}(x)  \right)=\sigma \psi(x), & x\in \Omega,\\ \displaystyle 
\pdv{}{n}\bigg( \frac{\psi}{P}\bigg) = 0, \quad  x\in\partial\Omega.
 \end{cases}
\end{equation}
The principal eigenvalue of \eqref{eig_p1_add_2}  equals   
\begin{equation*}
\sigma_1 =
\sup_{\psi \neq 0, \psi\in W^{1,2}} \left. \left[
-\int \limits_\Omega a_1 |\nabla (\psi/P)|^2\,dx
+\int \limits_\Omega   \frac{\psi^2}{P} \left(m - v_1^{\ast}  \right)\,dx\right]
\right/
\int \limits_\Omega \frac{\psi^2}{P}\,dx.
\end{equation*}
Choosing $\psi(x)=  P(x)$,  we get
\begin{equation*}
\begin{aligned}
\sigma_1 &\geq  \int \limits_\Omega    P(x)  \left(m(x)-  v_1^{\ast}(x)   \right) \,dx  \\
&=  \frac{1}{\alpha} \int \limits_\Omega    (m(x) - \beta Q(x)) \left(m(x)-  v_1^{\ast}(x)   \right) \,dx
\\ &=  \frac{1}{\alpha}  \int \limits_\Omega  (m(x) - \beta Q)(m(x) - v^*(x))  \,dx +  \frac{1}{\alpha}  \int \limits_\Omega  (m(x) - \beta Q(x))(v(x) - v_1^*(x)) \,dx 
\\  &=    \frac{1}{\alpha}  \int \limits_\Omega  m(x)(m(x) - v^*(x))  \,dx  +  \frac{\beta}{\alpha}  \int \limits_\Omega  Q(x) (v^*(x)-m(x))  \,dx \\ & +   
 \frac{1}{\alpha}  \int \limits_\Omega  (m(x) - \beta Q(x))(v(x) - v_1^*(x)) \,dx.
\end{aligned}
\end{equation*}
Denote
\begin{equation*}
\delta :=    \int \limits_\Omega  m(x)(m(x) - v^*(x))  \,dx  +   \beta \int \limits_\Omega  Q(x) (v^*(x)-m(x))  \,dx > 0,
\end{equation*}
as the first term is positive by Lemma~\ref{Lsteady1}, and the second term by 
Lemma~\ref{Lsteady2}.
Choose
\begin{equation*}
\varepsilon = \sup_{x \in \Omega} \left| v^*(x)-v_1^*(x) \right|  < \delta \left(  \left|  \int \limits_\Omega  (m(x) - \beta Q(x))\, dx  \right| \right)^{-1},
\end{equation*}
if the integral is not zero, we get $\sigma_1>0$ for the principal eigenvalue of \eqref{eig_p1_add_2}. If the integral is zero, $\sigma_1>0$ as a sum of two positive integrals.
Thus the first species survives, which completes the proof.

\bigskip

{\em Proof of  Theorem~\ref{semi_perturbation_aligned}.}
We have to prove that for a certain $\varepsilon$, the semi-trivial equilibrium $(u^{\ast}(x),0)$ of \eqref{system2} is unstable. To this end, consider the eigenvalue problem associated with the second equation in \eqref{system2} around the equilibrium $(u^{\ast}(x),0)$ 
 \begin{equation}\label{eig_p1_add_4}
\begin{cases}
\displaystyle 
\nabla \cdot \left[ a_2(x) \nabla \left(\frac{\displaystyle \psi(x) }{\displaystyle Q_1(x)}\right) \right]
+   \psi(x)  \left(m_2(x) -  u^{\ast}(x)  \right)=\sigma \psi(x), & x\in \Omega,\\ 
\displaystyle 
 \pdv{}{n}\bigg( \frac{\psi}{Q_1}\bigg)=0, \quad  x\in\partial\Omega.
\end{cases}
\end{equation}
The principal eigenvalue of \eqref{eig_p1_add_4}  is 
\begin{equation*}
\sigma_1 =
\sup_{\psi \neq 0, \psi\in W^{1,2}} \left. \left[
-\int \limits_\Omega a_2 |\nabla (\psi/Q_1)|^2\,dx
+\int \limits_\Omega   \frac{\psi^2}{Q_1} \left(m - u^{\ast}  \right)\,dx\right]
\right/
\int \limits_\Omega \frac{\psi^2}{Q_1}\,dx,
\end{equation*}
and choosing $\psi(x) = \beta Q_1(x) > 0$,  we get
\begin{equation*}
\begin{aligned}
\sigma_1 \geq& \frac{1}{\beta^2} \int \limits_\Omega  \beta^2   Q_1(x)  \left(m(x)-  u^{\ast}(x)   \right) \,dx \\ 
= & \int \limits_\Omega    (Q_1(x) -  Q(x)) \left(m(x)-  u^{\ast}(x)   \right) \,dx
+ \int \limits_\Omega    Q(x)  \left(m(x)-  u^{\ast}(x)   \right) \,dx
\\ 
= &  \int \limits_\Omega   (Q_1(x)-Q(x)) \left(m(x)-  u^{\ast}(x)   \right) \,dx + \frac{1}{\beta} \int \limits_\Omega  m(x) \left(m(x)-  u^{\ast}(x)   \right) \,dx .
\end{aligned}
\end{equation*}
However, according to \eqref{eq_est1abc} in Lemma~\ref{Lsteady1},
\begin{equation*}
A:= \int\limits_{\Omega}m \left( m(x) - u^{\ast}(x) \right)  \,dx   > 0.
\end{equation*}
Notice that $u^{\ast}(x)   \not\equiv m(x)$, and denote
\begin{equation*}
\varepsilon_0  := \frac{A}{\beta }  \left(  \max_{x \in \Omega} \left|   m(x) -  u^{\ast}(x) \right|   \right)^{-1}.
\end{equation*}
Following the same logic as in the proof of Lemma \ref{semi_perturbation} and choosing $Q_1$ such that $\varepsilon < \varepsilon_0$, we get a positive principal eigenvalue. Thus, the second species sustains, either in combination with its competitor or excluding it, which completes the proof.

\bigskip

{\em Proof of  Theorem~\ref{perturbation_harvesting}.}
Similarly to \eqref{eig_p1}, we consider the eigenvalue problem associated with the second equation in \eqref{ssystem3} 
linearized around the equilibrium $(u^{\ast}(x),0)$ 
 \begin{equation*}
\begin{cases}
\displaystyle 
\nabla \cdot \left[ a_2(x) \nabla \left(\frac{\displaystyle \psi(x) }{\displaystyle Q(x)}\right) \right]
+   \psi(x)  \left(m(x) - E_p(x) -  u^{\ast}(x)  \right)=\sigma \psi(x),\; x\in \Omega,\\
\displaystyle   \pdv{}{n} \bigg(\frac{\psi}{Q} \bigg)=0,\; x\in\partial\Omega,
\end{cases}
\end{equation*}
where the principal eigenvalue of \eqref{eig_p1}  is 
\begin{equation*}
\sigma_1 =
\sup_{\psi \neq 0, \psi\in W^{1,2}} \left. \left[
-\int \limits_\Omega a_2 \left| \nabla \left( \frac{\psi}{Q} \right) \right|^2\,dx
+\int \limits_\Omega   \frac{\psi^2}{Q} \left( m - E_p - u^{\ast}  \right)\,dx\right]
\right/
\int \limits_\Omega \frac{\psi^2}{Q}\,dx.
\end{equation*}
Let us remark that the assumptions of the theorem guarantee $u^{\ast} \not\equiv m$, and the constant
\begin{align*}
\varepsilon_1  & := \frac{1}{\beta} \int \limits_\Omega (m(x) - \alpha P(x))\left(m(x) -  u^{\ast}(x)  \right)  \,dx \\
& =
\frac{1}{\beta} \int \limits_\Omega m(x)   \left(  m(x)  -  u^{\ast}(x)   \right) \,dx
+\frac{\alpha}{\beta} \int \limits_\Omega  P(x)\left(   u^{\ast}(x)  -  m(x) \right) \,dx >0,
\end{align*}
since the first term is positive  by \eqref{eq_est1abc} in   Lemma~\ref{Lsteady1}, while the second is positive due to \eqref{eq_u2_1} in 
Lemma~\ref{Lsteady2}. Substituting $\psi(x)= Q(x)$, we obtain that the supremum is not less than
\begin{align*}
\sigma_1 &  \geq  \int \limits_\Omega    Q(x)  \left(m(x) - E_p(x) -  u^{\ast}(x)   \right) 
\,dx 
\\ & =  \int \limits_\Omega    Q(x)  \left( m(x)-  u^{\ast}(x)   \right) 
\,dx -   \frac{1}{\beta} \int \limits_\Omega   \beta Q(x) E_p(x)  \,dx 
\\
&  =  \frac{1}{\beta} \int \limits_\Omega (m(x) - \alpha P(x))\left(m(x) -  u^{\ast}(x)  \right)  \,dx
 -    \int \limits_\Omega   Q(x) E_p(x)   \,dx    \\
&
 =  \varepsilon_1 -   \int \limits_\Omega   Q(x) E_p(x)   \,dx  >0,
\end{align*}
once $\displaystyle \int_{\Omega} E_p(x) Q(x)~dx < \varepsilon_1$, which concludes the proof of (i). Assuming $\varepsilon_2=\varepsilon_1$, due to normalization of $Q$ in $\Omega$, we get (ii), completing the proof.

\section*{Appendix B: Justification of Theorem \ref{coexist_dev}}


\subsection{Auxiliary results}

First, we state some auxiliary results. Recall that by \eqref{Q1_form}, we assume $Q_1 = Q + d g(x)$, where $$\int_{\Omega} g(x) dx = 0.$$

Let $z_1^*$ be the solution of
\begin{equation}
\label{z_1}
    \begin{cases}
        \displaystyle
        \nabla \cdot (a_2 \nabla z_1^*) + Q_1 z_1^* (m - Q_1 z_1^*) = 0,\quad x\in \Omega, \\
        \displaystyle
        \pdv{z_1^*}{n} = 0, \quad \partial \Omega.
    \end{cases}
\end{equation}
$z_1^*$ is 
connected to $v_1^*$ by the relation $v_1^* = Q_1 z_1^*$. 
When $d = 0$, it is clear that $Q_1 = Q$. We will also need the solution of \eqref{z_1} when $d=0$. 
Let $z^*$ be   the solution of
\begin{equation}
\label{zstar}
    \begin{cases}
        \displaystyle
        \nabla \cdot (a_2 \nabla z^*) + Q z^* (m - Q z^*) = 0,\quad x\in \Omega\\
        \displaystyle
        \pdv{z^*}{n} = 0, \quad \partial \Omega.
    \end{cases}
\end{equation}

\begin{Lemma}
\label{z_star_bound}
Let $m = \alpha P + \beta Q$, $m > 0$, $P,Q>0$, and 
\begin{equation}
\label{cond:lemma1}
    \displaystyle \frac{\sup_{x\in\Omega} \frac{m}{Q}}{\inf_{x\in\Omega} \frac{m}{Q}} <2.
\end{equation}
Then $z^* > m/(2Q)$.
\end{Lemma}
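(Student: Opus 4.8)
The plan is to prove the stronger pointwise bound $z^*\ge \inf_{\bar\Omega}(m/Q)$ and then deduce the stated inequality from \eqref{cond:lemma1}. I would set $\kappa:=\inf_{\bar\Omega}(m/Q)$, which is strictly positive since $m$ and $Q$ are continuous and positive on the compact set $\bar\Omega$, and recall that $z^*>0$ is the positive solution of \eqref{zstar} (of class $C^{2+\alpha}$ by elliptic regularity, since $a_2\in C^{1+\alpha}$ and $m,Q\in C^{2+\alpha}$). Writing $Lz:=\nabla\cdot(a_2\nabla z)$, a uniformly elliptic operator with no zeroth-order term, I argue by contradiction: assume the global minimum $\underline z:=\min_{\bar\Omega}z^*$ satisfies $\underline z<\kappa$ and is attained at some $x_0\in\bar\Omega$.

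The interior case is immediate. If $x_0\in\Omega$, then $Lz^*(x_0)\ge 0$ because $x_0$ is an interior minimum, whereas \eqref{zstar} gives $Lz^*(x_0)=-Q(x_0)z^*(x_0)\big(m(x_0)-Q(x_0)z^*(x_0)\big)$. Since $z^*(x_0)=\underline z<\kappa\le m(x_0)/Q(x_0)$ we have $m(x_0)-Q(x_0)z^*(x_0)>0$, and with $Q,z^*>0$ the right-hand side is strictly negative, a contradiction. In particular, no interior point can attain a value below $\kappa$.

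The boundary case is the main obstacle and requires Hopf's lemma. Suppose $x_0\in\partial\Omega$. By continuity $z^*<m/Q$ on a relative neighborhood $N$ of $x_0$ in $\bar\Omega$, so on $N\cap\Omega$ the reaction term satisfies $Q z^*(m-Qz^*)>0$ and hence $Lz^*<0$ strictly, i.e. $z^*$ is a strict supersolution of $L$ there. By the interior case, the value $\underline z$ is attained at no interior point, so $z^*>\underline z$ throughout $N\cap\Omega$ while $z^*(x_0)=\underline z$. Since $\partial\Omega\in C^{2+\alpha}$ supplies the interior-ball condition, Hopf's lemma (minimum version for the supersolution $z^*$) yields $\partial z^*/\partial n(x_0)<0$, contradicting the homogeneous Neumann condition in \eqref{zstar}. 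To make this step rigorous one rewrites $L$ in nondivergence form $a_2\Delta z+\nabla a_2\cdot\nabla z$ with bounded continuous coefficients, so the classical Hopf lemma applies directly. Both cases being impossible, I conclude $z^*\ge\kappa$ on $\bar\Omega$.

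Finally I invoke the hypothesis \eqref{cond:lemma1}, which is equivalent to $\inf_{\bar\Omega}(m/Q)>\tfrac12\sup_{\bar\Omega}(m/Q)$. Therefore, for every $x\in\Omega$, $z^*(x)\ge\kappa=\inf_{\bar\Omega}(m/Q)>\tfrac12\sup_{\bar\Omega}(m/Q)\ge \frac{m(x)}{2Q(x)}$, which is exactly $z^*>m/(2Q)$ and completes the proof. The only delicate point is the clean application of Hopf's lemma at a boundary minimum; everything else reduces to the elementary pointwise maximum-principle inequality combined with the variation bound \eqref{cond:lemma1}.
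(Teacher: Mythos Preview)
Your proof is correct and reaches the same intermediate bound $z^*\ge\inf_{\bar\Omega}(m/Q)$ as the paper, but by a different route. The paper simply observes that the constant $\gamma:=\inf_{\bar\Omega}(m/Q)$ is a lower solution of \eqref{zstar}: since $\gamma\le m/Q$ everywhere, $Q\gamma(m-Q\gamma)\ge0$, and the diffusion term vanishes on a constant, so $\nabla\cdot(a_2\nabla\gamma)+Q\gamma(m-Q\gamma)\ge0$. Invoking the standard sub/supersolution comparison (citing Pao) then gives $z^*\ge\gamma$ in one line, after which the hypothesis \eqref{cond:lemma1} yields $z^*\ge\gamma>\tfrac12\sup(m/Q)\ge m/(2Q)$ exactly as you do.

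Your argument instead unpacks that comparison by hand via the minimum principle, treating the interior and boundary cases separately and using Hopf's lemma for the latter. This is more self-contained---you do not need to cite the monotone iteration machinery---at the cost of having to handle the boundary-minimum case explicitly. In effect you are reproving the special case of the lower-solution comparison that the paper quotes. Both approaches are equivalent in strength here; the paper's is shorter because it defers the maximum-principle work to the reference, while yours keeps everything in plain sight.
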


\begin{proof}
Let
    \begin{equation*}
        \gamma = \inf_{x\in\Omega} \frac{m}{Q}.
    \end{equation*}
Notice that since \eqref{cond:lemma1} holds, we have
\begin{equation*}
\frac{m}{2Q} \leq \sup_{x\in\Omega} \frac{m}{2Q} < \inf_{x\in\Omega} \frac{m}{Q} = \gamma \leq \frac{m}{Q}.
\end{equation*} 
Let 
\begin{equation*}f(x,y) = Q(x) y (m(x) - Q(x) y).
\end{equation*}  
It is clear that $f(x,y) \geq 0$ for  
$y\in [0,\frac{m}{Q}]$. 
Thus, $f(x,\gamma) \geq 0$.

Let $\hat{z} = \gamma$,
\begin{equation*}
    \begin{aligned}
        &\nabla \cdot (a_2 \nabla \hat{z}) + Q \hat{z} (m - Q \hat{z})\\
        &= \nabla \cdot (a_2 \nabla \gamma) + Q \gamma (m - Q \gamma)\\
        &= Q \gamma (m - Q \gamma)\quad \text{since }\gamma\text{ is constant}\\
        & = f(x,\gamma)\\
        & \geq 0
    \end{aligned}
\end{equation*}
Thus $\hat{z} = \gamma$ is a lower solution of the equilibrium problem \eqref{zstar} \citep{Pao}. Since $\hat{z} = \gamma$ is a lower solution, $z^* \geq \gamma$, and since $\gamma > \frac{m}{2Q}$, we therefore have that $z^* > \frac{m}{2Q}$ as required.
\end{proof}

\begin{Lemma}
\label{expand_z}
    Let $Q_1 $ be in the form \eqref{Q1_form}, and let \eqref{cond:lemma1} hold. Then for small $d>0$, the solution of \eqref{z_1} can be expanded as
    \begin{equation}
    \label{z_expand}
        z_1^* = z^* + f_1(x)d + f_2(x) d^2 + f_3(x)  d^3,
    \end{equation}
    with $f_1(x)$ the solution of 
    \begin{equation}
    \begin{cases}
    \displaystyle
        \nabla \cdot (a_2 \nabla f_1(x)) + m Q f_1(x) - 2 f_1(x) Q^2 z^* + g m z^* - 2 g Q (z^*)^2=0, \quad x\in \Omega, \\
        \displaystyle
        \pdv{f_1(x)}{n} = 0,\quad x \in \partial\Omega,
    \end{cases}
    \end{equation}
    $f_2(x)$ the solution of
    \begin{equation}
    \begin{cases}
    \displaystyle
        \nabla \cdot (a_2 \nabla f_2(x)) + f_1(x) (g m - f_1(x) Q^2 - 4gQz^*) + f_2(x) Q (m - 2Q z^*)- (g z^*)^2 = 0, \quad x\in \Omega, \\
        \displaystyle
        \pdv{f_2(x)}{n} = 0, \quad x\in \partial\Omega,
    \end{cases}
    \end{equation}
    and $f_3$ is a continuous on $\Omega$ function which is bounded in $\bar{\Omega}$.
\end{Lemma}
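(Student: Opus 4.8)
The plan is to regard \eqref{z_1} as a smooth one-parameter family of semilinear elliptic problems in the parameter $d$ and to apply the implicit function theorem, the real work being to verify that the linearisation at $d=0$ is invertible. I would define
\[
F(z,d) := \nabla\cdot(a_2\nabla z) + (Q + dg)\,z\,\bigl(m - (Q + dg)\,z\bigr),
\]
regarded as a map from $C^{2+\alpha}_N(\overline{\Omega})\times\mathbb{R}$ into $C^{\alpha}(\overline{\Omega})$, where $C^{2+\alpha}_N$ denotes the $C^{2+\alpha}$ functions obeying the Neumann condition $\partial z/\partial n = 0$. By \eqref{zstar} we have $F(z^*,0)=0$, and since $F$ is a polynomial in $(z,d)$ composed with fixed linear differential operators, it is jointly analytic. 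Its partial Fr\'echet derivative in $z$ at $(z^*,0)$ is the Neumann elliptic operator
\[
L_0\phi := \nabla\cdot(a_2\nabla\phi) + Q\,(m - 2Qz^*)\,\phi .
\]

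The crucial step is to show that $L_0\colon C^{2+\alpha}_N(\overline{\Omega})\to C^{\alpha}(\overline{\Omega})$ is an isomorphism, which by Schauder estimates and the Fredholm alternative reduces to showing that $0$ is not an eigenvalue of $L_0$. This is where I expect the main obstacle to lie, and the cleanest route uses $z^*$ itself as a test function. Since $L_0$ is self-adjoint under the Neumann conditions, it has real spectrum and a principal (largest) eigenvalue $\sigma_1$ with a positive eigenfunction $\phi_1$. Using \eqref{zstar}, one computes
\[
L_0 z^* = \nabla\cdot(a_2\nabla z^*) + Qz^*(m - Qz^*) - Q^2(z^*)^2 = -Q^2(z^*)^2 < 0,
\]
because the first two terms cancel by \eqref{zstar}. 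Pairing against $\phi_1$ then gives $\sigma_1\int_\Omega z^*\phi_1\,dx = \int_\Omega \phi_1 L_0 z^*\,dx = -\int_\Omega Q^2(z^*)^2\phi_1\,dx < 0$, and since $\int_\Omega z^*\phi_1\,dx > 0$ we conclude $\sigma_1 < 0$. Hence every eigenvalue of $L_0$ is negative, $0\notin\mathrm{spec}(L_0)$, and $L_0$ is invertible. This invertibility — equivalently, the strict linear stability of the single-species equilibrium $z^*$ (whose positivity and lower bound are furnished by \eqref{cond:lemma1} and Lemma~\ref{z_star_bound}) — is the heart of the matter; everything downstream becomes a solvable linear problem because its differential part is exactly $L_0$.

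With $L_0$ invertible and $F$ analytic, the analytic implicit function theorem produces a unique smooth branch $d\mapsto z_1^*(d)$ of solutions of $F(\cdot,d)=0$ near $d=0$ with $z_1^*(0)=z^*$; by uniqueness of the positive steady state of the single-species problem and continuity in $d$, this branch coincides with the solution $z_1^*$ of \eqref{z_1} and remains positive for $|d|$ small. Taylor's theorem applied to the branch yields \eqref{z_expand} with $f_1=\partial_d z_1^*|_{d=0}$, $f_2=\tfrac12\partial_d^2 z_1^*|_{d=0}$, and $f_3(x,d)$ the integral remainder, which inherits continuity on $\overline{\Omega}$ and uniform boundedness from the smoothness of the branch. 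Finally, to identify the coefficient equations I would substitute \eqref{z_expand} together with $Q_1 = Q + dg$ into \eqref{z_1} and collect powers of $d$: the $d^0$ terms reproduce \eqref{zstar}; the $d^1$ terms give $L_0 f_1 = -\bigl(gmz^* - 2gQ(z^*)^2\bigr)$, i.e. the stated $f_1$-equation; and the $d^2$ terms give $L_0 f_2 = -\bigl(f_1(gm - f_1 Q^2 - 4gQz^*) - (gz^*)^2\bigr)$, the stated $f_2$-equation. Each of these is uniquely solvable in $C^{2+\alpha}_N$ since its forcing lies in $C^{\alpha}$ and its differential operator is precisely the invertible $L_0$.
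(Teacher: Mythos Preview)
Your proposal is correct and follows the same overall scaffold as the paper --- set up $F(z,d)$, apply the implicit function theorem at $(z^*,0)$, then substitute the expansion and match powers of $d$ --- but your treatment of the central invertibility step is genuinely different and more careful. The paper computes $\partial F/\partial z_1^*$ pointwise (writing it as $\nabla\cdot(a_2\nabla 1) + m(Q+dg) - 2z_1^*(Q+dg)^2$) and concludes invertibility from the fact that this quantity equals $Q(m-2Qz^*)<0$ at $(0,z^*)$, which requires Lemma~\ref{z_star_bound} and hence hypothesis~\eqref{cond:lemma1}. You instead recognise the Fr\'echet derivative as the Neumann operator $L_0$ and establish $\sigma_1(L_0)<0$ by the test-function identity $L_0 z^*=-Q^2(z^*)^2<0$ paired against the principal eigenfunction. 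Your route is more rigorous (it works at the level of operators between H\"older spaces, not scalars) and in fact does not rely on~\eqref{cond:lemma1} at all --- invertibility of $L_0$ is simply the linear stability of the logistic steady state, which holds without any ratio restriction. The paper's route, once interpreted correctly (a strictly negative zeroth-order coefficient forces all Neumann eigenvalues negative), is slightly more elementary but leans on an assumption your argument shows to be unnecessary for this lemma.
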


\begin{proof}
    Let 
    \begin{equation}
        F (d, z_1^*) = \nabla \cdot(a_2 \nabla z_1^*) + (Q+d g(x))z_1^*(m - (Q+d g(x))z_1^*),\quad x\in \Omega,
    \end{equation}
    where by \eqref{z_1} it is clear that $F(d,z_1^*) = 0$. We also know that $(0,z^*)$ is a point on the curve $F$. To be able to expand $z_1^*$ as a function of $d$, i.e., $z_1^* = z_1^*(d)$, we will apply the implicit function theorem. It is trivial to see that $F$ is continuously differentiable with respect to both $d, z_1^*$. Further,
    \begin{equation*}
        \displaystyle \pdv{F}{z_1^*} = \nabla\cdot(a_2 \nabla 1) + m (Q + d g) - 2 z_1^* (Q + d g)^2,
    \end{equation*}
    and hence
    \begin{equation*}
        \displaystyle \pdv{F}{z_1^*} \bigg|_{z_1^* = z^*, d = 0} = Q(m - 2 z^* Q) < 0,
    \end{equation*}
    where the last inequality is true since $z^* > m/2Q$ by Lemma \ref{z_star_bound}. Thus the Implicit Function Theorem applies, and we can state that $\exists h>0$ such that if $|d|<h$ then there exists a $z_1^* = z_1^*(d)$ such that $F(d, z_1^*(d)) = 0$. Further we can describe $z_1^*$ by expressing it in the following expansion form $z_1^*(d) = z^* + f_1(x) d +  f_2(x) d +  f_3(x) d^3$. To find $f_1, f_2, f_3$ we can substitute   $z_1^*(d) = z^* + f_1(x) d +  f_2(x) d +  f_3(x) d^3$ into $F(d, z_1^*(d)) = 0$ and obtain
    \begin{equation*}
        \begin{aligned}
            \nabla \cdot (a_2 \nabla (z^* + f_1(x) d +  f_2(x) d^2 +  f_3(x) d^3)& \\
            + (Q + d g)(z^* +  f_1(x) d + f_2(x) d^2 &\\ +  f_3(x) d^3) (m - (Q + d g)(z^* +  f_1(x) d +  f_2(x) d^2 +  f_3(x) d^3))& = 0.
        \end{aligned} 
    \end{equation*}
    We can then arrange the terms according to the powers of $d$,
    \begin{equation*}
        \begin{aligned}
            \nabla \cdot (a_2 \nabla z^*) + Q z^* (m - Q z^*) = 0\\
            d\bigg[ (a_2 \nabla f_1(x)) + m Q f_1(x) - 2 f_1(x) Q^2 z^* + g m z^* - 2 g Q (z^*)^2\bigg] = 0\\
            d^2 \bigg[ \nabla \cdot (a_2 \nabla f_2(x)) + f_1(x) (g m - f_1(x) Q^2 - 4gQz^*) + f_2(x) Q (m - 2Q z^*)- (g z^*)^2 = 0 \bigg] = 0
        \end{aligned}
    \end{equation*}
    and $f_3(\cdot ;d)$ is some function which is uniformly bounded for small $d$. The result follows immediately.
\end{proof}

\begin{Lemma}
Suppose that $Q_1$ is as in   \eqref{Q1_form}, \eqref{normal} and \eqref{normal_add} hold, $m_1\equiv m_2 \equiv m$, $\displaystyle \nabla \cdot \left[ a_2(x)  \nabla \left( \frac{\displaystyle m(x)}{\displaystyle Q(x)}\right) \right] \not\equiv 0$ on $\Omega$, 
and $m(x)\equiv \alpha P+\beta Q$ for some $\alpha,\beta>0$. Additionally assume that
\begin{equation}
\label{cond:lemma}
    \displaystyle \frac{\sup_{x\in\Omega} \frac{m}{Q}}{\inf_{x\in\Omega} \frac{m}{Q}} <2.
\end{equation}
Then $\forall \varepsilon > 0$, $\exists \delta >0$ such that if $|d|<\delta$, then $|v^* - v_1^*| < \varepsilon$.
\end{Lemma}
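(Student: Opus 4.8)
The plan is to reduce the convergence $v_1^\ast \to v^\ast$ to the expansion already obtained in Lemma~\ref{expand_z}, by passing through the substitution $z = v/Q$ that linearizes the dependence on the dispersal strategy. First I would record that, since $v^\ast$ and $v_1^\ast$ solve \eqref{eq_v} with $Q$ and $Q_1 = Q + d g$ respectively (using $m_1\equiv m_2\equiv m$), the rescaled functions $z^\ast = v^\ast/Q$ and $z_1^\ast = v_1^\ast/Q_1$ solve exactly \eqref{zstar} and \eqref{z_1}. Consequently $v^\ast = Q z^\ast$ and $v_1^\ast = Q_1 z_1^\ast = (Q + d g) z_1^\ast$, which turns the claim into a statement about the difference of these two products.

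Next, since the hypothesis \eqref{cond:lemma} coincides with \eqref{cond:lemma1}, Lemma~\ref{expand_z} applies and yields, for all sufficiently small $|d|$ (say $|d| < h$), the expansion \eqref{z_expand}, namely $z_1^\ast = z^\ast + f_1(x) d + f_2(x) d^2 + f_3(x) d^3$, where $f_1, f_2$ solve linear elliptic Neumann problems and $f_3 = f_3(\cdot;d)$ is uniformly bounded on $\bar{\Omega}$ for small $d$. Substituting this into $v_1^\ast = (Q + d g) z_1^\ast$ and subtracting $v^\ast = Q z^\ast$ gives, after collecting powers of $d$,
\begin{equation*}
v_1^\ast - v^\ast = d\,(Q f_1 + g z^\ast) + d^2(\cdots) + d^3(\cdots) + d^4(\cdots),
\end{equation*}
where every term on the right carries at least one factor of $d$. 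Hence $v_1^\ast - v^\ast = d\,R(x,d)$, with $R$ a polynomial in $d$ whose coefficients are built from $Q$, $g$, $z^\ast$, $f_1$, $f_2$, $f_3$.

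It then remains to bound $R$ uniformly. Here $Q, g \in C(\bar{\Omega})$ are bounded by assumption; $z^\ast$ is bounded since it is a positive solution of \eqref{zstar} (the lower bound $z^\ast > m/(2Q)$ from Lemma~\ref{z_star_bound} together with a super-solution/maximum-principle upper bound); $f_1, f_2$ are bounded by elliptic regularity applied to their defining linear Neumann problems with continuous coefficients; and $f_3$ is uniformly bounded for $|d|<h$ by Lemma~\ref{expand_z}. Therefore there is a constant $C > 0$, independent of $d$, with $\|v_1^\ast - v^\ast\|_{\infty} \leq C|d|$ for all $|d| < h$. Given $\varepsilon > 0$, choosing $\delta = \min\{h,\, \varepsilon/C\}$ yields $|v^\ast(x) - v_1^\ast(x)| < \varepsilon$ for every $x \in \Omega$ whenever $|d| < \delta$, which is the assertion.

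The main obstacle is not the algebra but securing a bound on $R$ that is genuinely uniform in $d$: one must verify that the remainder $f_3(\cdot;d)$ stays bounded as $d$ ranges over a neighborhood of $0$ (exactly the uniform-boundedness clause supplied by Lemma~\ref{expand_z}), and that the Implicit-Function-Theorem expansion underlying \eqref{z_expand} is valid for both signs of $d$, not merely $d>0$. Once these uniform bounds are in hand, the conclusion is a direct $\varepsilon$--$\delta$ consequence of the linear-in-$d$ leading behaviour of $v_1^\ast - v^\ast$.
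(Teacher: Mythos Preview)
Your proposal is correct and follows essentially the same route as the paper: pass to $z^\ast = v^\ast/Q$ and $z_1^\ast = v_1^\ast/Q_1$, invoke the expansion of Lemma~\ref{expand_z}, write $v_1^\ast - v^\ast = (Q+dg)z_1^\ast - Qz^\ast$ as $d$ times a uniformly bounded quantity, and conclude by an $\varepsilon$--$\delta$ choice. Your explicit attention to the uniform boundedness of $f_3(\cdot;d)$ and to the validity of the expansion for both signs of $d$ is, if anything, a bit more careful than the paper's own argument.
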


\begin{proof}
By Lemma \ref{expand_z}, for small $d$, $z_1^* = v_1^*/Q_1$ can be expanded as $z_1^* = z^*+ f_1(x) d +  f_2(x) d^2+  f_3(x)d^3$. Because $z_1^*,z^*$ are continuous functions, and $f_1,f_2$ are also continuous, then $f_3$ must also be continuous. Thus, all the functions are bounded on the closed domain $\bar{\Omega}$ for small $d$.

We have
\begin{equation*}
\begin{aligned}
|v^* - v_1^*| &= |Q z^* - Q_1 z_1^*|\\
&= |Q z^* - (Q + d g)(z^* + d f_1 + d^2 f_2 + d^3 f_3)|\\
&= |(-1)(d (g z^* + f_1 Q) + d^2 (f_1 g + f_2 Q) + d^3 (f_2 g + f_3 Q) + d^4 (f_3 g))|\\
&\leq \bigg| d \max_{x\in\Omega} (g z^* + f_1 Q)\bigg| + \bigg| d^2 \max_{x\in\Omega} (f_1 g + f_2 Q)\bigg| + \bigg| d^3 \max_{x\in\Omega} (f_2 g + f_3 Q)\bigg| + \bigg| d^4 \max_{x\in\Omega} (f_3 g) \bigg|\\
&\leq \bigg| d \max_{x\in\Omega} (g z^* + f_1 Q)\bigg| + \bigg| d \max_{x\in\Omega} (f_1 g + f_2 Q)\bigg| + \bigg| d \max_{x\in\Omega} (f_2 g + f_3 Q)\bigg| + \bigg| d \max_{x\in\Omega} (f_3 g) \bigg|\\
&= |d| \bigg[ \max_{x\in\Omega} |g z^* + f_1 Q| + \max_{x\in\Omega} |f_1 g + f_2 Q| + \max_{x\in\Omega} |f_2 g + f_3 Q| +  \max_{x\in\Omega} |f_3 g| \bigg]
< \varepsilon
\end{aligned}
\end{equation*}
when
\begin{equation*}
\displaystyle \delta = \displaystyle \frac{\varepsilon}{\displaystyle \bigg[ \max_{x\in\Omega} |g z^* + f_1 Q| + \max_{x\in\Omega} |f_1 g + f_2 Q| + \max_{x\in\Omega} |f_2 g + f_3 Q| +  \max_{x\in\Omega} |f_3 g| \bigg]}.
\end{equation*}
\end{proof}

\subsection{Proof of Theorem \ref{coexist_dev}}

\begin{proof}{(Theorem \ref{coexist_dev})}
Recall that \eqref{system2} is a monotone dynamical system. By Lemma \ref{zero_equilibrium}, $(0,0)$ is an unstable repelling equilibrium of \eqref{system2}. For $Q_1 = Q + d g(x)$,
\begin{equation*}
\int_{\Omega} |Q - Q_1| dx = \int_{\Omega} |d g(x)| dx = |d| \int_{\Omega} |g(x)|dx. 
\end{equation*}
Now by Lemma \ref{semi_perturbation}, if $|d| \int_{\Omega} |g(x)| dx < \varepsilon_d$ then $(u^*, 0)$ is unstable. Thus if 
\begin{equation*}
|d| < \frac{\varepsilon_d}{\int_{\Omega} |g(x)| dx}
\end{equation*}
then $(u^*,0)$ is unstable.

By Lemma \ref{lemma_new_deviation} there exists an $\varepsilon$, say $\varepsilon_v$ such that if $|v^* - v_1^*| < \varepsilon_v$ then the equilibrium $(0,v_1^*)$  (where $v_1^*$ is the solution of \eqref{eq_v} with $Q_1$ substituted for $Q$) is unstable.

Choose $d$ such that
\begin{equation*}
|d| < \displaystyle \min \displaystyle \left[ \frac{\varepsilon_d}{\int_{\Omega}|g(x)|dx}, \frac{\varepsilon_v}{\displaystyle \bigg[ \max_{x\in\Omega} |g z^* + f_1 Q| + \max_{x\in\Omega} |f_1 g + f_2 Q| + \max_{x\in\Omega} |f_2 g + f_3 Q| +  \max_{x\in\Omega} |f_3 g| \bigg]} \right].
\end{equation*}
Since $(0,0)$ is a repeller, $(u^*,0), (0,v_1^*)$ are unstable, a coexistence equilibrium exists and is a global attractor.
\end{proof}

\end{document}